\documentclass[a4paper,UKenglish,cleveref, autoref, thm-restate,arxiv]{lipics-v2021}
\pdfoutput=1 
\hideLIPIcs

\nolinenumbers

\title{Structural Parameterizations for Eternal Vertex Cover} 

\Copyright{Jane Open Access and Joan R. Public} 

\ccsdesc[500]{Mathematics of computing~Graph algorithms} 

\keywords{eternal vertex cover, parameterized complexity, deletion to small components, cluster vertex deletion} 

\category{} 

\relatedversion{} 

\acknowledgements{}
\author{Neeldhara Misra}{Indian Institute of Technology, Gandhinagar, India \and \url{https://www.neeldhara.com/}}{}{https://orcid.org/0000-0003-1727-5388}{}
\author{Sebastian Ordyniak}{School of Computer Science, University of Leeds, Leeds, UK \and \url{https://algorithms.leeds.ac.uk/profiles/447-2/}}{}{https://orcid.org/0000-0003-1935-651X}{}
\author{Giacomo Paesani}{Sapienza University of Rome, Rome, Italy \and \url{https://giacomopaesani.github.io/}}{}{https://orcid.org/0000-0002-2383-1339}{}
\author{Mateusz Rychlicki}{School of Computer Science, University of Leeds, Leeds, UK}{}{https://orcid.org/0000-0002-8318-2588}{}

\authorrunning{Misra, Ordyniak, Paesani, and Rychlicki}

\usepackage{amsthm}
\usepackage{boxedminipage,graphicx}
\usepackage{charter,parskip,mathpazo}
\makeatletter
\def\thm@space@setup{%
  \thm@preskip=\parskip \thm@postskip=0pt
}
\makeatother
\usepackage[svgnames]{xcolor}

\usepackage{nicefrac}
\usepackage{tikz}
\usetikzlibrary{calc}
\usetikzlibrary{positioning}
\usetikzlibrary{arrows}
\usetikzlibrary{snakes}
\usetikzlibrary{decorations.pathmorphing}
\usetikzlibrary{decorations.markings}
\usetikzlibrary{decorations.pathreplacing}
\usetikzlibrary{shapes.geometric}
\usetikzlibrary{fit,backgrounds}
\tikzset{
    small circles/.style={circle,inner sep=2pt,fill=#1},
    hollow circles/.style n args={2}{circle,inner sep=#1,draw=#2,thick},
    stars/.style={star,inner sep=2pt}
}

\newcommand{\confT}{\ensuremath{{\color{LightSlateBlue}\mathbf{T}}}}
\newcommand{\confR}{\ensuremath{{\color{SeaGreen}\mathbf{R}}}}
\newcommand{\confC}{\ensuremath{{\color{SeaGreen}\mathbf{C}}}}

\newcommand{\confS}{\ensuremath{{\color{Orchid}\mathbf{S}}}}

\newcommand{\EVC}{\textsc{Eternal Vertex Cover}}

\newif\iflong
\newif\ifshort

\longtrue

\iflong
\else
\shorttrue
\fi

\usepackage[noend]{algpseudocode}
\usepackage{algorithm,algorithmicx}

\algnewcommand\algorithmicinput{\textbf{Input:}}
\algnewcommand\INPUT{\item[\algorithmicinput]}

\algnewcommand\algorithmicoutput{\textbf{Output:}}
\algnewcommand\OUTPUT{\item[\algorithmicoutput]}

\begin{document}

\maketitle

\begin{abstract}
\textsc{Eternal Vertex Cover} (EVC) is a turn-based attacker–defender game on an undirected graph $G$. To begin with, the defender places $k$ guards on vertices of $G$. The attacker, on their turn, can choose an edge $e$ not already occupied at both endpoints to ``attack''. The edge $e$ is defended if a guard moves along the edge $e$. The defender, on their turn, can move any subset of guards. A guard can only move to a neighboring vertex. The minimum number of guards needed to indefinitely defend against any sequence of attacks is called the eternal vertex cover number, generalizing the classic vertex cover number. Determining this number is \textsf{NP}-hard in general, motivating the study of parameterized and approximation algorithms. The problem is known to be \textsf{FPT} when parameterized by the cover number, but structural parameters remain relatively unexplored in the literature.

In this work, we explore structural parameterizations for EVC. We show that EVC is \textsf{FPT} parameterized by the cluster vertex deletion number, which generalizes the previously studied parameterization by vertex cover number. We next study the problem parameterized by vertex integrity, which is the smallest number of vertices we need to delete from $G$ so that the resulting graph is a disjoint union of constant-sized components. We first show that Eternal Vertex Cover is \textsf{XP} parameterized by vertex integrity. Then, we develop a polynomial-time approximation algorithm, which computes an additive $6k+1$ ($g(k)$) approximation, where $k$ is equal to the cluster vertex deletion number (vertex integrity). Finally, we show a \textsf{FPT} algorithm for when the deletion set produces ``nice'' connected components, which are components that are bounded in size and satisfy a technical condition.
\end{abstract}

\newpage

\section{Introduction}

The definition of the \emph{eternal vertex cover number} of an
undirected graph $G$ is based on the following $2$-player game between
the \emph{attacker} and the \emph{defender} on $G$. At the start of
the game, the defender places one guard on every vertex of an
arbitrary subset of $V(G)$. Afterwards, the game proceeds in rounds
and in each round the attacker chooses an edge $e$ of $G$ whose endpoints are not both occupied to attack and
the defender can move every guard along an edge such that afterwards
no vertex is occupied by two guards. The attack on $e$ is
\emph{defended} and the current round is \emph{won by the defender} if
the defender moves at least one guard along $e$; otherwise the current
round is \emph{won by the attacker}. The defender wins a (possibly
infinite) play, i.e., sequence of rounds, of the game if he
wins every round of the game and otherwise the attacker wins the
play. Moreover, the minimum number of guards required by the defender
to win the game against every possible play of the attacker is known as the eternal vertex cover number of the graph $G$, denoted $evc(G)$.

\begin{figure}[ht]
\centering
\tikzset{
wnode/.style={shape=circle,draw=black,fill=white,inner sep=0pt,minimum size=4pt},
bnode/.style={shape=circle,draw=black,fill=black,inner sep=0pt,minimum size=4pt}}
\resizebox{\linewidth}{!}{%
\begin{tikzpicture}
\node[bnode] (A1) {};
\node[bnode, right=of A1] (A2) {};
\node[bnode, right=of A2] (A3) {};
\node[bnode, right=of A3] (A4) {};
\node[wnode, right=of A4] (A5) {};
\node[bnode, right=of A5] (A6) {};
\node[bnode, below=of A1] (B1) {};
\node[wnode, right=of B1] (B2) {};
\node[wnode, right=of B2] (B3) {};
\node[bnode, right=of B3] (B4) {};
\node[bnode, right=of B4] (B5) {};
\node[wnode, right=of B5] (B6) {};
\draw(B2)--(A2)--(A1)--(B2)--(B1)--(A1)(B1)--(A2)--(B3)--(A3)--(A4)--(B4)--(B5)--(A4)(A5)--(A6)--(B5);
\draw[thick,color=red,->] (A6)--(B6);
\draw[thick,->] (A4)--(A5);
\begin{scope}[xshift=8cm]
\node[bnode] (A1) {};
\node[bnode, right=of A1] (A2) {};
\node[bnode, right=of A2] (A3) {};
\node[wnode, right=of A3] (A4) {};
\node[bnode, right=of A4] (A5) {};
\node[wnode, right=of A5] (A6) {};
\node[bnode, below=of A1] (B1) {};
\node[wnode, right=of B1] (B2) {};
\node[wnode, right=of B2] (B3) {};
\node[bnode, right=of B3] (B4) {};
\node[bnode, right=of B4] (B5) {};
\node[bnode, right=of B5] (B6) {};
\draw(B2)--(A2)--(A1)--(B2)--(B1)--(A1)(B1)--(A2)--(B3)--(A3)--(A4)--(B4)--(B5)--(A4)--(A5)--(A6)--(B5)(A6)--(B6);
\end{scope}
\end{tikzpicture}}
\caption{Black vertices have a guard each, while white vertices have none. The attacker attacks the left configuration on the red edge, and the defender moves guards in the direction of the arrows. The result is the right configuration.}
\label{fig:intro}
\end{figure}

If $S_\ell$ is the subset of vertices that have guards on them after the defender has played her $\ell$-th move, and $S_\ell$ is not a vertex cover of $G$, then the attacker can target any of the uncovered edges to win the game. Therefore, when the defender has a winning strategy, it implies that it can always ``reconfigure'' one vertex cover into another in response to any attack, where the reconfiguration is constrained by the rules of how the guards can move and the requirement that at least one of these guards needs to move along the attacked edge. Therefore, it is clear that $evc(G)\geqslant vc(G)$, where $vc(G)$ denotes the vertex cover number of $G$. It also turns out that twice as many vertices as the $vc(G)$ also suffice to defend against any sequence of attacks. This might be achieved, for example, by placing guards on both endpoints of any maximum matching to begin with and after any attack, reconfiguring the guards to obtain another maximum matching. This also implies $vc(G)\leqslant evc(G) \leqslant 2vc(G)$. A characterization of the graphs for which the upper bound is achieved is known~\cite{KM09}. A characterization for graphs for which the lower bound is achieved remains open, but several special cases have been addressed in the literature~\cite{BCFPRW22}.

The NP-hardness of computing the value of $evc(G)$ was established in~\cite{FGGKS10}, by a reduction from \textsc{Vertex Cover}. Using the ideas in the previous paragraph, the authors obtain a $2$-approximation algorithm for EVC, and also prove that EVC is \textbf{fixed-parameter tractable} (\textsf{FPT}) when parameterized by the number of guards, \(k\). This is achieved by providing a kernelization algorithm
which produces a kernel
whose size is bounded by a function of \(k\).


\newcommand{\cvd}{\textsf{cvd}}
\newcommand{\scd}{\textsf{scd}}

\newcommand{\CVD}{\textsf{CVD}}
\newcommand{\SCD}{\textsf{SCD}}

While the eternal vertex cover number (and analogously dynamic
variants of domination set and related problems) are well studied in
the
literature~\cite{AFI15,BCFPRW22,BKPW22,BMN22,BP22,CC22,CCP25,FGGKS10,KM09,KMC16,MN22,MN23,MN25,PP24}
and
~\cite{BDEMY17,BSL15,BCGMVW04,CCMNP25,FMV15,FGMO18,FMV20,Gal20,GPRT11,KM11,KM12a,KLM16,KM09a,KM12b,LMS19,MINP19,Me17,RS19},
our understanding of these notions still remains in its infancy from
the perspective of structural parameterization and (parameterized)
complexity in general. Indeed, the only known structural parameter for
which the parameterized complexity of eternal vertex cover has been
established thus far is the vertex cover number, where it was shown to
be fixed-parameter tractable~\cite{FGGKS10}. However, attempts to
characterise the parameterized complexity of eternal vertex cover for
treewidth have thus far been far from successful. In particular,
while~\cite{CCP25} provides a linear-time algorithm for a
strongly restricted subclass of series parallel graphs, i.e., graphs of treewidth at most two, it fails to resolve even the full case of series parallel graphs and conjectures that eternal vertex cover will be intractable already on series parallel graphs. This nicely illustrates the challenges with analyzing the parameterized complexity of eternal vertex cover for fairly standard parameterizations. Note also that while eternal vertex cover is known to be NP-hard and in PSPACE~\cite{FGGKS10}, its exact complexity remains open.

In this paper, we consider the parameterized complexity of EVC with
respect to structural parameters beyond vertex cover number, namely,
\emph{cluster vertex deletion number (CVD)}~\cite{DK12} and \emph{small components
  vertex deletion number (SCD)} (a.k.a. vertex integrity, deletion
number to small components, or fracture number)~\cite{BBGLP92}. Given an undirected graph $G$, the CVD of
$G$, denoted by $\cvd(G)$,
is equal to the minimum size of a vertex deletion set $D
\subseteq V(G)$ such that $G-D$ is a disjoint union of
cliques. Moreover, the SCD of $G$, denoted by $\scd(G)$ is equal to the
minimum of $p+|D|$ such that $D$ is a vertex deletion set of $G$ to
components in $G-D$ of size at most $p$. Note that CVD is often
described as a natural
parameter between vertex cover and clique-width and SCD can be seen as a natural parameter
between vertex cover and treedepth. Moreover, both parameters have
been employed as a
structural parameter for a wide variety of problems (e.g.,~\cite{BBBR23,DEGKO21,FGHSO23,GHOORRVS17,GKKMV23,KOP15}).

We show the following results:
\begin{itemize}
\item[(1)] EVC is fixed-parameter tractable parameterized by CVD (\Cref{the:evc-fpt-cvd}).
\item[(2)] EVC is in XP parameterized by SCD (\Cref{the:evc-xp-scd}).
\item[(3)] EVC is fixed-parameter tractable parameterized by a
  restricted version of SCD (\Cref{the:evc-fpt-scd}), where every component $C$ in
  the graph $G$ minus the vertex deletion set is
  additionally required to be what we call self-sufficient. That is,
  we say that a (partial) configuration (position) $P_C \subseteq
  V(C)$ for $C$ is \emph{extendable} if there is a winning position
  $P$ for the defender on $G$ such that $V(C)\cap P=P_C$. Moreover, we
  say that $P_C$ is \emph{minimum} if it has minimum size compared to all
  extendable positions of $C$. Then, $C$ is
  \emph{self-sufficient} if the defender can move in one round between
  any two minimum extendable positions of $C$ without exchanging any
  guard with the deletion set provided that there is no attack on $C$.

  Note that this condition applies to, e.g., cliques, cycles, and generally all
  graphs, where the vertex cover number equals the
  eternal vertex cover number, which have, e.g., been studied in~\cite{MN25}. It is also
  important to note that it is not at all clear at first glance
  whether there are graphs where the defender requires more than one round
  to go between optimal winning configurations.
  As one of our
  contributions, we show that such graphs do indeed exist, and using a computer-based search, we
  were able to show that there is only one such example, which is
  illustrated in~\Cref{fig:not-clique}, among all
  graphs with at most $9$ vertices.
\item[(4)] We provide a lower bound and an upper bound for the EVC
  number of any graph with CVD (or SCD) number
  at most $k$ that differ by at most $3k+1$ (or a function $g(k)$). These results are not only a
  crucial ingredient to obtain (1) and (3), but are also highly
  non-trivial for the case of SCD. Moreover, using the known $2$-approximation for
  CVD~\cite{ADFH23} and the known $(k+1)$-approximation for SCD~\cite{DEGKO21},
  the results additionally allow us to obtain polynomial-time approximation algorithms for EVC with an
  additive error of $6\cvd(G)+1$ and of $g(\scd(G)(\scd(G)+1))$, respectively.
\end{itemize}
Our algorithmic results have the following main ingredients. First we
use the restrictions on the instance posed by the parameters to show
that the number of distinct component types is bounded by a function
of the parameter. We then further show that the number of
non-equivalent configurations in the game for each component is also bounded by a
function of the parameter. These two ingredients are fairly
standard techniques (for the considered parameters) and are also already sufficient to show that EVC is in XP
parameterized by either CVD or SCD. In particular, we obtain a compact
representation of all non-equivalent configurations in the game on $G$
in terms of a \emph{configuration vector}
with $d(k)$ dimensions, for some function $d$ of the parameter $k$,
where each dimension provides the number of components of a certain type
that are in a certain configuration; plus one extra dimension that
provides the current configuration of the deletion set. We then
build the configuration graph of all non-equivalent
configurations and then solve the instance with the help of known methods using the configuration
graph.

Obtaining fixed-parameter tractability is, however, much more
challenging. In particular, to achieve this, it is required to further
restrict the number of configuration vectors that have to be
considered. As a first step in this direction we show that the
difference between the lower bound and the upper bound for EVC can be
bounded in terms of some function $\Delta(k)$ of the considered parameters. This
is fairly easy for the case of CVD, where the function is given by $\Delta(k)=k+1$, but becomes surprisingly involved
for the case of SCD, where we develop new methods to obtain lower
bounds and upper bounds for EVC. Establishing such a bound $\Delta(k)$, then
allows us to only
consider configurations, where all but at most $\Delta(k)$ components
are in a \emph{minimum configuration}, which informally refers to a
configuration using a minimum amount of guards in order to defend any
sequence of attacks on edges incident to the component under the additional
assumption that it can be ensured (using, e.g. other components) that
all vertices in the deletion set are and can remain occupied by guards
throughout the entire game. This allows us to focus on the behaviour
of minimum configurations and here we show that if a component type is
self-sufficient, then it suffices to consider only configurations (for
the whole graph $G$) such that at most one minimum configuration of
the component type occurs more than some function $M(k)$ of the
parameter many times. Therefore, if all component types are
self-sufficient, then it is sufficient to consider only configuration
vectors such that:
\begin{itemize}
\item at most $\Delta(k)$ dimensions correspond to non-minimum
  configurations and
\item for every component type there is only one (minimum) configuration that
  can occur arbitrarily often and all other minimum configurations occur
  at most $M(k)$ times.
\end{itemize}
From this, we can bound the number of important configuration vectors
as follows. There are at most $(d(k)+1)^{\Delta(k)}$ choices of
dimensions corresponding to non-minimum configurations and since each
of those dimensions has value at most $\Delta(k)$, we obtain
$(d(k)+1)^{\Delta(k)}(\Delta(k))^{\Delta(k)}$ possibilities to choose
and assign those dimensions. Moreover, for every component type there
are at most $d(k)$ possibilities to choose the unbounded dimension and
at most $(M(k)+1)^{d(k)}$ possibilities to choose the value for the
remaining dimensions. Finally, there are at most $2^k$ possible
configurations of the deletion set and therefore the total number of
important configurations is bounded by a function of $k$ (i.e.,
$(d(k)+1)^{\Delta(k)}(\Delta(k))^{\Delta(k)}d(k)(M(k)+1)^{d(k)}2^k$).

After introducing the required preliminaries in \Cref{sec:prelim}, in~\Cref{sec:cvd} we show that EVC parameterized by \cvd{} is fixed-parameter tractable (\Cref{the:evc-fpt-cvd}) by reducing to the problem parameterized by \scd{}. In \Cref{sec:scd} we first show that EVC is in \textsf{XP} parameterized by \scd{} (\Cref{the:evc-xp-scd}) and then establish the \textsf{FPT} algorithm for EVC parameterized by \scd{} for self-sufficient instances (\Cref{the:evc-fpt-scd}).
\begin{figure}
\centering
\begin{minipage}{0.5\textwidth}
\centering
\begin{tikzpicture}[scale=1]
\coordinate (A1) at (1,-0.5);
\coordinate (A2) at (1,0.5);
\coordinate (A3) at (0,1);
\coordinate (A4) at (-1,0.5);
\coordinate (A5) at (-1,-0.5);
\coordinate (A6) at (0,-1);
\coordinate (B1) at (-2,-0.5);
\coordinate (B2) at (2,0.5);
\coordinate (C) at (0,0);
\draw \foreach \i in {1,2,4,5}{(A\i)--(C)}
(A5)--(A6)--(A1)--(A2)--(A3)--(A4)--(A5)--(B1)(A2)--(B2);
\draw[fill=white] \foreach \i in {3,6}{(A\i) circle[radius=2pt]}(B2) circle[radius=2pt](C) circle[radius=2pt];
\draw[fill=black] \foreach \i in {1,2,4,5}{(A\i) circle[radius=2pt]}(B1) circle[radius=2pt];
\end{tikzpicture}
\end{minipage}%
\begin{minipage}{0.5\textwidth}
\centering
\begin{tikzpicture}[scale=1]
\coordinate (A1) at (1,-0.5);
\coordinate (A2) at (1,0.5);
\coordinate (A3) at (0,1);
\coordinate (A4) at (-1,0.5);
\coordinate (A5) at (-1,-0.5);
\coordinate (A6) at (0,-1);
\coordinate (B1) at (-2,-0.5);
\coordinate (B2) at (2,0.5);
\coordinate (C) at (0,0);
\draw \foreach \i in {1,2,4,5}{(A\i)--(C)}(A5)--(A6)--(A1)--(A2)--(A3)--(A4)--(A5)--(B1)(A2)--(B2);
\draw[fill=white] \foreach \i in {3,6}{(A\i) circle[radius=2pt]}(B1) circle[radius=2pt](C) circle[radius=2pt];
\draw[fill=black] \foreach \i in {1,2,4,5}{(A\i) circle[radius=2pt]}(B2) circle[radius=2pt];
\end{tikzpicture}
\end{minipage}%
\caption{The black vertices highlight two configurations of a (minimum) eternal vertex cover of a graph $G$ that are not adjacent in the configuration graph $\mathfrak{C}_{evc(G)}(G)$.}
\label{fig:not-clique}
\end{figure}



\section{Preliminaries}\label{sec:prelim}

    Let $G=(V,E)$ be an undirected graph.  An \emph{eternal vertex cover} of $G$ with $k$ guards is a placement of $k$ guards on the vertices of $G$ together with a strategy to respond to an infinite sequence of edge-attacks, under the following rules. In each round, an edge $uv \in E$ whose endpoints are not both occupied is attacked. Each guard may either remain on its current vertex or move to an adjacent vertex, with the requirement that at least one guard moves along the attacked edge (from $u$ to $v$ or vice versa).
    After the move, the guards again occupy $k$ vertices. This new set of occupied vertices must form a vertex cover of $G$, so that every edge of $G$ is still incident to at least one guard. If $G$ has an eternal vertex cover with $k$ guards, we say that \emph{$k$ guards can eternally protect $G$}. The \emph{eternal vertex cover number} of $G$, denoted $evc(G)$, is the minimum $k$ for which $G$ can be eternally protected by $k$ guards.
 %

  %

    \paragraph*{Configuration Graph}
    For a graph $G = (V,E)$, a configuration of order $k$ is a placement of $k$ guards on $V$, with the constraint that every vertex has at most one guard. Configurations of order $k$ are therefore simply $k$-sized subsets of $V$, and will be referred to as such. We will typically refer to guards by the vertices they occupy as opposed to having an explicit identity.

    Given a configuration $\confT$, let $\mathrm{Att}_G(\confT)$ denote the set of edges whose endpoints are not both occupied in $\confT$: $\mathrm{Att}_G(\confT)=\{uv\in E(G)\mid \{u,v\}\nsubseteq \confT\}$. These are precisely the edges that the attacker may choose from $\confT$. Given two configurations $\confT$ and $\confR$ and an edge $e \in E$, we say that $\confT$ is $e$-compatible with $\confR$ if, starting from configuration $\confT$, the guards can either remain on the current vertex or move along edges, every guard moves at most once, some guard moves across the edge $e$, and the resulting configuration is $\confR$. Since configurations record occupied vertices rather than guard identities, cyclic relabellings of guards that leave the occupied set unchanged are not treated as separate defenses. Note that this relationship is symmetric, so $\confT$ is $e$-compatible with $\confR$ if and only if $\confR$ is $e$-compatible with $\confT$. Thus we will speak of $\confT$ and $\confR$ being $e$-compatible. Deciding whether $\confT$ and $\confR$ are $e$-compatible can be done in $n^{\mathcal{O}(1)}$ time \cite{FGGKS10}.

    We will use the following path-based view of a one-round move. Let $\mathcal{P}$ be a family of vertex-disjoint paths, each oriented from its first vertex to its last vertex. We say that $\mathcal{P}$ is \emph{applicable} to a configuration $\confT$ if the first vertex of every path lies in $\confT$, the last vertex of every path lies outside $\confT$, and all internal vertices of the paths lie in $\confT$. Applying $\mathcal{P}$ to $\confT$ shifts the guards one step forward along each path; the resulting configuration is
    \[
      \left(\confT \setminus \{s(P) \mid P \in \mathcal{P}\}\right) \cup \{t(P) \mid P \in \mathcal{P}\},
    \]
    where $s(P)$ and $t(P)$ denote the first and last vertex of $P$, respectively.

    \begin{restatable}{observation}{kdisjointpaths}\label{obs:k-disjoint-paths}
      Let $G=(V,E)$ be a graph, let $\confT,\confR\subseteq V$ be configurations of order~$k$, and let $e\in E$. The configurations $\confT$ and $\confR$ are \emph{$e$‑compatible} if and only if there exist at most $k$ vertex–disjoint paths in~$G$ such that, after orienting each path from its endpoint in $\confT\setminus\confR$ to its endpoint in $\confR\setminus\confT$, applying these paths to $\confT$ results in $\confR$, and the edge $e$ belongs to one of these paths. Equivalently, the endpoints of the paths are exactly the vertices of $\confT\triangle\confR$, each path has one endpoint in $\confT\setminus\confR$ and the other in $\confR\setminus\confT$, and all internal vertices lie in $\confT\cap\confR$.
    \end{restatable}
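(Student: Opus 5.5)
Both directions of the equivalence come from viewing a one-round defense as a functional digraph on guard movements and splitting it into paths and cycles. The ``if'' direction is straightforward, so the work is in the ``only if'' direction.

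\emph{The ``if'' direction.} Suppose $\mathcal{P}$ is a family of at most $k$ vertex-disjoint paths, applicable to $\confT$ as oriented, whose application yields $\confR$, and with $e$ on some path. On each path $P=v_0v_1\dots v_m$ with $v_0\in\confT\setminus\confR$, we move the guard on $v_i$ to $v_{i+1}$ for every $i<m$; all other guards stay put. Each guard moves at most once, along an edge. Since the paths are disjoint, no two guards land on the same vertex. The resulting occupied set is exactly $(\confT\setminus\{s(P)\})\cup\{t(P)\}$ over all paths, which equals $\confR$. Because $e$ lies on a path, some guard crosses $e$. Hence $\confT$ and $\confR$ are $e$-compatible.

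\emph{The ``only if'' direction.} Fix a defense, that is, a map from guards to destinations, realizing the move $\confT\to\confR$ in which some guard crosses $e$.
\begin{enumerate}
\item Build the digraph $D$ on $V$ with an arc $u\to v$ whenever the guard at $u$ moves to $v\neq u$. Every vertex has out-degree at most one (at most one guard sits on $u$) and in-degree at most one (no vertex ends up with two guards).
\item A vertex has out-degree $1$ only if it lies in $\confT$, and in-degree $1$ only if it lies in $\confR$. Moreover, a vertex of $\confT\cap\confR$ with out-degree $1$ must have in-degree $1$: it is occupied afterwards, so some guard arrives there. Symmetrically, a vertex of $\confT\cap\confR$ with in-degree $1$ has out-degree $1$.
\item Therefore $D$ decomposes into vertex-disjoint directed paths and cycles. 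Each path starts at a vertex of $\confT\setminus\confR$, ends at a vertex of $\confR\setminus\confT$, and has all internal vertices in $\confT\cap\confR$. Each cycle lies entirely inside $\confT\cap\confR$. Every vertex of $\confT\triangle\confR$ is an endpoint of exactly one path, and the number of paths is $|\confT\setminus\confR|\le k$.
\item Applying only the paths already produces $\confR$, since cycles do not change the occupied set.
\end{enumerate}

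The main obstacle is the requirement that $e$ lie on a path. If the guard crossing $e=uv$ belongs to a path, we are done. Otherwise it lies on a cycle $C$, and then both $u,v\in\confT$, so $e\notin\mathrm{Att}_G(\confT)$. Since attacks only target edges with an unoccupied endpoint, this case does not arise for a legal attack.

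If the statement is meant for arbitrary $e$, I would instead reroute along the cycle. For instance, when a path $P$ touches $C$, we can splice the cycle's arc through $e$ into $P$. The case where no path touches $C$ is degenerate: the move is then a pure rotation, which is exactly the ``cyclic relabelling'' the definition explicitly refuses to count as a defense.

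So the argument reduces to this convention. Under it, with $e$ attackable, the cycle case is excluded, and the path family obtained in step 3 witnesses the equivalence, including the stated reformulation about endpoints and internal vertices.
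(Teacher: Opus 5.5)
Your proof is correct and follows the paper's approach. The paper turns each guard's move into an edge $s_it_i$ and concatenates edges that share an endpoint; this is your decomposition of the move digraph, in which every in- and out-degree is at most one. Your version is more careful than the paper's, because the paper never addresses the closed cycles this concatenation can produce, nor the possibility that $e$ lies on one. Your observation that every cycle lies inside $\confT\cap\confR$, so that such an $e$ is not in $\mathrm{Att}_G(\confT)$, is the right way to dispose of that case.

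The one thing to remove is the ``splice'' aside, which is false. Take the graph with edges $st$, $su$, $uv$, with $\confT=\{s,u,v\}$, $\confR=\{t,u,v\}$ and $e=uv$. Moving $s\to t$ while swapping the guards on $u$ and $v$ is literally a legal one-round move that crosses $e$. Yet the only path from $s$ to $t$ with internal vertices in $\confT\cap\confR$ is the edge $st$, even though this path touches the cycle through $su$. So for non-attackable $e$ the statement really does depend on the cyclic-relabelling convention (or on restricting to $e\in\mathrm{Att}_G(\confT)$), exactly as your final paragraph concludes.
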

    \begin{proof}
      ($\Rightarrow$)\;
      Assume that $\confT$ and $\confR$ are $e$‑compatible.
      By definition, there is a single defence move that
      takes the system from~$\confT$ to~$\confR$ in one round in which
      \emph{(i)} every guard either stands on the current vertex or traverses exactly one edge, and
      \emph{(ii)} at least one guard crosses~$e$.
      Label the guards so that guard~$i$ starts on $s_i\in\confT$ and ends on
      $t_i\in\confR$ ($i=1,\ldots ,k$).
      Because two guards never occupy the same vertex simultaneously, for each $v\in \confT\cup \confR$, there is at most one value $i=1,\ldots,k$ such that $v=s_i$ and there is at most one value $j=1,\ldots,k$ such that $v=t_j$. Let $\mathcal{P}$ be the set of edges of the form $s_it_i$. If any two paths $P$ and $P'$ of $\mathcal{P}$ share an endpoint, remove them from $\mathcal{P}$ and add their concatenation. By exhaustive applications of the above rule, we can assume no two paths in $\mathcal{P}$ share an endpoint. Thus, one endpoint of the paths in $\mathcal{P}$ belongs to $\confT\setminus\confR$, the other belongs to $\confR\setminus\confT$, and the internal vertices belong to $\confT\cap \confR$, by definition. Moreover, the paths of $\mathcal{P}$ are vertex-disjoint due to the exhaustive repetition of the concatenation rule and the construction of the edges $s_it_i$. Finally, since by construction $e=s_it_i$ for some $i=1,\ldots,k$, then $e$ is an edge of a path in $\mathcal{P}$, as requested.

      ($\Leftarrow$)\;
      Conversely, suppose there is a set $\mathcal{P}$ of vertex‑disjoint paths from $\confT$ to~$\confR$ with $e$ contained in one path, say~$P_r$, and for every $P\in \mathcal{P}$, the first vertex of $P$ is in $\confT\setminus \confR$ and the last vertex of $P$ is in $\confR\setminus \confT$. Applying $\mathcal{P}$ to $\confT$ yields $\confR$, and since $e$ lies on one of the paths, the corresponding one-round move defends the attack on $e$. Thus $\confT$ and $\confR$ are $e$-compatible.
    \end{proof}

    Given a graph $G$ and a positive integer $k$, the configuration graph of $G$ of order $k$ is denoted $\mathfrak{C}_k(G)$ and is defined as follows:

    \begin{itemize}
    \item The vertex set of $\mathfrak{C}_k(G)$ is the set of all ${n \choose k}$ configurations of order $k$.
    \item Consider vertices $u$ and $v$ in $\mathfrak{C}_k(G)$. Let $F \subseteq E$ be the set of all edges $e$ such that $u$ and $v$ are $e$-compatible.
    For every edge $e \in F$, we add an edge between $u$ and $v$ with label $e$.
    \end{itemize}


Note that the configuration graph $\mathfrak{C}_k(G)$ is an edge-labeled multi-graph. We have the following easy observation connecting the configuration graph to the EVC.
\begin{observation}[{\cite[Lemma 4]{FGGKS10}}]\label{obs:conf-equivalence}
  Let $G$ be a graph. Then, EVC of $G$ is at most $k$
  if and only if the configuration graph $\mathfrak{C}_k(G)$ has a non-empty vertex subset $C$ inducing a connected subgraph with the property that, for every configuration $\confT \in C$, the union of the edge labels of edges incident on $\confT$ inside $C$ contains $\mathrm{Att}_G(\confT)$. Moreover, 
  constructing the configuration graph and checking whether it contains such a non-empty subset $C$ can be achieved in time polynomial in the number of vertices of the configuration graph.
\end{observation}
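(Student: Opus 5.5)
The plan is to prove the two directions of the equivalence separately and then handle the running time. The key notion is a \emph{closed} set: a non-empty, connected set $C$ of configurations such that every $\confT\in C$ has, for each $e\in\mathrm{Att}_G(\confT)$, an $e$-labelled edge in $\mathfrak{C}_k(G)$ to some configuration of $C$. We may always pad a strategy with unused guards, so ``at most $k$'' reduces to ``exactly $k$'' guards. Along the way we will use that every member of a closed set is a vertex cover; this matters because the rules require each reachable configuration to be a vertex cover.

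For the forward direction, assume the defender wins with $k$ guards from some initial configuration $\confT_0$. Let $C$ be the set of configurations reachable from $\confT_0$ when the defender plays a fixed winning strategy against all attack sequences. We take the strategy to be positional, i.e., its response depends only on the current configuration and the attacked edge. This is justified because the game is a safety game on the finite arena $\mathfrak{C}_k(G)\times E$, and safety games admit positional (memoryless) winning strategies. Then:
\begin{itemize}
\item $C$ is connected, since every element is joined to $\confT_0$ by a walk of compatible moves.
\item Every $\confT\in C$ is a vertex cover, since otherwise an uncovered edge is attacked and no defence ending in a vertex cover exists. (This follows from the rule that post-move configurations must be vertex covers.)
\item For every $e\in\mathrm{Att}_G(\confT)$, the strategy's response yields some $\confR\in C$ with $\confT,\confR$ $e$-compatible, so the $e$-labelled edge $\confT\confR$ lies inside $C$.
\end{itemize}
Hence the labels of the edges incident to $\confT$ inside $C$ cover $\mathrm{Att}_G(\confT)$.

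For the converse, let $C$ be as in the statement. We first show every $\confT\in C$ is a vertex cover. Suppose an edge $uv$ has $u,v\notin\confT$. Then $uv\in\mathrm{Att}_G(\confT)$, so some configuration of $C$ is $uv$-compatible with $\confT$. By \Cref{obs:k-disjoint-paths}, the edge $uv$ then lies on a path whose internal vertices are in $\confT$ and whose endpoints are in $\confT\triangle\confR$. So at least one of $u,v$ lies in $\confT$, a contradiction.

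The defender starts on any $\confT_0\in C$. Whenever $e\in\mathrm{Att}_G(\confT)$ is attacked, she moves along an $e$-labelled edge of $\mathfrak{C}_k(G)$ inside $C$. This is realisable by guard moves crossing $e$ by the definition of $e$-compatibility, or via \Cref{obs:k-disjoint-paths}. The resulting configuration is again in $C$, so by induction she defends forever. Connectivity of $C$ is not needed for this direction.

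For the algorithm, $\mathfrak{C}_k(G)$ has $N=\binom nk$ vertices. Each pair of configurations and each edge is tested for compatibility in polynomial time by the cited result, so the whole graph is built in time polynomial in $N$ and $n$. To find $C$, start from all vertex covers of size $k$ and repeatedly delete any configuration $\confT$ for which some $e\in\mathrm{Att}_G(\confT)$ has no $e$-labelled edge to a surviving configuration. This takes at most $N$ rounds.

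Every closed set survives this pruning, since no member of it is ever deleted, and the surviving set is itself closed. It remains to show that some connected component of the survivors is closed; this is the main subtle point. It holds because each witnessing edge of a survivor stays within its own component. So the answer is yes if and only if the survivor set is non-empty, and any component of it can serve as $C$.

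The main obstacle is the positional-strategy justification in the forward direction. Without it the reachable set still works: define $C$ as all configurations occurring in some play consistent with the strategy. Any winning strategy must answer every attack from every reachable configuration by a move to another reachable one, so the argument above goes through unchanged.
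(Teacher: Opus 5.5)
Your proof is correct. It is the standard argument behind \cite[Lemma 4]{FGGKS10}, which the paper cites without giving a proof of its own: the configurations reachable under a fixed winning strategy give $C$ in one direction; walking along $e$-labelled edges inside $C$, after checking that every member of $C$ is a vertex cover, gives the other; and iteratively deleting configurations that cannot answer some attack computes the largest such set, any component of which is a valid $C$.

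The one step stated too quickly is the padding from ``at most $k$'' to exactly $k$ guards. It needs $k\le n$, and it needs a reason why an idle extra guard on a vertex $x$ never obstructs the strategy. The reason is that a defence path ending at $x$ has all its vertices occupied, so it cannot contain the attacked edge. Such a path can therefore simply be dropped, and its start vertex becomes the new idle position.
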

We now make the notion of a strategy precise.

    \paragraph*{Strategies and Winning Strategies}

    An \emph{attacker strategy} for a graph $G$ is a function $f_a: V(\mathfrak{C}_k(G)) \rightarrow E(G)$ such that $f_a(\confT)\in \mathrm{Att}_G(\confT)$ for every configuration $\confT$. Intuitively, $f_a(\confT)$ specifies the next edge to attack when the current configuration is $\confT$. A \emph{defender strategy} for a graph $G$ is a function

    $$f_d: V(\mathfrak{C}_k(G)) \times E(G) \rightarrow V(\mathfrak{C}_k(G)) \cup \{\bot\},$$

    such that if $f_d(\confT, e) = \confR$, then $\confT$ and $\confR$ are $e$-compatible. Intuitively, $f_d(\confT, e)$ specifies the next configuration to move to when the current configuration is $\confT$ and the attackable edge $e$ is attacked. The output $\bot$ indicates accepting defeat, i.e., no valid defense against the attacked edge is available from the current configuration. By convention, we assume that $\bot$ is not isomorphic to any configuration of $\mathfrak{C}_k(G)$.

    Let $\confT$, $f_a$, $f_d$ be a configuration, an attacker and a defender strategy of $G$, respectively. We say that the game is {\it played based on} $f_a$ and $f_d$ if, at every configuration $\confT$ that occurs in the game, (1) the attacker selects the edge $f_a(\confT)$ and (2) the defender chooses the $f_a(\confT)$-compatible configuration $f_d(\confT,f_a(\confT))$. We say that $f_a$ {\it wins against} $f_d$ on $\confT$ in a game that is played based on $f_a$ and $f_d$ if $\bot$ is reached; otherwise, $f_a$ {\it loses against} $f_d$ on $\confT$. Finally, $\confT$ is {\it winning} (for the defender) if there exists a defender strategy $f_d$ such that $f_a$ loses against $f_d$ on $\confT$, for every attacker strategy $f_a$. Similarly, $\confT$ is {\it losing} (for the defender) if there exists an attacker strategy $f_a$ such that $f_a$ wins against $f_d$ on $\confT$, for every defender strategy $f_d$.

    \begin{observation}
    A configuration is winning if it is part of an eternal vertex cover of $G$, and it is losing otherwise.
    \end{observation}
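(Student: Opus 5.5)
The observation states that a configuration is winning if it belongs to an eternal vertex cover of $G$ and losing otherwise. I will read ``part of an eternal vertex cover'' as: $\confT$ lies in a non-empty set $C$ of configurations of order $k$ such that every $\confT'\in C$ and every $e\in\mathrm{Att}_G(\confT')$ admit some $e$-compatible $\confR\in C$. Such a set is a closed defending family, as in \Cref{obs:conf-equivalence}. The plan is to prove two implications: membership in such a $C$ makes $\confT$ winning, and non-membership makes $\confT$ losing. Since a configuration cannot be both winning and losing (apply the attacker's and the defender's witnessing strategies against each other), this gives the dichotomy.

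\emph{Step 1 (winning direction).} Suppose $\confT\in C$ with $C$ closed as above. I define $f_d(\confT',e)$ to be some $e$-compatible $\confR\in C$ whenever $\confT'\in C$ and $e\in\mathrm{Att}_G(\confT')$, and set $f_d$ arbitrarily elsewhere. I then argue by induction on the number of rounds that, against any attacker strategy $f_a$ starting from $\confT$, every configuration reached lies in $C$. Hence $\bot$ is never produced, so $f_a$ loses and $\confT$ is winning. Note that every configuration in $C$ must be a vertex cover, since otherwise an uncovered edge $uv$ would be attackable and no guard could move along it: neither endpoint holds a guard.

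\emph{Step 2 (losing direction).} Let $W$ be the union of all closed sets. A union of closed sets is again closed, so $W$ is the largest closed set, and $\confT\notin W$ by assumption. I compute $W$ as a greatest fixed point. Start with $W_0$, the set of all configurations. Let $W_{i+1}$ consist of those $\confT'\in W_i$ such that every $e\in\mathrm{Att}_G(\confT')$ has an $e$-compatible configuration in $W_i$. The sequence stabilises at some $W_m$, which is closed, so $W_m\subseteq W$. Conversely, each $W_i$ contains every closed set, so $W_m\supseteq W$, and therefore $W_m=W$. For each $\confT'\notin W$, let $r(\confT')$ be the least $i$ with $\confT'\notin W_{i+1}$. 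There is then an edge $e$ such that every $e$-compatible configuration has rank strictly less than $r(\confT')$; when $r(\confT')=0$, no $e$-compatible configuration exists at all. I set $f_a(\confT')=e$. From $\confT$, any defender strategy either outputs $\bot$ immediately or moves to a configuration of strictly smaller rank. Ranks are finite, so $\bot$ is reached within $r(\confT)+1$ rounds, and $\confT$ is losing.

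\emph{Main obstacle.} The main technical point is the well-definedness of the rank argument. At rank $0$ the chosen edge must have no compatible configuration at all, and this must be what forces $\bot$. This is where I must check that the definition of defender strategy allows only compatible moves or $\bot$. The winning direction, by contrast, is immediate.
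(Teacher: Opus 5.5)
Your proof is correct. It also does more than the paper, which states this observation without proof and treats it as immediate from the definitions (and implicitly from \Cref{obs:conf-equivalence}).

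What you give is the standard determinacy argument for a safety game on the finite graph $\mathfrak{C}_k(G)$. The defender's region is the greatest fixed point $W$ of your pruning operator, which is also the natural way to carry out the check in \Cref{obs:conf-equivalence}. Its complement is an attacker attractor, handled by your rank $r$. This fills a real gap. The paper's $f_a$ and $f_d$ are positional, since they see only the current configuration, so it is not obvious a priori that every configuration is either winning or losing. Your Step~2 shows that a positional attacker strategy forces $\bot$ within $r(\confT)+1$ rounds whenever $\confT\notin W$.

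Two points are worth making explicit.

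First, your closed families omit the connectivity required in \Cref{obs:conf-equivalence}. This is harmless: the configurations reachable from $\confT$ under your $f_d$ form a closed family containing $\confT$ that is connected through $e$-labelled edges. The same reachable-set argument shows that your reading agrees with the paper's ``placement plus strategy'' definition of an eternal vertex cover.

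Second, the paper requires $f_a$ to be a total function with $f_a(\confT')\in\mathrm{Att}_G(\confT')$ for every configuration. So in Step~2 you must also define $f_a$ on configurations in $W$. This is impossible precisely when some configuration has $\mathrm{Att}_G(\confT')=\emptyset$, which for $k<|V(G)|$ requires isolated vertices. In that case no attacker strategy exists, every configuration is vacuously winning, and the observation as literally formalized fails. For example, take an edge plus two isolated vertices with $k=2$; the two isolated vertices form a configuration that is not a vertex cover. That degenerate case should be excluded, or $f_a$ read as defined only where $\mathrm{Att}_G$ is nonempty. This is a defect of the paper's formalism, not of your reasoning.
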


\section{Cluster Vertex Deletion}
\label{sec:cvd}
We first consider parameterization by \cvd. Let $G=(V,E)$ be a graph and $D\subseteq V$ such that $G- D$ is a disjoint union of cliques. Without loss of generality we can assume that $G$ is connected. Denote by $\ell$ the size of $D$. Recall that $D$ is a \emph{Cluster Vertex Deletion} set (\cvd{}) of $G$. We observe that we can also assume that $G[D]$ is connected by blowing up the parameter by a factor of at most three, since we can always add at most $2\ell - 2$ additional vertices from $V \setminus D$ to make $G[D]$ connected:

\ifshort\begin{observation}[$\star$]\fi
\iflong\begin{observation}\fi\label{obs:connected-deletion-set-cvd}
  Let $G=(V,E)$ be a connected graph and $D \subseteq V$ be a subset such that $G - D$ is a disjoint union of cliques. If $|D| \leqslant \ell$, then there exists $D^\prime \supseteq D$ such that $G[D^\prime]$ is connected and $|D^\prime| \leqslant 3\ell$.
\end{observation}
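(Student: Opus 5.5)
We prove the statement by a greedy construction: grow $D$ one short connecting path at a time, merging components of the current induced subgraph. Each merge will cost at most two extra vertices, because any such path that avoids the current set must run through a single clique of $G-D$.

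Set $D_0 := D$ and maintain a set $D_i \supseteq D$. If $D=\emptyset$, take $D'=\emptyset$ and we are done. Otherwise $G[D_0]$ has at most $|D|\leqslant \ell$ connected components. While $G[D_i]$ is disconnected, we construct $D_{i+1}$ as follows.

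Since $G$ is connected, among all pairs of distinct components $C,C'$ of $G[D_i]$ we choose a pair, and a path $P$ from $C$ to $C'$ in $G$, such that $|P|$ is minimum over all such choices. We then set $D_{i+1} := D_i \cup V(P)$.

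The first claim is that the internal vertices of $P$ avoid $D_i$. Suppose an internal vertex $x$ of $P$ lies in some component $C''$ of $G[D_i]$. Then $C''$ differs from at least one of $C,C'$. The corresponding subpath of $P$ (from $C$ to $x$, or from $x$ to $C'$) is then a strictly shorter path between two distinct components, contradicting minimality.

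Consequently the internal vertices of $P$ lie in $V\setminus D_i \subseteq V\setminus D$. They form a connected subpath, so they all lie in a single component of $G-D$, which is a clique $K$.

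The second claim is that $P$ has at most two internal vertices. Let $a$ be the first internal vertex and $b$ the last. Since $a,b\in K$, either $a=b$ or $ab\in E(G)$. So the path consisting of the endpoint in $C$, then $a$, then (if $b\neq a$) $b$, then the endpoint in $C'$ is a valid path with at most two internal vertices. By minimality $P$ is no longer than this path, so $P$ itself has at most two internal vertices.

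Hence $|D_{i+1}\setminus D_i|\leqslant 2$. Moreover, $G[D_{i+1}]$ has strictly fewer components than $G[D_i]$, since $C$ and $C'$ are merged via $P$ and no new component is created.

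Because we start with at most $\ell$ components, the process stops after at most $\ell-1$ iterations with some $D'$ such that $G[D']$ is connected. Then $|D'|\leqslant \ell + 2(\ell-1)\leqslant 3\ell$. Finally, $G-D'$ is an induced subgraph of $G-D$, so it is still a disjoint union of cliques, and $D'$ remains a cluster vertex deletion set.

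The main obstacle is the first claim, that the minimal path's interior avoids $D_i$. It requires choosing the pair of components and the path jointly minimal, rather than fixing the pair in advance. Once that is established, the clique structure of $G-D$ immediately gives the bound of two added vertices per merge.
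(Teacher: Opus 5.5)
Your proof is correct and follows the paper's argument: repeatedly merge two components of the current deletion set using at most two vertices from a single clique of $G-D$, which gives $|D'|\leqslant \ell+2(\ell-1)\leqslant 3\ell$. The only difference is how you justify the short connector. Your jointly minimal path argument is a more careful version of the paper's brief assertion that, because $G$ is connected, some clique has neighbours in two distinct components.
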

\iflong
\begin{proof}
  If $D$ induces a connected subgraph in $G$, we are done. Otherwise, suppose $G[D]$ has $c$ connected components, where $2 \leqslant c \leqslant \ell$. We construct a new deletion set $D'$ that induces a connected subgraph as follows. Observe that there exists a pair of components $C_1$ and $C_2$ in $G[D]$ such that the shortest path connecting them has at most two vertices from outside $D$. Indeed, since $G$ is connected, there must be a clique that contains a neighbor of some vertex of $C_1$ and a neighbor of some vertex in $C_2$. Let these neighbors be $u$ and $v$ respectively. We include $u$ and $v$ in $D$. Note that $G[C_1 \cup C_2 \cup \{u,v\}]$ is connected, since $uv \in E$. We repeat this process until $G[D]$ is connected. Let $D'$ be the new deletion set thus obtained. Then $G[D']$ is connected and $|D'| \leqslant \ell + 2(c-1) \leqslant \ell + 2(\ell - 1) \leqslant 3\ell$.
\end{proof}
\fi


\ifshort\begin{observation}[$\star$]\fi
\iflong\begin{observation}\fi
    \label{obs:cvd-bounds}
    Let $G = (V,E)$ be a connected graph with a set $D \subseteq V$ of size $\ell$ such that $G - D$ is a disjoint union of $a$ cliques $C_1, \ldots, C_a$ whose sizes are $q_1, \ldots, q_a$. Then:

    $$ \sum_{i=1}^a (q_i - 1) \leqslant evc(G) \leqslant 3\ell + \sum_{i=1}^a (q_i - 1) + 1 $$
\end{observation}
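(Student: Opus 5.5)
The lower bound follows from $evc(G)\geqslant vc(G)$, which the introduction already notes. Any vertex cover of $G$ must contain at least $q_i-1$ vertices of each clique $C_i$, since two uncovered vertices of $C_i$ would leave the edge between them uncovered. The cliques are pairwise disjoint, so $vc(G)\geqslant \sum_{i=1}^a (q_i-1)$.

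For the upper bound we first apply \Cref{obs:connected-deletion-set-cvd} to obtain $D'\supseteq D$ with $G[D']$ connected and $|D'|\leqslant 3\ell$. Every clique of $G-D$ minus the vertices moved into $D'$ is still a clique, of size $q_i'\leqslant q_i$, so $\sum (q_i'-1)\leqslant \sum(q_i-1)$ up to cliques that vanish entirely. We place $|D'|+\sum_i (q_i'-1)+1$ guards as follows.
\begin{itemize}
\item Every vertex of $D'$ is occupied.
\item Each remaining clique $C_i'$ has exactly one empty vertex, called its \emph{hole}.
\item One extra guard makes one clique full.
\end{itemize}
The invariant we maintain is that every vertex of $D'$ is guarded and every clique has at most one hole. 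Such a configuration is always a vertex cover. Since the number of guards is fixed, exactly one clique is full at any time; if there are no cliques, every edge is fully guarded and nothing can be attacked.

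We then handle the attacks, using that the invariant only constrains the number of holes per clique.
\begin{itemize}
\item \textbf{Attack inside a clique} $C_i'$ on edge $uv$ with $v$ the hole: the guard moves $u\to v$, the hole moves to $u$, and the invariant is preserved.
\item \textbf{Attack on $xv$} with $x\in D'$ and $v$ the hole of $C_i'$: we move the guard $x\to v$, which fills $C_i'$ and leaves $x$ empty. Then we refill $x$ from the currently full clique $C_j'$ (with $j\neq i$), in the same round. Because $G[D']$ is connected and $C_j'$ is adjacent to some $y\in D'$ (as $G$ is connected), there is a path $P$ in $G[D']$ from $y$ to $x$. We shift guards one step along the path $w, y,\dots, x$, where $w\in C_j'$ is a neighbour of $y$; all of these vertices are occupied. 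This empties $w$ and fills $x$. These moves form vertex-disjoint paths as in \Cref{obs:k-disjoint-paths}: one path ends at $v$ through edge $xv$, and another path runs from $w$ to $x$. Since $x$ is both vacated and refilled, the two paths combine into a single path $w,y,\dots,x,v$, which is legitimate.
\item \textbf{Degenerate case}: if no clique other than $C_i'$ is full, then $C_i'$ was the unique clique that could have been full. But $C_i'$ had a hole, so some other clique is full and the case is vacuous.
\end{itemize}
After this move $C_i'$ is full, $C_j'$ has one hole, and $D'$ is fully guarded. Edges inside $D'$ are never attackable.

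The main care point is the $D'$-to-clique attack: it needs the rerouting path through $G[D']$ and the combined-path argument showing the move is valid in one round. Once that case is settled, the bound $|D'|+\sum(q_i-1)+1\leqslant 3\ell+\sum_{i}(q_i-1)+1$ follows.
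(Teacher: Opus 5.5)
Your proof is correct. The lower bound is exactly the paper's argument, and for the upper bound you start from the same connected set \(D'\), but you justify the one spare guard differently.

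The paper notes that \(D'\), together with \(q_i-1\) vertices of each remaining clique (chosen to include a vertex adjacent to \(D'\)), forms a connected vertex cover. It then invokes the known bound \(evc(G)\leqslant cvc(G)+1\) from the literature as a black box.

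You instead write out an explicit one-spare-guard strategy. This amounts to an inline proof of that bound specialised to the cluster structure, with a slightly different invariant. The standard argument keeps a fixed connected vertex cover occupied and lets only the spare guard travel. You keep \(D'\) full and let each clique's hole move freely, so an attack inside a clique is answered by one local move. Connectivity of \(G\) is then used when drawing a replacement guard out of the unique full clique, rather than when choosing which clique vertices enter the cover.

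The paper's route is shorter. Yours is self-contained and shows exactly where the connectivity of \(G[D']\) and of \(G\) is needed.

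One small loose end: if \(G-D'\) is empty, you cannot place \(|D'|+1\) guards. In that case \(evc(G)\leqslant |V|=|D'|\leqslant 3\ell\) trivially, so the bound still holds.
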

\iflong
\begin{proof}
    The lower bound follows from the fact that $evc(G) \geqslant vc(G) \geqslant vc(G - D) \geqslant \sum_{i=1}^a (q_i - 1)$. For the upper bound, let $D^\prime \supseteq D$ be a connected deletion set of $G$ of size at most $3\ell$, whose existence is guaranteed by \Cref{obs:cvd-bounds}. Notice that since $G$ is connected, and for every clique $C$ in $G - D^\prime$, at least one vertex in $C$ has a neighbor in $D^\prime$. Since $G[D^\prime]$ is connected, we can always choose $q_i - 1$ vertices of $C_i \in G - D^\prime$ in such a way that the set of chosen vertices, along with $D^\prime$, is a connected vertex cover. This implies the upper bound, since $evc(G) \leqslant cvc(G) + 1 \leqslant 3\ell + \sum_{i=1}^a (q_i - 1) + 1$ \cite{KM09}, where $cvc(G)$ denotes the connected vertex cover number of the graph $G$.
\end{proof}
\fi


The following corollary now follows because of
\Cref{obs:cvd-bounds} together with the known
$2$-approximation algorithm for CVD~\cite{ADFH23}.
\begin{corollary}
  There is a polynomial-time algorithm that computes the EVC number of
  any graph $G$ to within an additive error of at most $6\cvd(G)+1$.
\end{corollary}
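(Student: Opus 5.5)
We prove the corollary by running the known $2$-approximation for cluster vertex deletion and then reading off the closed-form estimate from \Cref{obs:cvd-bounds}. Concretely, given $G$, we may assume $G$ is connected; otherwise we treat each component separately and sum the results, since both $evc$ and $\cvd$ are additive over components. We run the polynomial-time $2$-approximation algorithm of~\cite{ADFH23} to obtain a set $D$ with $G-D$ a disjoint union of cliques $C_1,\dots,C_a$ and $\ell:=|D|\leqslant 2\cvd(G)$. We then output
\[
L:=\sum_{i=1}^a (|C_i|-1),
\]
or alternatively the midpoint of the interval from \Cref{obs:cvd-bounds}. Computing $D$, the cliques and $L$ clearly takes polynomial time.

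By \Cref{obs:cvd-bounds} we have $L\leqslant evc(G)\leqslant L+3\ell+1$. Hence $|evc(G)-L|\leqslant 3\ell+1\leqslant 6\cvd(G)+1$, which is exactly the claimed additive error. The connectedness assumption of \Cref{obs:cvd-bounds} is met componentwise. In the disconnected case we sum the per-component bounds; the ``$+1$'' terms then accumulate, so we need some care there. One option is to note that components that are single cliques have $evc$ computable exactly, while every other component contributes at least one vertex to an optimal cluster deletion set. Alternatively, the statement is simply read for connected $G$, as assumed throughout \Cref{sec:cvd}.

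The main subtlety is this handling of disconnected graphs and the additive $+1$. Everything else follows directly from \Cref{obs:cvd-bounds} combined with the approximation guarantee $\ell\leqslant 2\cvd(G)$. In particular, the factor $3\ell$ in the upper bound, coming from \Cref{obs:connected-deletion-set-cvd}, is what turns the approximation factor $2$ into the coefficient $6$.
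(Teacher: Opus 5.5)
Your proposal is correct and is exactly the paper's argument: you compute a cluster deletion set $D$ with $|D|\leqslant 2\cvd(G)$ using the known $2$-approximation, and you read off the gap $3|D|+1\leqslant 6\cvd(G)+1$ between the two bounds of \Cref{obs:cvd-bounds}, taking $G$ connected as assumed throughout \Cref{sec:cvd}. On your disconnected-case aside: the ``cliques are exact, every other component has $\cvd\geqslant 1$'' idea by itself only bounds the accumulated error by $6\cvd(G)+\cvd(G)$, but it closes once you use that the proof of \Cref{obs:connected-deletion-set-cvd} actually gives $|D_i'|\leqslant 3|D_i|-2$ for $D_i=D\cap V(G_i)\neq\emptyset$, so each non-clique component $G_i$ contributes error at most $3|D_i|-1$ and the total is at most $3|D|\leqslant 6\cvd(G)$.
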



The rest of this section describes the \textsf{FPT}-algorithm for EVC parameterized by \CVD{}, i.e., we will show the following theorem.
\ifshort\begin{theorem}[$\star$]\fi
\iflong\begin{theorem}\fi\label{the:evc-fpt-cvd}
    \EVC{} is fixed-parameter tractable parameterized by \CVD.
\end{theorem}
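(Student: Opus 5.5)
We plan to follow the outline given in the introduction and reduce the cluster setting to a bounded-type counting problem over configuration vectors. First, use a $2$-approximation for \CVD{} and then \Cref{obs:connected-deletion-set-cvd} to obtain a deletion set $D$ with $|D| \leqslant \ell = \mathcal{O}(\cvd(G))$ such that $G[D]$ is connected. Every clique $C$ of $G-D$ has a \emph{type}, determined by the neighbourhood in $D$ of each of its vertices. Vertices of $C$ with the same neighbourhood in $D$ are true twins, so the type of $C$ is the function that assigns to each $N(v)\cap D$ the number of vertices of $C$ with that neighbourhood.

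The first difficulty is that a class of twins can be arbitrarily large. Within a single clique, however, the relevant information about guards on a twin class is only whether the class is fully occupied, or has exactly one empty vertex, or has at least two empty vertices. The lower bound in \Cref{obs:cvd-bounds} forces all but $\mathcal{O}(\ell)$ cliques to contain exactly one empty vertex. We will therefore show that a twin class with more than some threshold $f(\ell)$ vertices can be shrunk to size $f(\ell)$, while the total size of its clique is remembered and the evc changes by a computable offset. The argument is that guards inside a clique can be permuted freely in one round: an attack on an internal edge $uv$ with $v$ empty is answered by moving the guard from $u$ to $v$. So only the position of the hole or holes, up to twin class, matters. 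After this step each clique has a type from a set of size bounded by a function of $\ell$, and its configurations up to equivalence also form a set of size bounded by a function of $\ell$.

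Next we form the \emph{configuration vector}. It has one coordinate for each pair consisting of a clique type and a local configuration class, and it additionally records which vertices of $D$ are occupied. By \Cref{obs:cvd-bounds} the total number of guards exceeds $\sum_i(q_i-1)$ by at most $3\ell+1$. Therefore, apart from $\mathcal{O}(\ell)$ cliques, every clique is in a minimum configuration, namely one with a single hole. Cliques are self-sufficient: within a clique, any two minimum configurations can be reached from each other in one round by shifting along a two-vertex path, without using $D$. We now want a bound $M(\ell)$ with the following property: for each type, at most one minimum configuration (one hole class) needs to appear more than $M(\ell)$ times. Since the hole can be relocated inside a clique in one round whenever that clique is not attacked, a strategy can keep all but boundedly many cliques of each type with the hole in a canonical class. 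The defender "repairs" cliques lazily, at most one or two per round, and does so before their hole class becomes relevant. This step is where we expect the main obstacle. One must show that deviations from the canonical hole class that are caused by attacks on edges between a clique and $D$ never need to accumulate beyond $M(\ell)$. Our first attempt will be an exchange argument: map any winning region of $\mathfrak{C}_k(G)$ to a region in which each clique's hole is reset to canonical immediately after the round in which it is used, and check that edge-compatibility is preserved, since the reset moves are internal to cliques and disjoint from the other paths.

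Once this holds, the number of important configuration vectors is bounded by a function of $\ell$, as computed in the introduction. For each candidate $k$ in the range $[\sum_i(q_i-1), \sum_i(q_i-1)+3\ell+1]$ we build the compressed configuration graph on these vectors. Compatibility between two vectors is decided by guessing, for the boundedly many changed cliques, the paths through $D$ and applying \Cref{obs:k-disjoint-paths}. We then apply \Cref{obs:conf-equivalence} to this graph. The smallest $k$ that admits a valid set $C$ is $evc(G)$, and the total running time is $f(\cvd(G))\cdot n^{\mathcal{O}(1)}$.
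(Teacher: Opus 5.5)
Your plan follows the outline sketched in the introduction: bounded clique types, configuration vectors, and one dominant minimum state per type. However, the step that carries the argument is the one you leave open, and the exchange argument you propose for it does not work as stated. Resetting every one-hole clique to a fixed canonical hole class does more than re-route existing transitions. It changes the configurations themselves, and hence their sets $\mathrm{Att}_G(\cdot)$ of attackable edges. Suppose the canonical vertex $c$ of a clique has a neighbour $d\in D$. In the original configuration the hole is elsewhere, so $cd$ is occupied at both ends and can never be attacked. After the reset $cd$ is attackable, and the original strategy gives you no response to it. If $d$ happens to be unoccupied, the reset configuration is not even a vertex cover. So checking that old transitions stay compatible does not show that the image of a winning region is winning. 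Repairing ``lazily'' does not help either: the problem is not the rate of repairs, but that repaired configurations are ones the original strategy was never required to defend.

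What is missing is the following.
\begin{itemize}
\item The dominant state must be chosen relative to the current configuration, among states that are already crowded there.
\item You need a buffer of $\theta(\ell,p)=\beta(\ell,p)+2\ell+1$ components in every state. Since at most $\beta(\ell,p)$ components are heavy and at most $2\ell$ cross-edges are used per round, any attack on, or path through, a component in some state can then be simulated on an untouched component with the same signature and state.
\item Self-sufficiency is used to realign states when the dominant one changes.
\end{itemize}
The paper packages this as a kernel rather than as a compressed configuration graph. \Cref{thm:scd} deletes all but $h(\ell,p)=2^p\theta(\ell,p)$ components per signature, lowers the budget by $lb$ for each deleted component, and solves the bounded instance exactly. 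The theorem for \CVD{} then follows because after twin reduction every clique has size at most $2^\ell(\ell+1)$ and is self-sufficient.

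Your twin-class shrinking step is also only asserted. That guards can be permuted inside a clique handles attacks on internal edges. But a single defence can route guards through many vertices of one twin class via $D$, along paths of the form $d\to x\to d'$. So a removed twin cannot simply be assumed idle. The paper proves $evc(G)=evc(H)+1$ whenever a class has more than $\ell+1$ vertices (\Cref{thm:cvd-bdd}) using two ingredients:
\begin{itemize}
\item \Cref{obs:active-vertex-bound}: by shortcutting through the clique and vertex-disjointness in $D$, a minimal transition activates at most $\lfloor \ell/2\rfloor+2$ vertices of a twin class;
\item the existence of $w$-good winning strategies.
\end{itemize}
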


Let $G = (V,E)$ be a graph with a \CVD{} $D \subseteq V$ of size $\ell$. Towards the algorithm, we begin by guessing the size of $evc(G)$, and let this guess be fixed for the rest of this discussion. Notice that this only adds a factor of $3\ell + 1$ to the running time of the algorithm, by \Cref{obs:cvd-bounds}.
Let $C$ be a clique in $G - D$ and let $v \in V(C)$. The signature of $v$ is defined as $\sigma(v) = N(v) \cap D$. If there are more than $\ell + 1$ vertices in $C$ with the same signature $\sigma$, then define $H$ as $G - \{w\}$, where $w$ is a vertex of $C$ with signature $\sigma$. A configuration $\confT$ of $G$ is called $w$-good if $w \in \confT$. As we will see, it will be enough to only work with $w$-good configurations.
For a $w$-good configuration $\confT$, define the projection of
$\confT$ to $H$ as the restriction of $\confT$ to $H$, i.e, $\confT \setminus \{w\}$. Also, for any configuration $\confR$ of $H$, define the
extension of $\confR$ to $G$ as $\confR \cup \{w\}$. We observe that projections and extensions preserve compatibility in the natural sense.

\begin{observation}\label{obs:active-vertex-bound} Let $H$ be obtained from $G$ as described above. Let $uv \in E(H)$. Suppose $\confT$ and $\confT^\prime$ are $uv$-compatible in $G$ and $w$-good. Let us denote by $C_w$ the set of all vertices that share the same signature as $w$ in $C$. Among all ways of moving from $\confT$ to $\confT^\prime$, consider one that
minimizes the total number of guard movements. In such a minimal transition, at most $\lfloor \nicefrac{\ell}{2} \rfloor + 2$ vertices from $C_w$
are ``active'', i.e, they either lose a guard or gain one.
\end{observation}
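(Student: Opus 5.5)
The plan is to work with the path-based description of a one-round move. By \Cref{obs:k-disjoint-paths}, a transition from $\confT$ to $\confT^\prime$ that defends $uv$ corresponds to a family $\mathcal{P}$ of vertex-disjoint paths. Their endpoints are exactly $\confT \triangle \confT^\prime$, and their internal vertices lie in $\confT \cap \confT^\prime$. The total number of guard movements equals $\sum_{P\in\mathcal{P}} |E(P)|$, so a minimal transition is a path family of minimum total length. The active vertices of $C_w$ are then exactly the path endpoints lying in $C_w$. I will therefore bound the number of paths having an endpoint in $C_w$, using two facts: $C$ is a clique, and all vertices of $C_w$ have the same neighbourhood $\sigma$ in $D$ (their only neighbours outside $C$ lie in $D$).

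The first step is a shortcutting lemma. Take a path $P \in \mathcal{P}$ that does not contain $uv$. If $P$ has two non-consecutive vertices in $C$, I replace the segment between them by the direct clique edge. This keeps the endpoints and the vertex-disjointness, and strictly shortens the family, contradicting minimality. Similarly, if $P$ visits two vertices of $C_w$, or a vertex of $C_w$ and a vertex of $D$ that is adjacent to an earlier or later vertex of $C_w$ on $P$, I can shortcut through the common neighbourhood $\sigma$. Hence every path other than the one containing $uv$ meets $C$ in at most two consecutive vertices. Consequently, a path with an endpoint in $C_w$ either stays entirely inside $C$ (length one) or leaves $C$ through a vertex of $\sigma \subseteq D$ at most two steps from that endpoint.

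The second step is an exchange argument between paths. Suppose a path $P_1$ has its source $s \in C_w$ and a different path $P_2$ has its target $t \in C_w$. I will try to reroute so that the two endpoints are handled by the single clique edge $st$, together with the remainder of $P_1$ and $P_2$ glued through $\sigma$. Since $s$ and $t$ have identical neighbourhoods in $D$, the $D$-neighbour of $s$ on $P_1$ is also adjacent to $t$, which makes the gluing possible. The goal is a shorter family yielding the same configuration, again contradicting minimality. Repeating this leaves essentially one "direction" of active $C_w$ vertices, apart from the path through $uv$ and at most one intra-$C$ path; these exceptions account for the additive $+2$. Every remaining active vertex of $C_w$ then lies on its own path, which exits $C$ into $D$ and occupies vertices of $D$ disjointly from the other paths.

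The final step is counting $D$-vertices. The main obstacle will be showing that each such path consumes \emph{two} vertices of $D$, rather than one, which is what yields the bound $\lfloor \nicefrac{\ell}{2}\rfloor$ instead of $\ell$. My first attempt here will use $w$-goodness and the hypothesis that more than $\ell+1$ vertices share signature $\sigma$. If a path entered $D$ at a single vertex $d$ and ended there (or immediately left $D$), then, because $d$'s guard position is interchangeable with that of some unused vertex of $C_w$ (such as $w$ or one of the many spare same-signature vertices), I expect to be able to reroute it to a shorter move or eliminate it. Thus every surviving path must pass through at least two vertices of $D$, and vertex-disjointness then gives at most $\lfloor \nicefrac{\ell}{2}\rfloor$ such paths. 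If this pairing argument fails, I would fall back to proving the weaker bound $\ell+2$ by the same route and check whether that still suffices for the kernel argument that follows.
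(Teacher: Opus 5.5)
\textbf{Wrong notion of ``active''.} The central problem is your definition. You identify the active vertices of $C_w$ with the path endpoints in $C_w$, i.e.\ with $(\confT\triangle\confT^\prime)\cap C_w$. But an internal vertex of a path also loses its guard and receives another one, and the paper counts every vertex of $C_w$ that lies on some path of $\mathcal{P}$. Only that notion is useful downstream. In \Cref{obs:proj-ext-cvd} one needs a vertex $w^\prime\in C_w$ that the transition does not touch at all. Since both configurations are $w$-good, $w\in\confT\cap\confT^\prime$ is never an endpoint, so it can only be used as an internal vertex, which is exactly the case your count ignores.

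\textbf{The endpoint count is immediate.} The quantity you set out to bound follows from a fact your plan never uses. $\confT$ and $\confT^\prime$ are vertex covers, so each misses at most one vertex of the clique $C$, whence $|V(C)\cap(\confT\triangle\confT^\prime)|\le 2$. This is where the ``$+2$'' comes from. Without this fact, your claim of ``at most one intra-$C$ path'' fails: if $e_1,\dots,e_m\in V(C)\setminus C_w$ lie in $\confT\setminus\confT^\prime$ and $c_1,\dots,c_m\in C_w$ lie in $\confT^\prime\setminus\confT$, all $m$ clique edges $e_ic_i$ are needed. With this fact, your exchange step is superfluous. Its gluing is also unjustified, since equal $D$-neighbourhoods make the successor of $s$ adjacent to $t$, not to the predecessor of $t$.

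\textbf{Where $\lfloor \ell/2\rfloor$ really comes from.} The term counts paths your plan does not count at all: paths with no endpoint in $C$ that pass through $C_w$. By your own shortcut (clique edges plus equal $D$-neighbourhoods), such a path contains only one vertex of $C_w$. Because $C$ is a component of $G-D$ and neither endpoint lies in $C$, the vertices immediately before and after its visit to $C$ are two distinct vertices of $D$. Vertex-disjointness then gives at most $\lfloor\ell/2\rfloor$ such paths.

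\textbf{Your final step.} You instead try to make paths with an endpoint in $C_w$ consume two vertices of $D$, which is false in general. A path $c\,d$ with $c\in C_w\cap(\confT\setminus\confT^\prime)$ and $d\in D\cap(\confT^\prime\setminus\confT)$ uses a single vertex of $D$ and already has length one. The proposed rerouting via $w$-goodness is left as a hope. Even the fallback bound $\ell+2$ would leave no guaranteed spare vertex, since \Cref{obs:proj-ext-cvd} only has $|C_w|\ge\ell+2$.

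Your care not to shortcut away the edge $uv$ is a genuine point that the paper's shortcut argument passes over silently, but it does not repair the issues above.
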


\begin{proof}
By \Cref{obs:k-disjoint-paths}, the $uv$-compatibility of $\confT$ and $\confT^\prime$ is witnessed by a set $\mathcal{P}$ of vertex-disjoint paths, where each path goes from a vertex in $\confT \setminus \confT^\prime$ to a vertex in $\confT^\prime \setminus \confT$, all internal vertices lie in $\confT \cap \confT^\prime$, and the edge $uv$ belongs to exactly one path. Among all such witnesses, choose $\mathcal{P}$ to minimize the total number of edges (equivalently, guard movements). A vertex $v \in C_w$ is \emph{active} if it lies on some path in $\mathcal{P}$.

We will use the following shortcut repeatedly. Suppose a path $P \in \mathcal{P}$ contains two vertices $a,b \in C_w$, with $a$ appearing before $b$. If $b$ has a successor $x$ on $P$, then $x \in C \cup D$. If $x \in C$, then $ax \in E(G)$ because $C$ is a clique; if $x \in D$, then $ax \in E(G)$ because $bx \in E(G)$ and $a,b$ have the same signature. Thus the subpath from $a$ to $x$ can be replaced by the single edge $ax$, unless this would delete an endpoint of $P$. The symmetric argument applies from the predecessor of $a$. Hence, by the minimality of $\mathcal{P}$, any path with no endpoint in $C$ contains at most one vertex of $C_w$, and any path with an endpoint in $C$ contributes at most as many vertices of $C_w$ as it has endpoints in $C$.

\begin{figure}[ht]
\centering
\tikzset{
wnode/.style={shape=circle,draw=black,fill=white,inner sep=0pt,minimum size=4pt},
enode/.style={shape=circle,inner sep=0pt,minimum size=4pt},
bnode/.style={shape=circle,draw=black,fill=black,inner sep=0pt,minimum size=4pt}}
\begin{tikzpicture}
\node[bnode, label=below:$d_i$] (d) at (-1,-1.2) {};
\node[bnode] (d1) at (-0.5,-1.2) {};
\node[bnode] (d2) at (0,-1.2) {};
\node[bnode] (d3) at (0.5,-1.2) {};
\node[bnode, label=below:$d_j$] (d4) at (1,-1.2) {};
\node[enode] (d') at (1,-1.7) {};
\node[enode]  (w1) at (1,0) {};
\node[enode]  (w2) at (-1,0.5) {};
\node[bnode, label=left:$a$]  (w3) at (-0.5,0.2) {};
\node[bnode]  (w4) at (-0.25,0.4) {};
\node[bnode, label=right:$b$]  (w5) at (0.5,0.2) {};
\node[bnode]  (w6) at (0.25,0.4) {};
\draw[color=blue,thick] (d)--(w3)--(w4)--(w6)--(w5)--(d4);
\draw (w5)--(d)(d4)--(w3)(w4)--(w5)--(w3)--(w6);
\begin{scope}[on background layer]
\node[fill=gray!20!white,draw, rectangle,opacity=0.5,very thick, fit=(d) (d'), inner sep=0.3cm, label=left:$D$] {};
\node[fill=gray!20!white,draw, rectangle,opacity=0.5,very thick, fit=(w1) (w2), inner sep=0.3cm, label=above:$C$] {};
\end{scope}
\begin{scope}[xshift=5cm]
\node[bnode, label=below:$d_i$] (d) at (-1,-1.2) {};
\node[bnode] (d1) at (-0.5,-1.2) {};
\node[bnode] (d2) at (0,-1.2) {};
\node[bnode] (d3) at (0.5,-1.2) {};
\node[bnode, label=below:$d_j$] (d4) at (1,-1.2) {};
\node[enode] (d') at (1,-1.7) {};
\node[enode]  (w1) at (1,0) {};
\node[enode]  (w2) at (-1,0.5) {};
\node[bnode, label=left:$a$]  (w3) at (-0.5,0.2) {};
\node[bnode]  (w4) at (-0.25,0.4) {};
\node[bnode, label=right:$b$]  (w5) at (0.5,0.2) {};
\node[bnode]  (w6) at (0.25,0.4) {};
\draw[color=blue,thick] (d)--(w3)--(d4);
\draw (w3)--(w4)--(w6)--(w5)--(d4)(w5)--(d)(w4)--(w5)--(w3)--(w6);
\end{scope}
\begin{scope}[xshift=5cm,on background layer]
\node[fill=gray!20!white,draw, rectangle,opacity=0.5,very thick, fit=(d) (d'), inner sep=0.3cm, label=left:$D$] {};
\node[fill=gray!20!white,draw, rectangle,opacity=0.5,very thick, fit=(w1) (w2), inner sep=0.3cm, label=above:$C$] {};
\end{scope}
\end{tikzpicture}
\caption{Shortcutting a Type~B path that uses $C_w$ more than once in \Cref{obs:active-vertex-bound}. The blue path (on the left) uses two vertices, namely $a$ and $b$, from $C_w$, and is shortened (on the right) by replacing the subpath between $a$ and $d_j$ with the edge $ad_j$.}
\label{fig:cvd-type-b-shortcut}
\end{figure}

We now partition the paths that contain vertices from $C_w$ into two types.

\textbf{Type~A: paths with at least one endpoint in $C$.} Since $\confT$ and $\confT^\prime$ are both vertex covers and $C$ is a clique, each of $\confT$ and $\confT^\prime$ omits at most one vertex of $C$. Therefore $|V(C) \cap (\confT \triangle \confT^\prime)| \leqslant 2$. Path endpoints lie in $\confT \triangle \confT^\prime$, so the shortcut above implies that Type~A paths account for at most $2$ active vertices from $C_w$ in total.

\textbf{Type~B: paths with no endpoint in $C$ that pass through a vertex of $C_w$.} Since $C$ is a connected component of $G - D$, any path entering $C$ from outside must pass through a vertex of $D$, and similarly for exiting. A Type~B path therefore uses at least two distinct vertices of $D$: one before and one after its visit to $C$.

\begin{figure}[h]
\centering
\tikzset{
wnode/.style={shape=circle,draw=black,fill=white,inner sep=0pt,minimum size=4pt},
enode/.style={shape=circle,inner sep=0pt,minimum size=4pt},
bnode/.style={shape=circle,draw=black,fill=black,inner sep=0pt,minimum size=4pt}}
\begin{tikzpicture}
\node[bnode] (d) at (-1,-1.2) {};
\node[bnode] (d1) at (-0.5,-1.2) {};
\node[bnode] (d2) at (0,-1.2) {};
\node[bnode] (d3) at (0.5,-1.2) {};
\node[bnode] (d4) at (1,-1.2) {};
\node[enode] (d') at (1,-1.7) {};
\node[enode]  (s1) at (-3,0) {};
\node[enode]  (s2) at (-2,1) {};
\node[enode]  (t1) at (3,0) {};
\node[enode]  (t2) at (2,1) {};
\node[enode]  (w1) at (1,0) {};
\node[enode]  (w2) at (-1,1) {};
\node[bnode]  (s3) at (-2.5,0.2) {};
\node[bnode]  (w3) at (-0.5,0.2) {};
\node[bnode]  (w4) at (0,0.2) {};
\node[bnode]  (w5) at (0.5,0.2) {};
\node[wnode]  (w6) at (-0.5,0.7) {};
\node[wnode]  (w7) at (0,0.7) {};
\node[wnode]  (t3) at (2.5,0.2) {};
\draw[color=blue,thick] (w3)--(w6)(d1)--(w4)--(w7);
\draw[color=green!50!black,thick] (s3)--(d) to [out=-30,in=210] (d2)--(w5)--(d4)--(t3);
\begin{scope}[on background layer]
\node[fill=gray!20!white,draw, rectangle,opacity=0.5,very thick, fit=(d) (d'), inner sep=0.3cm, label=left:$D$] {};
\node[fill=gray!20!white,draw, rectangle,opacity=0.5,very thick, fit=(s1) (s2), inner sep=0.3cm, label=above:$C_i$] {};
\node[fill=gray!20!white,draw, rectangle,opacity=0.5,very thick, fit=(t1) (t2), inner sep=0.3cm, label=above:$C_j$] {};
\node[fill=gray!20!white,draw, rectangle,opacity=0.5,very thick, fit=(w1) (w2), inner sep=0.3cm, label=above:$C$] {};
\end{scope}
\end{tikzpicture}
\caption{The paths that contribute active vertices in \Cref{obs:active-vertex-bound}. The green path is Type~B: it has no endpoint in the relevant clique component and enters and leaves it through two vertices of $D$. The blue paths are Type~A: each has an endpoint in the relevant clique component.}
\label{fig:cvd-active-path-types}
\end{figure}

Since the paths in $\mathcal{P}$ are vertex-disjoint, the vertices of $D$ used across different Type~B paths are all distinct, giving at most $\lfloor \nicefrac{\ell}{2} \rfloor$ Type~B paths.

In total, the number of active vertices from $C_w$ is at most $\lfloor \nicefrac{\ell}{2} \rfloor + 2$.
\end{proof}

\ifshort\begin{observation}[$\star$]\fi
\iflong\begin{observation}\fi\label{obs:proj-ext-cvd} Let $H$ be obtained from $G$ as described above. Let $uv \in E(H)$. Then:

\begin{enumerate}
  \item Suppose $\confR$ and $\confR^\prime$ are $uv$-compatible in $H$. Then $\confT := \confR \cup \{w\}$ and $\confT^\prime := \confR^\prime \cup \{w\}$ are $uv$-compatible in $G$ and $w$-good.
  \item Suppose $\confT$ and $\confT^\prime$ are $uv$-compatible in $G$ and $w$-good. Then $\confR := \confT - \{w\}$ and $\confR^\prime := \confT^\prime - \{w\}$ are $uv$-compatible in $H$.
\end{enumerate}
\end{observation}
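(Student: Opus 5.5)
For part 1 we use the path characterization of \Cref{obs:k-disjoint-paths}. Suppose $\confR$ and $\confR^\prime$ are $uv$-compatible in $H$. Then there is a family $\mathcal{P}$ of vertex-disjoint paths in $H$ with the following properties: the endpoints of the paths are exactly $\confR\triangle\confR^\prime$, the internal vertices lie in $\confR\cap\confR^\prime$, and $uv$ lies on some path. Every path of $H$ is also a path of $G$. Moreover, $\confT\triangle\confT^\prime=\confR\triangle\confR^\prime$ and $\confR\cap\confR^\prime\subseteq\confT\cap\confT^\prime$, since $w$ is added to both configurations. Hence the same family $\mathcal{P}$ witnesses $uv$-compatibility of $\confT$ and $\confT^\prime$ in $G$. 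Both configurations contain $w$, so they are $w$-good.

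For part 2, start from a witness family $\mathcal{P}$ for $\confT,\confT^\prime$ in $G$ that minimizes the total number of edges, as in \Cref{obs:active-vertex-bound}. Since $w\in\confT\cap\confT^\prime$, the vertex $w$ is not an endpoint of any path. If $w$ lies on no path, then $\mathcal{P}$ lives in $H$ and witnesses $uv$-compatibility of $\confR,\confR^\prime$ directly, because removing $w$ from both sets changes neither the symmetric difference nor the relevant internal vertices. So the only case to handle is when $w$ is an internal vertex of some path $P$, with predecessor $x$ and successor $y$ on $P$. Both $x$ and $y$ lie in $V(C)\cup D$. Note that $uv$ is not one of the edges $xw,wy$, since $uv\in E(H)$. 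So any local rerouting that only replaces these two edges keeps $uv$ on the path.

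The plan is to replace $w$ on $P$ by a vertex $w^\prime\in C_w\setminus\{w\}$ that is inactive (lies on no path of $\mathcal{P}$) and lies in $\confT\cap\confT^\prime$. Since $w^\prime$ has the same signature as $w$ and $C$ is a clique, we have $xw^\prime,w^\prime y\in E(G)$. Substituting $w^\prime$ for $w$ therefore keeps the paths vertex-disjoint and keeps all internal vertices in $\confT\cap\confT^\prime$. The resulting family avoids $w$, and hence, as in the first case, witnesses compatibility of $\confR,\confR^\prime$ in $H$.

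To find $w^\prime$ we count. By assumption $|C_w|\geqslant \ell+2$, and by \Cref{obs:active-vertex-bound} at most $\lfloor\nicefrac{\ell}{2}\rfloor+2$ vertices of $C_w$ are active, $w$ included. An inactive vertex of $C_w$ is not in $\confT\triangle\confT^\prime$, so it is either in both configurations or in neither. At most one vertex of the clique $C$ is missed by the vertex cover $\confT$, so at most one inactive vertex of $C_w$ is unoccupied.

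The main obstacle is that this count is tight or fails when $\ell$ is very small. To close this gap, first use minimality to restrict the neighbors $x,y$ of $w$. If both $x,y\in C$, the edge $xy$ shortcuts $w$, contradicting minimality, so at least one of $x,y$ lies in $D$. Then sharpen the count with the type analysis from the proof of \Cref{obs:active-vertex-bound}: Type~B paths need two $D$-vertices each, and Type~A paths contribute at most $2$ active vertices whose endpoints are in $\confT\triangle\confT^\prime$. From this, the number of active or unoccupied vertices of $C_w$ other than $w$ should be at most $\ell$, which leaves room for $w^\prime$. If this still fails in some degenerate case, the fallback is to choose the threshold in the definition of $H$ slightly larger, which does not affect the FPT argument.
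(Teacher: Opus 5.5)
Your part 1, and your plan for part 2 (take an edge-minimal witness, then replace the internal vertex $w$ by an inactive twin $w^\prime\in C_w$), follow the paper's proof. You are also right that $w^\prime$ must be occupied in both $\confT$ and $\confT^\prime$. The paper's proof skips this: it takes any non-active $w^\prime$ from $\lfloor \ell/2\rfloor+2\le \ell+1<|C_w|$.

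\textbf{The gap.} Your argument does not establish that such a $w^\prime$ exists. The naive count gives at least $\lceil \ell/2\rceil$ inactive vertices of $C_w$, and one of them may be unoccupied. That suffices only for $\ell\ge 3$. For $\ell\in\{1,2\}$ you assert that the active-or-unoccupied vertices other than $w$ ``should be at most $\ell$'' without proving it. Your fallback of enlarging the threshold is not available either, because the statement concerns the specific $H$ obtained with threshold $\ell+1$, on which the rest of the section relies.

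\textbf{The fix: split into two cases.}

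\emph{Case 1: some inactive $z\in C_w$ is unoccupied.} Since $z\notin\confT\triangle\confT^\prime$, it is missing from both configurations. Each configuration is a vertex cover of the clique $C$ (the same implicit assumption the paper uses in \Cref{obs:active-vertex-bound}), so $V(C)\setminus\{z\}\subseteq\confT\cap\confT^\prime$. Hence $V(C)\cap(\confT\triangle\confT^\prime)=\emptyset$, no path has an endpoint in $C$, and there are no Type~A paths. So at most $\lfloor \ell/2\rfloor$ vertices of $C_w$ are active, and at least $\ell+2-\lfloor \ell/2\rfloor-1\ge 1$ vertices of $C_w$ are inactive and occupied in both configurations.

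\emph{Case 2: no inactive vertex of $C_w$ is unoccupied.} Then every inactive vertex qualifies, and there are at least $\lceil \ell/2\rceil$ of them. This is at least $1$ because $\ell\ge 1$, which follows from your own minimality observation that $x$ or $y$ lies in $D$.

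With this case split inserted, your proof is complete.
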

\iflong
\begin{proof}
The first part follows from the fact that we can simply execute the guard movements that witness the $uv$-compatibility of $\confR$ and $\confR^\prime$ in $H$ in $G$, since $H$ is a subgraph of $G$. None of these movements will involve the vertex $\{w\}$, since $w \notin V(H)$.

For the second part, note that $u \neq w$ and $v \neq w$ since $uv \in E(H)$.
By~\Cref{obs:active-vertex-bound}, in a minimal transition witnessing the $uv$-compatibility of $\confT$ and $\confT^\prime$, at most $\lfloor \nicefrac{\ell}{2} \rfloor + 2$ vertices from $C_w$ are active. If $w$ is not active, then this transition does not involve $w$ at all and can be carried out verbatim in $H$. Otherwise $w$ is active; since $w$ is removed only when $C$ contains more than $\ell + 1$ vertices of signature $\sigma(w)$, we have $|C_w| \geqslant \ell + 2$, and as the number of active vertices is at most $\lfloor \nicefrac{\ell}{2} \rfloor + 2 \leqslant \ell + 1 < |C_w|$, there is a non-active vertex $w' \in C_w$; moreover $w' \neq w$ because $w$ is active. We then replace $w$ with $w'$ on the witness paths: this is valid because $w$ and $w'$ have identical neighborhoods in $D$ and belong to the same clique component $C$. The resulting transition avoids $w$ entirely and can therefore be carried out in $H$.
\end{proof}
\fi

Define a $w$-good strategy as one in which every configuration used is $w$-good.

\ifshort\begin{observation}[$\star$]\fi
\iflong\begin{observation}\fi\label{obs:wgood}
  If there is a winning strategy for $G$ using $k$ guards, then there is a $w$-good winning strategy using $k$ guards.
\end{observation}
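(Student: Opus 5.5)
The plan is to simulate an arbitrary winning strategy with $k$ guards and relabel vertices of $C_w$ on the fly. The relabelling uses the symmetry among the vertices of $C_w$. Any two vertices $a,b\in C_w$ are \emph{true twins} in $G$: they lie in the same clique component $C$ of $G-D$ and have the same signature $\sigma(a)=\sigma(b)$, so $N[a]=N[b]$. Hence every permutation of $C_w$ that fixes all other vertices is an automorphism of $G$. Such an automorphism maps configurations to configurations, maps vertex covers to vertex covers, and preserves $e$-compatibility in the sense that $\confT,\confR$ are $e$-compatible iff $\pi(\confT),\pi(\confR)$ are $\pi(e)$-compatible.

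We maintain the invariant that the real configuration equals $\pi(\confT)$ and contains $w$. Here $\confT$ is the current configuration of a \emph{virtual} game played with the given winning strategy $f_d$, and $\pi$ is a permutation of $C_w$.

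\textbf{Initialisation.} Let $\confT_0$ be the winning start configuration. Since $\confT_0$ is a vertex cover and $C$ is a clique, $\confT_0$ misses at most one vertex of $C$. As $|C_w|\geqslant \ell+2\geqslant 2$, if $w\notin\confT_0$ some $y\in C_w\cap\confT_0$ exists. We take $\pi=(w\,y)$, or $\pi=\mathrm{id}$ if $w\in\confT_0$.

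\textbf{One round.} Suppose the attacker attacks $e$ on the real configuration $\pi(\confT)$. We feed $\pi^{-1}(e)$ to the virtual game and obtain $\confT'=f_d(\confT,\pi^{-1}(e))$. Then $\pi(\confT)$ and $\pi(\confT')$ are $e$-compatible, and $\pi(\confT')$ is winning. If $w\in\pi(\confT')$ we make this move and keep $\pi$.

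Otherwise, take a witness family $\mathcal{P}$ from \Cref{obs:k-disjoint-paths} with the minimum number of edges. By \Cref{obs:active-vertex-bound}, at most $\lfloor \nicefrac{\ell}{2}\rfloor+2$ vertices of $C_w$ are active. Both configurations omit at most one vertex of $C$. Since $|C_w|\geqslant \ell+2>\lfloor \nicefrac{\ell}{2}\rfloor+3$ (this is where $\ell\geqslant 1$ is needed), there is a non-active $y\in C_w$ that is occupied in both configurations. (If $\ell=0$ the graph is a clique and the claim is easy to check directly, or we note that $w$'s removal threshold still leaves a suitable $y$.)

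Because $w$ is occupied before and empty after, $w$ is the first vertex of some path $P\in\mathcal{P}$. We prepend the edge $yw$, which exists because $C$ is a clique, to $P$. The new family is still vertex-disjoint, since $y$ lies on no path. Now $y$ is a start vertex in the current configuration, and $w$ is internal, lying in both the current and the new configuration. The attacked edge $e$ is still on a path. Applying the new family gives $(\pi(\confT')\setminus\{y\})\cup\{w\}=(w\,y)\pi(\confT')$. So the real configuration $(w\,y)\pi(\confT')$ is $e$-compatible with $\pi(\confT)$ and contains $w$. We then update $\pi:=(w\,y)\circ\pi$, and the invariant is restored.

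\textbf{Conclusion.} Since the virtual game never loses, every real configuration is the automorphic image of a winning configuration, hence a vertex cover and winning. Every attack is defended, and every configuration used is $w$-good. This yields the desired $w$-good winning strategy with $k$ guards. Strictly speaking, the strategy depends on the history through $\pi$. We can either note that only finitely many pairs $(\confT,\pi)$ occur and apply \Cref{obs:conf-equivalence}, or take the set of all reachable real configurations as the connected set $C$ in $\mathfrak{C}_k(G)$.

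The main obstacle is the rerouting step when $w$ loses its guard. We must find a vertex $y$ that is simultaneously occupied before and after, lies on no witness path, and can refill $w$ without destroying the defence of $e$. This is exactly where the bound of \Cref{obs:active-vertex-bound} and the threshold $|C_w|\geqslant \ell+2$ are used.
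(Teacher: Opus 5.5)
Your overall scheme is sound, and it is a more careful version of the paper's argument. The paper splits on whether $e$ is incident to $w$. In one case it uses \Cref{obs:active-vertex-bound} to keep $w$ stationary. In the other it only asserts that ``swapping the roles'' of $w$ and an occupied twin gives an equivalent $w$-good position, without tracking the relabelling this causes or checking that the modified move is still legal and still defends $e$. Your virtual game with a twin permutation $\pi$, together with the prepended edge $yw$, makes both points precise.

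The gap is the existence of the idle twin $y$. The inequality $\ell+2>\lfloor \ell/2\rfloor+3$ that you invoke is false for $\ell=1$ and $\ell=2$. Worse, \Cref{obs:active-vertex-bound} is stated only for $w$-good pairs with $uv\in E(H)$, and in your situation its bound genuinely fails. Since $e$ has an unoccupied endpoint, it is always the last edge of its path, and the shortcuts in that proof may delete it. Take $\ell=1$, $C_w=\{w,a,t\}$, $w,a\in\pi(\confT)$, $t\notin\pi(\confT)$ and $e=at$. Let the simulated response move guards along $w\to a\to t$; a winning strategy may do this, because the result is the image under $(w\,a)$ of the position reached by moving $a\to t$. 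Every witness family for this pair is a single $w$--$t$ path ending in $at$. So all three twins are active, and there is no $y$.

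You can close the gap with a local exchange that needs no active-vertex count. Let $P=(w,p_2,\dots,p_r)$ be the path starting at $w$.
\begin{itemize}
\item If some twin $b\neq w$ is an internal vertex of $P$, let the path start at $b$ instead.
\item If $b$ is internal on another path $Q=(q_1,\dots,q_{i-1},b,\dots,q_m)$, replace $P$ and $Q$ by $(q_1,\dots,q_{i-1},w,p_2,\dots,p_r)$ and $(b,\dots,q_m)$. This is legitimate because $N[b]=N[w]$ gives $q_{i-1}w\in E(G)$.
\end{itemize}
In both cases every path keeps its last edge, so $e$ is still defended, and the new position is $(w\,b)\pi(\confT')$.

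Otherwise every active twin is a path endpoint. Endpoints lie in $\pi(\confT)\triangle\pi(\confT')$, which meets $C$ in at most two vertices, so $|C_w|\geq 3$ already gives an idle twin $y\neq w$. This $y$ is occupied in both positions without your extra ``$+1$'', because $C\setminus\{w\}\subseteq\pi(\confT')$ and an idle vertex keeps its occupancy. Your prepending step then goes through verbatim, and this covers every $\ell\geq 1$.

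Your remark on $\ell=0$ is wrong. For $G=K_2$ with $D=\emptyset$, one guard wins, but it must leave $w$ when the edge is attacked, so the statement itself fails there.
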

\iflong
\begin{proof}
Let $f_d$ be a winning strategy for $G$. We construct a $w$-good
strategy $f_d^*$ as follows. For any edge $e$ not incident to $w$: by
the active vertex bound obtained in~\Cref{obs:active-vertex-bound}, $w$ can remain stationary during any defense, so if the current configuration is $w$-good, the next one is too. For an edge $e = wa$ incident to $w$: if defending $e$ leaves $w$ unoccupied, then since $C$ is a clique and any vertex cover of $G$ must cover all but at most one vertex of $C$, there exists $v \in C$ with $\sigma(v) = \sigma(w)$ that is occupied. Since $v$ and $w$ have identical neighborhoods outside $C$ and are adjacent within $C$, swapping their roles in the defense yields an equivalent $w$-good configuration.
\end{proof}
\fi

\ifshort\begin{lemma}[$\star$]\fi
\iflong\begin{lemma}\fi \label{thm:cvd-bdd}
$evc(G) = evc(H) + 1$.
\end{lemma}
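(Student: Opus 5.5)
We prove the two inequalities $evc(G)\leqslant evc(H)+1$ and $evc(H)\leqslant evc(G)-1$ separately. In both directions the tool is the correspondence between $w$-good configurations of $G$ and configurations of $H$ given by projection and extension in \Cref{obs:proj-ext-cvd}, combined with the characterization of eternal vertex covers through the configuration graph in \Cref{obs:conf-equivalence}.

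\textbf{Upper bound.} Suppose $evc(H)=k'$ and let $C_H$ be a connected set of configurations of $\mathfrak{C}_{k'}(H)$ witnessing this as in \Cref{obs:conf-equivalence}. Let $C_G$ be the set of extensions $\confR\cup\{w\}$ for $\confR\in C_H$; these are configurations of order $k'+1$ in $G$.

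For edges of $H$, part~1 of \Cref{obs:proj-ext-cvd} shows that every $uv$-labelled edge of $C_H$ lifts to a $uv$-labelled edge between the extensions. Hence $C_G$ is connected, and each extension has incident labels covering all of $\mathrm{Att}_H(\confR)$.

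It remains to handle attacks on edges $wa$ with $a\notin\confR$. Since $\confR\cup\{w\}$ is a vertex cover of $G$ and $C$ is a clique, the unoccupied vertex $a$ lies in $D$ or in $C$.
\begin{itemize}
\item If $a\in C$: because $\confR$ covers $C-w$, the vertex $a$ is the unique unoccupied vertex of $C-w$. Choose an occupied $w'\in C_w\setminus\{w\}$; this is possible since $|C_w|\geqslant \ell+2$ and at most one vertex of $C_w - w$ is uncovered.
\item If $a\in D$: take any $w'\in C_w\setminus\{w\}$ that is occupied.
\end{itemize}
In both cases we respond by moving $w\to a$ and $w'\to w$. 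Because $w'$ is a twin of $w$ with respect to $D\cup C$, the result is the extension of the configuration $\confR-w'+a$ of $H$. That configuration is still a vertex cover of $H$, since $w'$ has the same neighbours as $w$ and $w$ is now occupied. I would need to check that it lies in $C_H$, or else enlarge $C_G$ by twin-swaps. Swapping twins $w'$ and $w$ is a relabelling that preserves winning positions, so I expect the cleaner route is to argue directly with strategies: the defender in $G$ simulates the $H$-strategy, treating ``$w$ occupied'' as fixed and relabelling $w'$ as the $H$-copy. This yields $k'+1$ guards.

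\textbf{Lower bound.} Let $evc(G)=k$. By \Cref{obs:wgood} there is a $w$-good winning strategy with $k$ guards. Project every configuration it uses to $H$, which gives configurations of order $k-1$. For any attack on an edge $uv\in E(H)$, the $G$-response is $uv$-compatible and $w$-good, so part~2 of \Cref{obs:proj-ext-cvd} gives a $uv$-compatible response between the projections in $H$. Every attackable edge of $H$ in the projection is attackable in $G$, so the projected strategy wins in $H$, and $evc(H)\leqslant k-1$.

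\textbf{Main obstacle.} The lower bound is routine given the earlier observations. The delicate step is defending attacks on edges incident to $w$ in the upper bound while keeping the configuration $w$-good and consistent with the simulated $H$-strategy. For this I would rely on the abundance of twins in $C_w$, specifically $|C_w|\geqslant\ell+2$, to always find an occupied twin $w'$ to rotate into $w$.
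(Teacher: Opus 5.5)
\textbf{The gap is in your upper bound, specifically the defence of attacks on edges $wa$ incident to $w$.} Your lower bound ($w$-good strategy plus projection via part~2 of \Cref{obs:proj-ext-cvd}) is the same argument as the paper's and is fine.

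Your concrete response ($w\to a$, $w'\to w$) is a local two-guard move that never consults the $H$-strategy. Your justification is wrong. Whether $w$ is occupied is irrelevant to covering the edges at $w'$, and $w'$ is now empty. Two cases show the move can leave an edge uncovered:
\begin{itemize}
\item Suppose $a\in D$, and the unique empty vertex $c$ of $C-w$ has a different signature, so $ca\notin E(G)$. Then after your move the clique edge $cw'$ has both endpoints empty.
\item Similarly, if some $d\in\sigma(w)\setminus\{a\}$ is empty, the edge $dw'$ becomes uncovered.
\end{itemize}
Even when the result is a vertex cover, nothing places $\confR-w'+a$ in $C_H$ or makes it winning in $H$. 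In $H$ it corresponds to answering the attack on $w'a$ by the single move $w'\to a$, which a winning $H$-strategy need not use. Invoking the $w\leftrightarrow w'$ automorphism of $G$ does not help. It only relates $G$-positions whose winning status is exactly what you are trying to establish.

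\textbf{The missing step} is to feed the attack to the $H$-strategy. In $\confR$ the edge $w'a$ lies in $E(H)$ and is attackable, since $w'$ is occupied and $a$ is not. Let $\confR'$ be the $H$-strategy's answer, and let $\mathcal{P}$ be witness paths from \Cref{obs:k-disjoint-paths}. The path through $w'a$ ends at $a$. Replace $w'$ by $w$ on that path; this is valid because $N_G[w]=N_G[w']$ and no path of $\mathcal{P}$ touches $w$. If $w'$ was the first vertex of that path, so that no guard entered $w'$, use $w'\to w\to a$ instead. Keep all other paths unchanged. The resulting move in $G$ defends $wa$ and ends exactly at $\confR'\cup\{w\}$.

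So whenever $\confR,\confR'$ are $w'a$-compatible in $H$, the extensions $\confR\cup\{w\}$ and $\confR'\cup\{w\}$ are $wa$-compatible in $G$. Your set $C_G$ of extensions then works with no enlargement needed. This is precisely the paper's ``switched defense'': attack $xy$ in $H$ for an occupied twin $x$, swap the roles of $x$ and $w$, and add the final move $x\to w$ when nothing entered $x$. Your closing remark about simulating the $H$-strategy with $w'$ relabelled points in this direction but never carries it out. The move you actually proposed is not this construction.
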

\iflong

\begin{proof}

\textbf{The Forward Direction.} We show $evc(H) \leqslant evc(G) - 1$. By \Cref{obs:wgood}, let $f_d$ be a winning $w$-good strategy for $G$ using $evc(G)$ guards.

Define $f_d^\prime$ for $H$ using $evc(G) - 1$ guards as follows: for any configuration $\confR$ of $H$ and edge $e \in E(H)$, let $f_d^\prime(\confR,e) = f_d(\confR \cup \{w\},e) - \{w\}$. Since $f_d$ is $w$-good, $f_d(\confR \cup \{w\},e)$ contains $w$, so the projection is well-defined and has one guard less. By the second part of \Cref{obs:proj-ext-cvd}, if $\confR \cup \{w\}$ and $f_d(\confR \cup \{w\},e)$ are $e$-compatible in $G$, then $\confR$ and $f_d^\prime(\confR,e)$ are $e$-compatible in $H$. It is easily verified that $f_d^\prime$ is winning, based on the fact that $f_d$ is winning.

\textbf{The Reverse Direction.} We now show that $evc(G) \leqslant evc(H) + 1$. Suppose $f_d$ is a winning strategy for $H$ that uses $evc(H)$ guards. Define $f_d^\prime$ as a partial strategy for $G$ as follows: for any $e \in E(H)$, let $f_d^\prime(\confT,e) = f_d(\confT - \{w\},e) \cup \{w\}$. Note that $f_d^\prime$ is a valid partial strategy because of~\Cref{obs:proj-ext-cvd} combined with the assumption that $e \in E(H)$. Also note that this partial strategy can defend all attacks on edges that are not incident on $w$.

Now suppose some edge $e$ incident on $w$ is attacked, say $e = wy$, when $G$ is in configuration $\confT$. Note that there exists a vertex $x$ with $\sigma(x) = \sigma(w)$ that is occupied in $\confT$, since there are $\ell + 1$ vertices in $C$ whose signature is the same as that of $w$, $\ell + 1 \geqslant 2$, and $C$ is a clique. Therefore, we defend according to $f_d^\prime(\confT,xy)$, but replace $x$ with $w$. We refer to $f_d^\prime(\confT,xy)$ as the original defense and our adaptation of it as the switched defense. Note that the vertex $x$ does not participate in the switched defense at all, since the roles of $x$ and $w$ are switched.

There are two scenarios in terms of how the original defense interacted with the vertex $x$. First, it is possible that the guard on $x$ was moved out of $x$ and some guard was moved into $x$. In the switched defense, $w$ ends up with a guard, and $x$ does not participate in the defense. Thus the resulting configuration is $w$-good and identical to $f_d^\prime(\confT,xy)$. On the other hand, it is possible that the guard on $x$ was moved out of $x$ and no guard was moved into $x$. Thus, at the end of the original defense, $x$ is unoccupied. In this case, at the end of the switched defense, we move the guard on $x$ to $w$. This is a valid move because in the switched defense, $x$ was occupied to begin with, did not participate in the defense, and $w$ was unoccupied at the end. So after this final move, we have again ended up with a configuration that is $w$-good and identical to $f_d^\prime(\confT,xy)$. This concludes the proof.
\end{proof}
\fi

Based on repeated applications of \Cref{thm:cvd-bdd}, we observe that we can start with a graph $G$ and a \CVD{} $D$ and for every signature $\sigma$, remove all but $\ell + 1$ vertices with signature $\sigma$ from any clique $C$ in $G - D$, and be left with an equivalent instance that has the same \CVD{} as the original, but with the guard budget adjusted by the number of vertices removed. This new instance (say $H$) is such that all cliques in $H - D$ have size at most $2^\ell \cdot (\ell + 1)$. Now this instance admits a fixed-parameter tractable algorithm by a reduction to the case of parameterizing by the size of a deletion set into components of bounded size.
\iflong
\begin{proof}[Proof of \Cref{the:evc-fpt-cvd}]
Let $G$ be a graph with a \CVD{} $D$ of size $\ell$. By \Cref{obs:connected-deletion-set-cvd}, we may assume that $G[D]$ is connected, at the cost of increasing $\ell$ by a factor of at most three. There are at most $2^\ell$ distinct signatures $\sigma \subseteq D$. For each clique $C$ in $G - D$ and each signature $\sigma$, if there are more than $\ell + 1$ vertices in $C$ with signature $\sigma$, we repeatedly apply \Cref{thm:cvd-bdd} to remove vertices one at a time until exactly $\ell + 1$ vertices with signature $\sigma$ remain in $C$. Let $w$ denote the total number of vertices removed across all cliques and all signatures. By \Cref{thm:cvd-bdd}, each removal decreases $evc$ by exactly one, so the resulting graph $H$ satisfies $evc(G) = evc(H) + w$.

After this reduction, each clique in $H - D$ has at most $2^\ell \cdot (\ell + 1)$ vertices, since there are at most $2^\ell$ possible signatures and at most $\ell + 1$ vertices per signature. Setting $p := 2^\ell \cdot (\ell + 1)$, we observe that $D$ is also a \SCD{} of $H$ with largest component size at most $p$.

We now argue that every clique $C$ in $H - D$ is self-sufficient in
the sense of \Cref{def:self-sufficient}. Let $s = |V(C)|$.
Because $C$ is a clique any winning configuration of $G$ must contain at least $s-1$
vertices in $C$. Moreover, it is straightforward to verify that every vertex $c \in V(C)$ gives rise to a winning configuration $\confT$ of $G$ with $\confT\cap V(C)=V(C)\setminus \{c\}$, i.e., consider the configuration that contains all vertices of $G$ apart from the the vertex $c$. Therefore, $\{ V(C)\setminus\{c\} \mid c \in V(C)\}$ is the set of minimum configurations of $C$, and since one can easily go between any two of such configurations in one step, it follows that $C$ is self-sufficient.


Since all components of $H - D$ are self-sufficient and have size at most $p$, and the result now follows from~\Cref{the:evc-fpt-scd}.
\end{proof}
\fi


\section{Deletion to Small Components}
\label{sec:scd}

\subsection{A Lower Bound}
To develop a lower bound, we need to examine the components in isolation, but at the same time capture the dynamics of its interaction with the rest of the instance via the deletion set. To this end, we introduce an auxiliary problem that simplifies the outside interaction by empowering the deletion set directly and bringing focus to just one component.

In the {\sc Bootstrapped EVC} problem, we are given a graph $G=(V,E)$, a subset $S\subseteq V$ and a positive integer $t$.  The goal is to determine if there is there a configuration $\confT$ of an eternal vertex cover of $G$ such that $|\confT \cap S|\leq t$?

Note that {\sc Bootstrapped EVC} is a generalization of {\sc EVC} as the two problems coincide when $S=V$. Let $bevc(G,S)=\min \{t~|~(G,S,t)$ is a YES instance for {\sc Bootstrapped EVC}\}. Let $D\subseteq V$ be a vertex set and denote by $\mathcal{D}$ the set of components of $G-D$. Denote the size of a largest component in $\mathcal{D}$ by $p$ and the size of $D$ by $\ell$. Let $C$ be a component in $\mathcal{D}$. We define the graph $G(C)$ that is obtained from $G[D\cup C]$ after adding a set $B$ of pairwise non-adjacent new vertices of size $\ell+p+1$ such that $B$ and $D$ form the two partition classes of a complete bipartite graph. See \Cref{fig:gc} for an example of $G(C)$.

\begin{figure}[h]
\centering
\begin{minipage}{0.5\textwidth}
\begin{tikzpicture}[scale=0.6]
\coordinate (A) at (-5,0);
\coordinate (B) at (-4,1);
\coordinate (C) at (-3.5,0);
\coordinate (D) at (-2.5,1);
\coordinate (E) at (-2,0);
\coordinate (F) at (-1,1);
\coordinate (G) at (-0.5,0);
\coordinate (H) at (0.5,1);
\coordinate (I) at (-4.5,-2);
\coordinate (L) at (0,-1);
\draw (-4.5,0)--(-3.2,-1)( -3,0)--(-3,-1)( -1.8,0)--(-2,-1)(0,0)--(-1,-1);
\draw[fill=white]
(A)rectangle(B)(C)rectangle(D)(E)rectangle(F)(G)rectangle(H)(I)rectangle(L);
\node[left] at (-5,0.5) {$V\setminus D$};
\node[left] at (-4.5,-1.5) {$D$};
\node[above] at (-4.5,1) {$C$};
\end{tikzpicture}
\end{minipage}%
\begin{minipage}{0.5\textwidth}
\begin{tikzpicture}[scale=0.6]
\coordinate (A) at (-5,0);
\coordinate (B) at (-4,1);
\coordinate (C) at (-3.5,0);
\coordinate (D) at (-2.5,1);
\coordinate (E) at (-2,0);
\coordinate (F) at (-1,1);
\coordinate (G) at (-0.5,0);
\coordinate (H) at (0.5,1);
\coordinate (I) at (-4.5,-2);
\coordinate (L) at (0,-1);
\draw[fill=gray!20!white] (-3,0.5)--(2.7,0.2)--(0,-1)--(-4.5,-1)--(-3,0.5)(-4.5,0)--(-3.2,-1);
\draw[fill=white](A)rectangle(B)(I)rectangle(L)(0,0.5)ellipse[x radius=3,y radius=0.7];
\node[above]at (0,1.2) {$B$};
\node[left] at (-4.5,-1.5) {$D$};
\node[above] at (-4.5,1) {$C$};
\end{tikzpicture}
\end{minipage}
\caption{The graph $G$ (left) and $G(C)$ (right) for some component $C$ of $\mathcal{D}$.}
\label{fig:gc}
\end{figure}

A {\it state} $\confR$ for a component $C$ is a subset of vertices of $C$. We say that a component $C\in \mathcal{D}$ is {\it defendable} w.r.t. a state $\confR$, if $D\cup B\cup \confR$ is winning in $G(C)$. Let $\confT$ and $\confT'$ be a configuration of $G$ and $G(C)$, respectively; we say that $\confT'$ {\it dominates} $\confT$ if $\confT\cap (D\cup C)\subseteq \confT'\cap (D\cup C)$.

Let $lb(C)$ be equal to $bevc(G(C),C)$: this value lower bounds the minimum number of guards present in the component $C$ in a configuration of any eternal vertex cover of $G(C)$.

\ifshort\begin{restatable}[$\star$]{lemma}{lbdbound}\fi
\iflong\begin{restatable}{lemma}{lbdbound}\fi\label{obs:lowerbound}
Let $G=(V,E)$ be a graph, $D\subseteq V$, and let
$\mathcal{D}$ be the set of components of $G\setminus D$. Then, $$lb_D(G)=\sum_{C\in \mathcal{D}} lb(C)\leq evc(G).$$
\end{restatable}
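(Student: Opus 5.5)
Take a minimum eternal vertex cover of $G$, that is, a winning configuration $\confT$ with $|\confT| = evc(G)$ and a winning defender strategy $f_d$ from it. Since the components in $\mathcal{D}$ are disjoint and disjoint from $D$, we have $evc(G) = |\confT| \geq \sum_{C\in\mathcal{D}} |\confT\cap V(C)|$. It therefore suffices to show that $|\confT\cap V(C)| \geq lb(C) = bevc(G(C),C)$ for every $C\in\mathcal{D}$. For this we will show that the configuration $\confT' := D\cup B\cup(\confT\cap V(C))$ of $G(C)$ is winning in $G(C)$. Note that $\confT'$ dominates $\confT$ and satisfies $|\confT'\cap V(C)| = |\confT\cap V(C)|$.

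The core step is a simulation argument. Every configuration $\confR'$ of $G(C)$ we maintain has the form $D\cup B'\cup(\confR\cap V(C))$, where $\confR$ is a winning configuration of $G$ reached along $f_d$ and $B'\subseteq B$ has the appropriate size. Any attacked edge $e$ of $G(C)$ not already doubly occupied lies in $G[D\cup V(C)]$, because edges between $D$ and $B$ always have their $D$-endpoint occupied. We may also assume $e$ is not inside $D$, since $D$ stays fully occupied.

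When $e$ is attacked, it is also an attackable edge in $G$ in configuration $\confR$ (because $\confR\cap(D\cup V(C))\subseteq \confR'$), so $f_d$ yields a response $\confR_1$ in $G$, witnessed by vertex-disjoint paths via \Cref{obs:k-disjoint-paths}. We project these paths onto $D\cup V(C)$ and translate each maximal segment. Guard movements inside $V(C)$ are copied verbatim. A guard that leaves $V(C)$ into $D$ goes to a vertex of $D$ that is already occupied in $G(C)$; we push the displaced $D$-guard onward to a free vertex of $B$, which is possible since $D$ is complete to $B$. A guard that enters $V(C)$ from $D$ is replaced by the $D$-guard stepping in, and that $D$-vertex is refilled from some occupied vertex of $B$. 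The size $|B|=\ell+p+1$ guarantees enough free and enough occupied vertices of $B$ for all segments simultaneously, since at most $\ell$ path segments touch $D$. This yields a set of vertex-disjoint paths in $G(C)$ that contains $e$, keeps $D$ full, and produces $\confR_1'=D\cup B''\cup(\confR_1\cap V(C))$. By induction the defender never loses, so $\confT'$ is winning in $G(C)$.

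The main obstacle is making this path translation rigorous. Paths of $f_d$ may weave between $V(C)$ and $D$ several times, and two of them may cross $D$ at the same vertex. We have to check that after rerouting through $B$ the translated paths remain vertex-disjoint and that every vertex is used by at most one guard. One convenient approach is to bypass the path formalism and argue per guard: each guard's single move in $G$ restricted to $D\cup V(C)$ is realised by a move of some guard in $G(C)$, with $B$ acting as an unlimited reservoir attached to every vertex of $D$. Counting shows that $B$ never runs empty or full.
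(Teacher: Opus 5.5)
Your reduction, showing that $D\cup B\cup(\confT\cap V(C))$ is winning in $G(C)$ by simulating $f_d$, is the right one and matches the paper. However, the simulation invariant you chose cannot be maintained.

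You require the $G(C)$-configuration to be exactly $D\cup B'\cup(\confR\cap V(C))$. Guards are conserved and $D$ stays full, so this forces $|B'|=|B|+|\confT\cap V(C)|-|\confR\cap V(C)|$. Since $B$ starts completely full, the required configuration does not exist as soon as $f_d$ leaves $C$ with fewer guards than in $\confT$. Concretely, there is then no free vertex of $B$ to receive the displaced $D$-guard, so your claim that $B$ never runs full is false: it starts full.

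This already happens for a minimum configuration and an attack inside $C$. Let $G$ be the $4$-cycle $d\,c_1c_2c_3$ plus a pendant vertex $a$ at $d$, with $D=\{d\}$ and $C=\{c_1,c_2,c_3\}$.
\begin{itemize}
\item $evc(G)=3$: the only vertex cover of size $2$ is $\{d,c_2\}$, and from it an attack on $dc_1$ cannot be answered.
\item The five vertex covers of size $3$ form an eternal vertex cover, so $\confT=\{a,c_1,c_3\}$ is winning.
\item The attack on $c_1c_2$ may be answered by $c_1\to c_2$, $c_3\to d$, reaching the winning configuration $\{a,d,c_2\}$.
\end{itemize}
In $G(C)$ you would have to go from $\{d\}\cup B\cup\{c_1,c_3\}$ to $\{d\}\cup B'\cup\{c_2\}$ with $|B'|=|B|+1$. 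Because $f_d$ is arbitrary, you cannot rule such responses out.

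The missing idea is to track only domination. Maintain $D\subseteq\confR'$ and $\confR\cap V(C)\subseteq\confR'$, allowing $G(C)$ to keep surplus guards in $C$.
\begin{itemize}
\item Guards that leave $C$ in $G$ are simply not moved in $G(C)$.
\item You only translate the paths of $f_d$'s response that end at a vertex of $V(C)$ empty in $\confR'$. The attacked edge is the last edge of such a path, because its empty endpoint is empty in $G$ too.
\item In each such path, replace the part before the last $D$-vertex by a single occupied vertex of $B$.
\end{itemize}
Now $B$ is used only as a source. Since $|B\cap\confR'|\geq|B|-p=\ell+1$, there is enough supply for the at most $\ell$ paths through $D$. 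Domination also guarantees that every edge attackable in $G(C)$ is attackable in $G$. This is exactly the paper's Type I/II/III construction.

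Separately, $D$--$B$ edges are not immune to attack. An edge is attackable whenever one endpoint is empty, which happens once $C$ holds surplus guards. Such attacks must be answered explicitly, e.g.\ via the path $b'\,d\,b$ with $b'\in B\cap\confR'$.
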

\iflong
\begin{proof}
Let $\confT\subseteq V$ be any configuration of $G$ with $|\confT|<lb_D(G)$. Then, there exists a component $C\in \mathcal{D}$, such that the state of $C$, i.e. $\confT\cap V(C)$, has size smaller than $lb(C)$. We show that $\confT$ is losing in $G$.

Since $|\confT\cap V(C)|<lb(C)$, the configuration $\confT'=D\cup B\cup (\confT\cap V(C))$ is losing in $G(C)$. Therefore, there is a winning strategy $f_a'$ for the attacker on $G(C)$ starting from $\confT'$.

It remains to show that $f_a'$ also wins in $G$. We do so by starting from $\confT$ and showing the contrapositive, i.e., we show that if the defender has a winning strategy in $G$ starting from $\confT$, then the defender has a winning strategy in $G(C)$ starting from $\confT'$.


\newcommand{\PPP}{\mathcal{P}}

Let $f_d$ be a winning strategy for the defender on $G$ starting from $\confT$ and define the strategy $f_d'$ for the defender on $G(C)$ as follows.

Let $\confS_G$ be any configuration in $G$ of size $|\confT|$ and let $\confS$ be a configuration of $G(C)$ with $|\confS|=|\confT'|$ and $D\subseteq \confS$ that dominates $\confS_G$. Now consider an attack on an edge $e$ starting from position $\confS$ in $G(C)$.
If $e$ is an edge in $E(G[D\cup B])$, then $f_d'(\confS,e)$ can be set to a configuration isomorphic to $\confS$. Therefore, we can assume that $e$ is incident to at least one vertex in $C$, which implies that $e$ is also an in $G$. Let $\confS_G'=f_d(\confS_G,e)$.
 We will show next that the defender can reach a position $\confS'$ from $\confS$ (while defending $e$) such that $\confS'$ dominates $\confS'_G$ and $D \subseteq \confS'$, and this means we can set $f_d'(\confS,e)=\confS'$.

Let $\PPP_G$ be the set of vertex-disjoint paths that witness that $\confS_G$ and $\confS_G'$ are $e$-compatible and whose existence is guaranteed by \Cref{obs:k-disjoint-paths}. We prove how to obtain a set of vertex-disjoint paths $\PPP$ from $\PPP_G$ that shows that $\confS$ and $\confS'$ are $e$-compatible.
Let $P=(v_1,\ldots,v_t)\in \PPP_G$, for some $t\geq 2$. Without loss
of generality, we can assume that $v_1\in \confS_G\setminus \confS_G'$ and $v_t\in \confS_G'\setminus \confS_G$. Now we do classify $P$ depending on the location of $v_1$ and $v_t$ as follows. The path $P$ is of
\begin{itemize}
\item {\bf Type I} if $v_1\in V(C)$ and $v_t\in V(C)\setminus \confS$;
\item {\bf Type II} if $v_1\in D\cup V(C')$, for some component $C'\in \mathcal{D}$ and $v_t\in V(C)\setminus \confS$;
\item {\bf Type III} if $P$ is not of Type I or II.
\end{itemize}
See also \Cref{fig:types}. Moreover, let $1\leq d\leq d'\leq t$ be the minimum and maximum, respectively, integers such that $v_d$ and $v_{d'}$ belong to $D$.

\begin{figure}
\centering
\begin{minipage}[t]{0.3\textwidth}
\centering
\begin{tikzpicture}[scale=0.8, baseline={(0,0)}]
\coordinate (A1) at (-1,1.5);
\coordinate (A2) at (0,0.5);
\coordinate (A3) at (1,0.5);
\coordinate (A4) at (2,1.5);
\coordinate (B) at (2,-0.5);
\coordinate (C2) at (-1,1);
\coordinate (C3) at (-1,0);
\coordinate (D2) at (2,1);
\coordinate (D3) at (2,0);
\draw[dotted] \foreach \i in {2,3}{(C\i)--(D\i)};
\draw (A1)--(A2)--(A3)--(A4);
\draw[fill=black] (A1) circle[radius=2pt](A2) circle[radius=2pt](A3) circle[radius=2pt];
\draw[fill=white] (A4) circle[radius=2pt];
\draw[color=white] (B) circle[radius=2pt];
\node[below] at (A1) {$v_1$};
\node[below] at (A4) {$v_t$};
\node at (-2,1.5) {$V(C)$};
\node at (-2,0.5) {$D$};
\node at (-1.5,-0.5) {$\mathcal{D}\setminus \{C\}$ or $B$};
\end{tikzpicture}
\end{minipage}
\hfill
\begin{minipage}[t]{0.3\textwidth}
\centering
\begin{tikzpicture}[scale=0.8, baseline={(0,0)}]
\coordinate (A1) at (-2,1.5);
\coordinate (A2) at (-1,0.5);
\coordinate (A3) at (0,0.5);
\coordinate (A34) at (0.5,1.5);
\coordinate (A4) at (1,0.5);
\coordinate (A5) at (2,0.5);
\coordinate (A6) at (3,1.5);
\coordinate (B2) at (-0.7,-0.5);
\coordinate (B3) at (-0.3,-0.5);
\coordinate (B4) at (1.3,-0.5);
\coordinate (B5) at (1.7,-0.5);
\coordinate (R) at (0.5,-0.5);
\coordinate (C2) at (-2,1);
\coordinate (C3) at (-2,0);
\coordinate (D2) at (3,1);
\coordinate (D3) at (3,0);
\draw[dotted] \foreach \i in {2,3}{(C\i)--(D\i)};
\draw (A1)--(A2)(A3)--(A34)--(A4)(A5)--(A6);
\draw[color=blue] (A2)--(B2)--(B3)--(A3)(A4)--(B4)--(B5)--(A5);
\draw[color=red] (A2)--(R)--(A5);
\draw[fill=black] \foreach \i in {1,2,3,34,4,5}{(A\i) circle[radius=2pt]};
\draw[fill=blue] \foreach \i in {2,3,4,5}{(B\i) circle[radius=2pt]};
\draw[fill=red] (R) circle[radius=2pt];
\draw[fill=white] (A6) circle[radius=2pt];
\node[below] at (A1) {$v_1$};
\node[above] at (A5) {$v_{d'}$};
\node[above] at (A6) {$v_t$};
\node[above] at (A2) {$v_d$};
\node[color=red,above] at (R) {$b$};
\end{tikzpicture}
\end{minipage}%
\hfill
\begin{minipage}[t]{0.3\textwidth}
\centering
\begin{tikzpicture}[scale=0.8, baseline={(0,0)}]
\coordinate (A34) at (0.5,1.5);
\coordinate (A3) at (0,0.5);
\coordinate (A4) at (1,0.5);
\coordinate (A5) at (2,0.5);
\coordinate (A6) at (3,1.5);
\coordinate (B2) at (-0.7,-0.5);
\coordinate (B3) at (-0.3,-0.5);
\coordinate (B4) at (1.3,-0.5);
\coordinate (B5) at (1.7,-0.5);
\coordinate (R) at (0.5,-0.5);
\coordinate (C2) at (-0.7,1);
\coordinate (C3) at (-0.7,0);
\coordinate (D2) at (3,1);
\coordinate (D3) at (3,0);
\draw[dotted] \foreach \i in {2,3}{(C\i)--(D\i)};
\draw (A3)--(A4)(A5)--(A6);
\draw[color=blue] (B2)--(B3)--(A3)(A4)--(B4)--(B5)--(A5);
\draw[color=red] (R)--(A5);
\draw[fill=black] \foreach \i in {3,4,5}{(A\i) circle[radius=2pt]};
\draw[fill=blue] \foreach \i in {2,3,4,5}{(B\i) circle[radius=2pt]};
\draw[fill=red] (R) circle[radius=2pt];
\draw[fill=white] (A6) circle[radius=2pt];
\node[above] at (B2) {$v_1$};
\node[below] at (A6) {$v_t$};
\node[above] at (A5) {$v_{d'}$};
\node[color=red,above] at (R) {$b$};
\end{tikzpicture}
\end{minipage}
\caption{From left to right, examples of paths of Type I (whose vertex set is contained in $V(C)\cup D$), Type I (whose vertex set intersects $\mathcal{D}\setminus \{C\}$), and Type II, related to the construction of $\PPP$ in \Cref{obs:lowerbound}. In each of these examples, the black, blue and red parts belong to both $G$ and $G(C)$, only to $G$ and only to $G(C)$, respectively. The colour-filled vertices hold a guard, while the white ones are without a guard.}
\label{fig:types}
\end{figure}

For every $P \in \PPP_G$, we add the following paths to $\PPP$:
\begin{itemize}
\item If $P$ is of Type I, then add $P$ to $\PPP$ if $V(P)\subseteq D\cup
  V(C)$ and add the path $v_1\ldots v_dbv_{d'}\ldots v_t$ to $\PPP$ otherwise, for some $b\in B\cap \confS$.
\item If $P$ is of Type II, then add the path $bv_{d'}\ldots v_t$, for some $b\in B\cap \confS$, to $\PPP$.
\item Otherwise, add no path to $\PPP$.
\end{itemize}
Note that $|B\cap \confS|\geq \ell$: indeed, $|B\cap \confS|\geq |\confT'|-|C|-|D|=|B|+|D|+|\confT\cap V(C)|-|C|-|D|=\ell+1+|\confT\cap V(C)|\geq \ell$.
Since $|B\cap \confS|\geq \ell$ and because there are at most $\ell$
paths of $\PPP$ that use a vertex in $D$, there are sufficiently many vertices $b$ in $B\cap \confS$ that can be used to construct the above paths.




By construction, the paths of $\PPP$ are contained in $G(C)$ and are disjoint. Let $\confS'$ be the configuration obtained from $\confS$ by applying the paths in $\PPP$; note that this can be done because, by construction, for every such that $P$, every vertex apart from the last endpoint (denoted above by $v_t$) is not in $\confS$. Note that $D\subseteq \confS'$, because $D \subseteq \confS$ and the paths in $\PPP$ only move out guards from $B\cup V(C)$. Furthermore, $\confS'$ dominates $\confS_G'$ because $\confS$ dominates $\confS_G$ and because $(\confS_G'\setminus \confS)\cap V(C)$ obtain a guard from one of the paths in $\PPP$.
\end{proof}
\fi

\subsection{An Upper Bound}

Our first aim is to define a configuration $\confT^*$ that is winning
for the defender on $G$ and that does not use too many guards compared
to the lower bound provided in the previous subsection. Before doing
so we need to provide a mechanism that will allow us to assign a
certain number of private components for every $d \in D$; these
components will be used to supply guards to $d$ if those are taken
away to defend an attack. In particular, our aim is to assign
$g(p)=2^p+1$
private components for every $d \in D$ such that
each of those components is adjacent to $d$ (and can therefore be
used to supply a guard to $d$). Unfortunately, this might not always
be possible, e.g., if one vertex or a subset of vertices in $D$ does
not have sufficiently many adjacent components. To deal with this
situation we use a known generalisation of Hall sets as described below.

A subset $D'\subseteq D$ is an {\it $i$-Hall set} if $D'$ is adjacent to less than $i\cdot |D'|$  components in $\mathcal{D}$ for some $i\geq 1$. Let $D^*$ be the set obtained from $D$ as follows. In the beginning, we set $D^*=D$ and $G^*=G$. Then, at every step, if $D^*$ still contains a $g(p)$-Hall set w.r.t. $G^*$, let $D'\subseteq D^*$ be an inclusion-wise minimal $g(p)$-Hall set w.r.t. $G^*$. We then remove $D'$ from $D^*$ and $G^*$ and we remove components adjacent to $D'$ from $G^*$. We then repeat the process until $D^*$ no longer has any $g(p)$-Hall sets remaining w.r.t. $G^*$. Note that, at the end of this process, $D^*$ has no $g(p)$-Hall set and $D\setminus D^*$ has less than $|D\setminus D^*|g(p)$ adjacent components in $G$.

The following well-known generalisation of Hall's marriage theorem shows that every vertex in $D^*$ can be mapped to its own $g(p)$ components adjacent to it.

\begin{theorem}[{\cite{Di12}}]\label{thm:hmth}
A bipartite graph $G = (A \cup B, E)$ contains a subgraph where every vertex in $A$ has degree exactly $k$ (and every vertex in $B$ has degree at most 1) if and only if for all $X\subseteq A$, $|N(X)|\ge k|X|$.
\end{theorem}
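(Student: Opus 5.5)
We prove the theorem (a generalised Hall theorem for $k$ private neighbours) by reducing it to the classical Hall marriage theorem through vertex duplication. The forward direction is immediate. Suppose a subgraph $H$ exists in which every vertex of $A$ has degree exactly $k$ and every vertex of $B$ has degree at most $1$. Take any $X \subseteq A$. The sets $N_H(x)$ for $x \in X$ each have size $k$, and they are pairwise disjoint because each vertex of $B$ has $H$-degree at most $1$. Hence $|N(X)| \geq |N_H(X)| = k|X|$.

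For the reverse direction we build an auxiliary bipartite graph $G' = (A' \cup B, E')$. Here $A'$ contains $k$ copies $a^{(1)},\ldots,a^{(k)}$ of every $a\in A$, and each copy $a^{(j)}$ is adjacent to exactly $N_G(a)$. We then check Hall's condition in $G'$. Let $Y \subseteq A'$ be arbitrary, and let $X \subseteq A$ be the set of originals having at least one copy in $Y$. Then $N_{G'}(Y) = N_G(X)$, and $|Y| \leq k|X|$. Using the hypothesis, $|N_{G'}(Y)| = |N_G(X)| \geq k|X| \geq |Y|$. By Hall's theorem, $G'$ therefore has a matching $M$ that saturates $A'$.

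Finally we translate $M$ back to $G$. Let $H$ consist of the edges $ab$ with $a^{(j)}b \in M$ for some $j$. Each vertex $b \in B$ is matched at most once, so its degree in $H$ is at most $1$. The $k$ copies of $a$ are matched to $k$ distinct vertices of $B$, all lying in $N_G(a)$, so $a$ has degree exactly $k$ in $H$. No two copies can produce the same edge, since they are matched to distinct vertices. This gives the required subgraph.

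There is no real obstacle in this argument. The only points that need care are the identity $N_{G'}(Y) = N_G(X)$ and the observation that distinct copies of the same vertex are matched to distinct vertices of $B$. It is this second fact that guarantees $a$ has degree exactly $k$ in $H$, rather than fewer because of repeated edges.
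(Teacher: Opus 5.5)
Your proof is correct in the finite setting, which is the one the paper works in. The forward direction via pairwise disjoint $k$-sets of private neighbours is sound, and so is the reverse direction via $k$ copies of each vertex of $A$ followed by Hall's theorem. The paper gives no proof of its own and simply cites Diestel; your duplication reduction to Hall's theorem is the standard derivation of this generalisation.
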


Let us define a bipartite graph $H=(D^*\cup \mathcal{C},F)$ as follows: $\mathcal{C}$ is in a one-to-one correspondence with the components of $\mathcal{D}$ that are not adjacent with a vertex in $D\setminus D^*$. Denote by $v_C$ the vertex of $\mathcal{C}$ corresponding to the component $C$ of $\mathcal{D}$ that is not adjacent with a vertex in $D\setminus D^*$. Finally, for a vertex $d\in D^*$ and $v_C\in \mathcal{C}$, $dv_C\in F$ if and only if $d$ is adjacent with the component $C$ in $G$. Since $D^*$ has no $g(p)$-Hall set w.r.t. $G^*$, and equivalently $H$, by \Cref{thm:hmth}, there is a subgraph $H'$ of $H$ where every vertex in $D^*$ has degree exactly $g(p)$.

Note that for the vertices in $D\setminus D^*$, we will ensure that all their adjacent components are always fully occupied by guards, which is possible because there are not too many such components.

We are now ready to assign the components reserved for every vertex in $D$. That is, for every $d \in D^*$, we let $S(d)$ be the set of $g(p)$ private components assigned to $d$ by $H'$, i.e., $\{C\in \mathcal{D}~|~v_C\in N_{H'}(d)\}$. Moreover, note that the vertices in $D\setminus D^*$ altogether have at most $|D\setminus D^*|g(p)$ many components adjacent to them; for convenience, we set $S(d)$ to be the set of all components adjacent to $d$ for every $d \in D\setminus D^*$. Finally, let $\mathcal{M}=\mathcal{D}\setminus (\bigcup_{d\in D} S(d))$.


Given the assignment to private components, we are now ready to
construct the configuration $\confT^*$, which as we will show is a
winning configuration for the defender on $G$, as follows:
\begin{itemize}
\item[(a)] $D\subseteq \confT^*$;
\item[(b)] for every component $C\in \bigcup_{d\in D}S(d)$, it holds that $V(C)\subseteq \confT^*$;
\item[(c)] for every component $C\in \mathcal{M}$, it holds that
  $\confT^*\cap V(C)$ is a minimal defendable state in $G(C)$ using
  $lb(C)$ guards.
\end{itemize}
Note that the difference between $|\confT^*|$ and the lower bound $lb_D(G)=\sum_{C \in \mathcal{D}}lb(C)$ given in \Cref{obs:lowerbound} is at most $\ell+\sum_{C\in \bigcup S(d)} (|C|-lb(C))\leq \ell+\ell p(2^p+1)=\ell(1+p(2^p+1))=\Delta(\ell,p)$. Indeed: in $D$, $lb_D(G)$ counts 0 while $|\confT^*|$ counts $\ell$, for every component $C \in \bigcup_{d \in D} S(d)$, $lb_D(G)$ counts $lb(C)$ while $|\confT^*|$ counts $|C|$, and for every component $C\in \mathcal{M}$, $lb_D(G)$ and $|\confT^*|$ both count $lb(C)$.

\newcommand{\nosteps}{\textsf{NoS}}

It therefore merely remains to show that $\confT^*$ is winning for the
defender on $G$.
Let $C \in \mathcal{D}$, let $e$ be an edge incident to $V(C)$ in $G$, let $\confC_1
\subseteq V(C)$ be a defendable configuration for $C$ and let
$\confC_2=f_d(\confC_1\cup D \cup B,e)$ for some winning strategy $f_d$
for the defender on $G(C)$ starting from $\confC_1\cup D\cup B$.
Moreover, let
$\mathcal{P}$ be a set of paths that witnesses that $\confC_1\cup B\cup
D$ and $\confC_2$ are $e$-compatible.
Let $\mathcal{P}'$
be the set of paths obtained from $\mathcal{P}$ as follows. For every path
$P=(v_1,\dotsc,v_t) \in \mathcal{P}$ with $t\geq 2$, we do the
following. Without loss of generality, we can assume that $v_t\in
\confC_2\setminus (\confC_1\cup D\cup B)$. If $V(P)\subseteq V(C)$, we add $P$ to
$\mathcal{P}'$. Otherwise, $P$ contains a vertex in $D$ and we let $i$
be the maximum index such that $v_i \in V(D)$. If $v_t \in V(C)$, then
we add the path $(v_i,\dotsc, v_t)$ to $\mathcal{P}'$ and this
concludes the definition of $\mathcal{P}'$. Let $\confC_2'$
be the configuration obtained from $\confC_1\cup D\cup B$ using the
paths in $\mathcal{P}'$, then:
\begin{itemize}
\item $\confC_2\cap V(C) \subseteq
  \confC_2'\cap V(C)$ and therefore $\confC_2'\cap V(C)$ is defendable.
\item $\confC_1 \cup D^+$ and $\confC_2'\cap V(C)$ and are $e$-compatible
  in $G[C\cup D]-E(G[D])$ as
  witnessed by the paths in $\mathcal{P}'$, where $D^+$ is the subset
  of vertices in $D$ that are endpoints of the paths in
  $\mathcal{P}'$; note that $e$ is contained in a path of
  $\mathcal{P}'$ because $e$ has one endpoint $v$ in $V(C)$ with $v
  \notin \confC_1$.
\end{itemize}
\newcommand{\revc}{\textsf{revc}}
In the following, we denote by $\revc(\confC_1,f_d,e)$ the pair
$(D^+,\confC_2'\cap V(C))$.

We say that a configuration $\confC \subseteq V(C)$ is
\emph{reversible to a state $\confC_0\subseteq V(C)$} if there is a finite sequence
$((D_1,\confC_1),\dotsc,(D_x,\confC_x))$ of pairs
$(D_i,\confC_i)$ such that:
\begin{itemize}
\item $(D_1,\confC_1)=\revc(\confC_0,f_d,e)$ for some winning
  strategy $f_d$ for the defender on $G(C)$ and edge $e$ incident to
  $V(C)$ in $G$,
\item for every $i$ with $1\leq i < x$, it holds that $(D_{i+1},\confC_{i+1})=\revc(\confC_i,f_d,e)$ for some winning
  strategy $f_d$ for the defender on $G(C)$ and edge $e$ incident to
  $V(C)$ in $G$ and
\item $\confC_x=\confC$.
\end{itemize}
We say that a configuration $\confC$ is \emph{reversible} if it is
reversible to some minimal state, i.e., a defendable state using the
minimum number $lb(C)$ guards, and denote by $\nosteps(\confC)$ the
number of steps required by (length of) the witnessing sequence (that
brings the configuration back to a minimal state).

In the following, we show that the defender can win from $\confT^*$ by employing a strategy that ensures that the following properties hold for every configuration $\confT$ reached during the course of the game:
\begin{itemize}
\item[(1)] $D\subseteq \confT$;
\item[(2)] for every $d\in D^*$ and for every component $C\in S(d)$, $\confT$ contains all but at most one vertex of $C$;
\item[(3)] for every $d\in D\setminus D^*$ and for every component $C\in S(d)$, $V(C)\subseteq \confT$;
\item[(4)] for every component $C\in \mathcal{M}$, the state
  $\confT\cap V(C)$ is defendable;
\item[(5)] for every component $C\in \mathcal{M}$, the state
  $\confT\cap V(C)$ is reversible.
\item[(6)] $\sum_{C \in \mathcal{M}}\nosteps(\confT\cap V(C)) \leq g(p)$.
\end{itemize}
Note that the following property can be obtained from property (6) by
observing that the only components that can have received guards from
components in $S(d)$ through $d \in D^*$ are the components in
$\mathcal{M}$ that are not in a minimal state, and every step those
components are away (in terms of $\nosteps(*)$) from being in the minimal state can take at most
one guard from a component in $S(d)$. 
\begin{observation}\label{obs:ub-suf-full}
  Let $\confT$ be any configuration of $G$ satisfying (1)--(6), then
  for every $d \in D^*$ there are at least $|S(d)|-r$ full components
  in $S(d)$, where $r=\sum_{C \in \mathcal{M}}\nosteps(\confT\cap V(C))$.
\end{observation}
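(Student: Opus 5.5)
The plan is a counting (accounting) argument. We track, for every $d\in D^*$, how many components of $S(d)$ are currently not full, and we charge each such deficit to a unit of $\nosteps$ carried by some component of $\mathcal{M}$. Concretely, we prove by induction over the rounds of the game an invariant slightly stronger than the statement.

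For every $d\in D^*$, the number of non-full components in $S(d)$ is at most the number of guards that have been routed from $\bigcup_{C\in S(d)}V(C)$ through $d$ into components of $\mathcal{M}$ and are still ``owed''. This total owed number, summed over all $d$, is at most $r=\sum_{C\in\mathcal{M}}\nosteps(\confT\cap V(C))$. Initially, in $\confT^*$, every component of $S(d)$ is full and every $C\in\mathcal{M}$ is in a minimal state. Hence $\nosteps=0$ everywhere and the invariant holds trivially.

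For the inductive step we use the structure of the defender's moves, which are built from the paths $\mathcal{P}'$ in the definition of $\revc$. By property (1), $D\subseteq\confT$ always holds. So whenever a guard leaves some $d\in D^*$ towards a component $C\in\mathcal{M}$ (the vertex $D^+$ in $\revc$), the defender refills $d$ by pulling one guard from a full component in $S(d)$. By property (2), that component then misses exactly one vertex. Each such application of $\revc$ increases $\nosteps(\confT\cap V(C))$ by exactly one, because the new state is reversible via a sequence one longer. Thus every newly created non-full component of $S(d)$ is matched with a fresh unit of $\nosteps$.

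Conversely, when a component of $\mathcal{M}$ moves one step back towards its minimal state, its $\nosteps$ decreases by one. The surplus guard can be sent back through some $d$ to refill a non-full component of $S(d)$. If instead the defender does not refill, the inequality only weakens in the safe direction, so it suffices to show that the number of non-full components never exceeds the outstanding charge.

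Attacks inside a component $C\in S(d)$ only permute guards within $C\cup\{d\}$ and keep ``all but at most one vertex'' occupied. Such attacks do not change the count of non-full components beyond what is already charged. Similarly, components of $S(d)$ for $d\in D\setminus D^*$ stay full by (3) and are irrelevant.

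Combining these facts gives that the number of non-full components in $S(d)$ is at most $r$, so at least $|S(d)|-r$ components of $S(d)$ are full.

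The main obstacle is making the charging rigorous. We must argue that a single unit of $\nosteps$ in $\mathcal{M}$ corresponds to at most one guard taken from at most one $S(d)$, i.e., that $|D^+|$ contributions per step are correctly attributed. We must also check that a step of $\revc$ cannot simultaneously draw guards through several vertices of $D$ while increasing $\nosteps$ by only one. To handle this, we would count $\nosteps$ with multiplicity $|D^+|$ per step, or refine the invariant to charge each $d$ separately. If needed, we would replace $r$ by the per-$d$ charge and note that this is still bounded by the total sum.
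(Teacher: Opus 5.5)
This is the paper's own charging argument: the paper justifies the observation only by noting that guards missing from $S(d)$ can have been drawn through $d$ solely by components of $\mathcal{M}$ that are not in a minimal state, and that each step counted by $\textsf{NoS}$ draws at most one guard through a fixed $d$. Your induction over rounds from $\confT^*$ is the right way to make this precise (properties (1)--(6) alone do not bound the number of non-full components), provided you use the per-$d$ invariant from your last paragraph throughout, since the ``summed over all $d$, at most $r$'' claim already fails when a single step has $|D^+|\geq 2$. In the inductive step, a reversal of $C'$ in the paper's strategy always comes paired with a new $\textsf{revc}$-step of the attacked component $C$, so for $d\in D^+\cap D^+_x$ the charge transfers from the deleted last step of $C'$ to the new step of $C$; a reversal that lowered $\textsf{NoS}$ without refilling $S(d)$ could break your invariant rather than weaken it ``in the safe direction'', and your ``increases by exactly one'' needs $\textsf{NoS}$ to be the length of the witnessing sequence the strategy maintains, not of a shortest one.
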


Let $\confT$ be a configuration reached during the course of the game
starting from $\confT^*$,
the rest of the proof shows that there exists a defender strategy $f_d$ such that, for every edge $e$,
$f_d(\confT,e)$
satisfies (1)--(6). Note that using induction and taking into account
that $\confT^*$ satisfies (1)--(6),
we can assume that $\confT$ satisfies (1)--(6).

{\bf Case 1: $e$ is in $G[D\cup C]$ for some component $C\in
  \mathcal{M}$.} By property $(4)$, there is a winning strategy $f_d'$
for the defender on $G(C)$ starting from $\confC\cup D\cup B$, where
$\confC=\confT\cap V(C)$.
Let
$(D^+,\confC')=\revc(\confC,f_d',e)$ and let $\mathcal{P}$ be a set of paths
witnessing that $\confC\cup D^+$ and $\confC'$ are
$e$-compatible. 

Suppose first that there is no component $C'\in\mathcal{M}\setminus\{C\}$ such
that $\nosteps(C')>0$. For every $d \in D^+$, let $C_d$ be a component
in $S(d)$ with $\confT\cap V(C_d)=V(C_d)$, i.e., a \emph{full} component, and let $n_d$ be a neighbor of $d$ in $C_d$. Note
that $C_d$ exists because of \Cref{obs:ub-suf-full}. Let
$\mathcal{P}'$ be the set of paths obtained from $\mathcal{P}$ after
adding the edge $dn_d$ to the unique path with endpoint $d\in
D^+$ in $\mathcal{P}$ for every $d \in D^+$.
Let $\confT'$ be the configuration obtained after applying the
paths in $\mathcal{P'}$ to $\confT$ and set
$f_d(\confT,e)=\confT'$. Then, $\confT'$ satisfies (1)--(6), because:
\begin{itemize}
\item[(1)] every guard removed from $d \in D^+$ was resupplied
  by $n_d$.
\item[(2)] we only removed at most one guard from full
  components in $S(d)$.
\item[(3)] $D^+\subseteq D^*$ since the component
  $C$ is in $\mathcal{M}$ and adjacent to every $d \in D^+$. 
\item[(4)] the only component in $\mathcal{M}$ that changed
  its state is $C$ and $C$ changed state into the defendable $\confC'$.
\item[(5)] As (4) and taking into account that $\confC$ was
  reversable.
\item[(6)] $C$ is the only component with $\nosteps(\confC)\neq 0$
  and $C$ has at most $2^p$ configurations and therefore
  $\nosteps(\confC)\leq 2^p\leq g(p)$.
\end{itemize}

So suppose now that there is a component $C' \in \mathcal{M}\setminus\{C\}$ such
that $\nosteps(C')>0$ and let
$((D_1^+,\confC_1),\dotsc,(D^+_x,\confC_x))$ be the witnessing
sequence for reversability for $C'$, which exists due to property (5).
Then, $\confC_{x-1}\cup D^+_x$ is
compatible with $\confC_x=\confT\cap V(C')$ and this is witnessed by the set
$\mathcal{P}'$ of paths. Our aim is to reverse the step from
$\confC_x$ to $\confC_{x-1}\cup D^+_x$ and use the
guards in $D^+_x$ to either replenish components in $S(d)$ for every $d \in
D^+_x\setminus D^+$ or to help with the attack on $C$ for every $d \in
D^+\cap D^+_x$.

For every $d \in D^+\setminus D^+_x$, let $C_d \in S(d)$ be a
full component, i.e., a component with $\confT\cap V(C_d)=V(C_d)$, which exists due to
\Cref{obs:ub-suf-full}.
Moreover, for every $d \in
D^+_x\setminus D^+$, let $C_d \in S(d)$ be a component with $|\confT\cap
V(C_d)|=|V(C_d)|-1$; note that such a component exists because
when the state of component $C'$ changed from $\confC_{x-1}$ to $\confC_{x}$ the
guards on $D^+_x$ were replenished using guards in $S(d)$ for $d \in
D^+_x$.
Let $\mathcal{P}''$ be obtained from
$\mathcal{P}$ and $\mathcal{P}'$ as follows. First, $\mathcal{P}''$
contains all paths in $\mathcal{P}\cup \mathcal{P}'$ that do not
contain any vertex in $D$. Moreover, for every vertex $d \in
D^+\setminus D^+_x$, let $n_d$ be the neighbor of $d$ in $C_d$ and let
$P \in
\mathcal{P}$ be the unique path containing $d$ as an endpoint. Then,
we add the path obtained from $P$ after adding the edge $dn_d$ to
$\mathcal{P}''$. For every $d \in D^+\cap D^+_x$, let $P\in
\mathcal{P}$ and $P'\in\mathcal{P}'$ be the unique paths in
$\mathcal{P}$ and $\mathcal{P}'$ with endpoint $d$. Then, we add the
concatenation of $P$ and $P'$ to $\mathcal{P}''$. Finally, let $d
\in D^+_x\setminus D^+$ and let $P \in \mathcal{P}'$ be the unique
path with endpoint $d$, let $n_d$ be a neighbor of $d$ in
$C_d$, and let $n_d'$ be the unique vertex in $V(C_d)\setminus
\confT$.
Then, we add the path obtained from $P$ after adding the path
$dn_dn_d'$ to $\mathcal{P}''$. Let $\confT'$ be the configuration
obtained from $\confT$ after applying the paths in
$\mathcal{P''}$ and set
$f_d(\confT,e)=\confT'$.
Note that $\confT'$ satisfies (1)--(6) because:
\begin{list}{}{%
  \settowidth{\labelwidth}{(4),(5)}%
  \setlength{\labelsep}{0.5em}%
  \setlength{\leftmargin}{\labelwidth}%
  \addtolength{\leftmargin}{\labelsep}%
  \renewcommand{\makelabel}[1]{#1\hfil}%
}
\item[(1)] every guard removed from $d \in D^+\setminus D^+_x$ was resupplied
  by $n_d$ and every guard removed from $D^+_x$ was resupplied
  from $C'$.
\item[(2)] we only removed at most one guard from full
  components in $S(d)$.
\item[(3)] $D^+\cup D^+_x\subseteq D^*$ since the components
  $C$ and $C'$ together are adjacent to every $d \in D^+\cup D^+_x$.
\item[(4),(5)] the only two components in $\mathcal{M}$ that changed
  their state are $C$ and $C'$, and both components were changed to a
  defendable and reversible configuration.
\item[(6)] the number of steps required for reversal increased by one for $C$
  but decreased by one for $C'$ and remained the same for all other components.
  Therefore, $R=R'$, where $R=\sum_{C \in \mathcal{M}}\nosteps(\confT\cap V(C))$
  and $R'=\sum_{C \in \mathcal{M}}\nosteps(\confT\cap V(C)')$.
\end{list}

{\bf Case 2: $e$ is in $G[D\cup C]$ for some component $C\in S(d)$,
  for some $d\in D^*$.}
In this case $C\setminus \confT=\{v\}$ for some $v \in e$. Let
$e=uv$. If $u \in V(C)$, then let $\confT'$ be the configuration
obtained from $\confT$ after moving the guard from $u$ to
$v$. Clearly, $\confT'$ satisfies (1)--(6) and we can set
$f_d(\confT,e)=\confT'$. Otherwise, $u \in D$. In this case let $C'
\in S(d)\setminus\{C\}$ be a full component, which exists due to
\Cref{obs:ub-suf-full},
and let $w$ be a neighbor of $u$ in
$C'$. Finally, let $\confT'$ be the configuration obtained from $\confT$
after moving the guard on $w$ to $v$ via the path $wuv$. Clearly,
$\confT'$ satisfies (1)--(6) and we can set $f_d(\confT,e)=\confT'$.

{\bf Case 3: $e$ is in $G[D\cup C]$ for some component $C\in S(d)$, for some $d\in D\setminus D^*$.}
By construction, both endpoints of $e$ belong to $\confT$. Thus, the defence is trivial.

We showed $\confT^*$ is part of a set of configurations which forms an eternal vertex cover of $G$. Thus, we obtain that $evc(G)\leq |\confT^*|=ub(G)$ and therefore.

\begin{lemma}\label{lem:up-low-dif}
Let $G=(V,E)$ be a graph and $D$ be a vertex set. Denote with $p$ the size of a largest component of $G-D$ and $\ell$ the size of $D$. Then, $evc(G)\leq ub(G)=lb_D(G)+\Delta(\ell,p)=lb_D(G)+\ell(1+p(2^p+1))$.
\end{lemma}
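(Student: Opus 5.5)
The plan is to exhibit the configuration $\confT^*$ from the construction above as a winning starting position and then count its guards. First I fix the private-component assignment. I run the iterative removal of inclusion-wise minimal $g(p)$-Hall sets to obtain $D^*$, and then I apply \Cref{thm:hmth} to the bipartite graph $H$. This gives every $d\in D^*$ exactly $g(p)=2^p+1$ private adjacent components $S(d)$. For $d\in D\setminus D^*$ I take $S(d)$ to be all adjacent components, of which there are fewer than $|D\setminus D^*|g(p)$ in total. Altogether the reserved components number at most $\ell g(p)$, and each has at most $p$ vertices.

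Next I define $\confT^*$ by items (a)--(c): all of $D$, every reserved component in full, and a minimum defendable state of size $lb(C)$ on each $C\in\mathcal{M}$. The counting is straightforward. Compared with $lb_D(G)=\sum_C lb(C)$ (which is a valid lower bound by \Cref{obs:lowerbound}, although here only the accounting matters), $\confT^*$ pays an extra $\ell$ for $D$ and at most $|C|\le p$ for each of the at most $\ell(2^p+1)$ reserved components. Hence $|\confT^*|\le lb_D(G)+\ell(1+p(2^p+1))=lb_D(G)+\Delta(\ell,p)$.

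It remains to show that $\confT^*$ is winning, and I will do this by induction over the invariants (1)--(6). It suffices to give, for every reachable $\confT$ satisfying (1)--(6) and every attacked edge $e$, an $e$-compatible $\confT'$ that again satisfies (1)--(6); I argue via path families as in \Cref{obs:k-disjoint-paths}. Edges inside $G[D]$ are never attackable, because of (1). An edge touching a reserved component $C\in S(d)$ with $d\in D\setminus D^*$ is fully covered by (3). An edge touching $C\in S(d)$ with $d\in D^*$ is handled as in Case~2: either I shift the guard inside $C$, or I pull a guard along $w\,u\,v$ from another full component of $S(d)$. By \Cref{obs:ub-suf-full}, such a full component exists as long as $\sum\nosteps\le g(p)-1$.

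The main case is an attack on $C\in\mathcal{M}$. Here I simulate the winning $G(C)$ strategy and truncate its paths via $\revc$. Each guard that leaves some $d\in D^+$ is refilled along an edge $d\,n_d$ from a full private component. To keep invariant (6) bounded, I simultaneously undo one reversal step of any other $C'\in\mathcal{M}$ that has a positive $\nosteps$. Its released guards on $D^+_x$ are either concatenated into the paths of $C$ (for $d\in D^+\cap D^+_x$) or pushed into the deficient private component through $d\,n_d\,n_d'$. I expect the main obstacle to be checking that these combined paths remain vertex-disjoint and that the potential $\sum\nosteps$ never exceeds $g(p)$. Disjointness is needed at the vertices of $D$ and at $n_d$, and for the latter I would use distinct private components for distinct $d$. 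For the potential, when no other component is away from a minimal state, the bound $\nosteps\le 2^p$ follows from $C$ having at most $2^p$ states; otherwise the potential is preserved. This gives $evc(G)\le|\confT^*|$.
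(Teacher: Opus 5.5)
Your proposal follows the paper's own proof almost step for step: the same Hall-set pruning and use of \Cref{thm:hmth}, the same count giving $\Delta(\ell,p)$ (that part is fine), and the same invariants (1)--(6) with the same case analysis. The winning half, however, inherits a step that fails.

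In Case~2 you pull a guard along $w\,u\,v$ from a full component of $S(d)$. The assignment only guarantees that $C\in S(d)$ is adjacent to $d$ and to no vertex of $D\setminus D^*$; $C$ may also be adjacent to other vertices of $D^*$. Suppose the attacked edge is $uv$, with $v$ the empty vertex of $C$ and $u\in D^*\setminus\{d\}$. The guard on $u$ must move to $v$, and to keep (1) some guard must enter $u$ from a neighbour of $u$. But no component of $S(d)$ need contain a neighbour of $u$, and $u$ may have no neighbour in $C$ other than $v$. The natural patch is to refill $u$ from a full component of $S(u)$. That moves a deficiency from $S(d)$ to $S(u)$ while $\sum\textsf{NoS}$ stays unchanged, and it breaks the per-vertex accounting behind \Cref{obs:ub-suf-full} in both directions. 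First, nothing stops the attacker from funnelling deficiencies from several $S(d)$ into one $S(u)$. The number of missing reserved guards equals the total excess on $\mathcal{M}$, and a single \textsf{revc} step can add up to $\min(\ell,p)$ guards, whereas $|S(u)|=2^p+1$ is sized for the charges of one vertex only. Second, a later reversal that sends a guard back through some $d\in D^+_x\setminus D^+$ may find no deficient component of $S(d)$ to absorb it. The paper's Case~2 contains the same slip. Closing it needs an extra idea, for example routing the refill $w\to d\to\cdots\to u\to v$ through occupied vertices of $D$ so the deficiency stays inside $S(d)$. This works when $G[D]$ is connected, but enforcing that changes $D$ and hence the bound.

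A second, repairable, gap concerns \textsf{NoS}. You use it as the length of a shortest witnessing sequence, to get $\textsf{NoS}\le 2^p$ from the number of states. But \Cref{obs:ub-suf-full} needs it to bound how many guards each $S(d)$ has actually lost, and you undo the last step of an arbitrary witnessing sequence of $C'$. These do not fit together. A state reached by three performed steps with $D^+=\{d\}$ may also arise from another minimal state by one \textsf{revc} step with $D^+=\{d,d',d''\}$. Then $S(d)$ is three guards short while $\textsf{NoS}=1$. Moreover, undoing a step that was never performed gives no reason for $S(d)$, $d\in D^+_x\setminus D^+$, to contain a deficient component.

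The remedy is as follows. For each $C\in\mathcal{M}$, store the stack of \textsf{revc} steps actually performed. Charge each $d\in D^+_i$ to one missing guard of $S(d)$, transferring the charge in the concatenation case $d\in D^+\cap D^+_x$. Always undo the top step, and cut out repeated states. Since a \textsf{revc} step never removes guards from $C$, all steps between two equal states have $D^+_i=\emptyset$. So cutting them loses no charges and keeps the total stack length at most $2^p-1<g(p)$.

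Finally, $d\,n_d\,n_d'$ need not be a path; use a path inside $C_d$ from a neighbour of $d$ to its empty vertex.
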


\begin{corollary}\label{cor:scd-up-low-uncon}
Let $G=(V,E)$ be a graph and $D$ be a vertex set. Denote with $p$
the size of a largest component of $G-D$ and $\ell$ the size of $D$.
Then the difference between the upper and lower bounds is at most
\[
  \beta(\ell,p):=\ell(1+p(2^p+1)).
\]
\end{corollary}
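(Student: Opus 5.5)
The plan is to sandwich $evc(G)$ between the two bounds already established and read off the gap. The statement is essentially a repackaging of \Cref{obs:lowerbound} and \Cref{lem:up-low-dif}. The lower bound is $lb_D(G)=\sum_{C\in\mathcal{D}}lb(C)$, and \Cref{obs:lowerbound} guarantees $lb_D(G)\leq evc(G)$ for any vertex set $D$, with no connectivity assumptions. The upper bound is $ub(G)=|\confT^*|$, and \Cref{lem:up-low-dif} shows it is an upper bound on $evc(G)$ via the winning configuration $\confT^*$. So the difference $ub(G)-lb_D(G)$ is the quantity to bound, and the claim is that it is at most $\beta(\ell,p)$.

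The key step is the direct count of $|\confT^*|-lb_D(G)$, term by term over the partition of $V$ into $D$, the private components $\bigcup_{d\in D}S(d)$, and $\mathcal{M}$.
\begin{itemize}
\item Vertices of $D$ contribute exactly $\ell$ to $|\confT^*|$ and nothing to $lb_D(G)$.
\item Each component in $\mathcal{M}$ contributes $lb(C)$ to both, by item (c) of the construction, so it cancels.
\item Each private component $C$ contributes $|C|-lb(C)\leq p$.
\end{itemize}
It then remains to bound the number of private components by $\ell\, g(p)=\ell(2^p+1)$. For $d\in D^*$ this uses $|S(d)|=g(p)$, which comes from the subgraph $H'$ given by \Cref{thm:hmth}. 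For $D\setminus D^*$, the Hall-set peeling guarantees fewer than $|D\setminus D^*|\,g(p)$ adjacent components in total. Summing gives at most $\ell+\ell p(2^p+1)=\beta(\ell,p)$.

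The only point needing care, which I would check first, is that the peeling procedure really yields the stated bound for $D\setminus D^*$. Each removed inclusion-minimal $g(p)$-Hall set $D'$ is adjacent, in the current $G^*$, to fewer than $g(p)|D'|$ components. Components already deleted in earlier rounds were counted in those earlier rounds. Summing over the rounds therefore bounds all components adjacent to $D\setminus D^*$ in $G$ by fewer than $g(p)\,|D\setminus D^*|$.

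I would also confirm that the sets $S(d)$ may overlap only in a way that makes the union smaller, never larger. The $S(d)$ for $d\in D^*$ are disjoint by the degree condition in $H'$, and they avoid components adjacent to $D\setminus D^*$ by the construction of $\mathcal{C}$. Since the union is counted at most once and the bound only becomes weaker otherwise, the result follows, with $ub(G)-lb_D(G)\leq\beta(\ell,p)$ and both bounds sandwiching $evc(G)$.
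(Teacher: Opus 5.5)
Your proposal is correct and matches the paper's argument. The paper likewise gets the corollary directly from \Cref{lem:up-low-dif} by counting $|\confT^*|-lb_D(G)$ over three parts: $D$ contributes $\ell$, each private component in $\bigcup_{d\in D}S(d)$ contributes $|C|-lb(C)\leq p$ with at most $\ell(2^p+1)$ such components, and each component of $\mathcal{M}$ contributes $0$. Your round-by-round justification of the bound on the components adjacent to $D\setminus D^*$, and your check that the $S(d)$ for $d\in D^*$ are disjoint, supply details the paper only asserts.
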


\subsection{An \textsf{XP} Algorithm}

We first observe that the standard techniques for the considered parameter
already suffice to place \EVC{} in \textsf{XP} when parameterized by \SCD{},
without any additional assumption on the components of $G-D$. The stronger
\textsf{FPT} result for self-sufficient instances is established in
\Cref{the:evc-fpt-scd}.

We begin with the terminology that is used throughout this section. Fix a graph
$G$, a set $D\subseteq V(G)$, and let every component of $G-D$ have size at most
$p$. The \emph{structural signature} of a component $C$ of $G-D$ records the
graph induced by $D\cup C$, with the vertices of $D$ distinguished. Equivalently,
two components $C$ and $C'$ have the same structural signature if there is an
isomorphism from $G[D\cup C]$ to $G[D\cup C']$ that fixes every vertex of $D$.
We denote the set of structural signatures by $\mathcal{S}_{G,D,p}$. Since
$|D|=\ell$ and every component has size at most $p$, the number of possible
structural signatures is bounded by a function $\tau(\ell,p)$.

For each structural signature $\kappa$, fix one representative component
$R_\kappa$ and, for every component $C$ with signature $\kappa$, fix an
isomorphism from $G[D\cup C]$ to $G[D\cup R_\kappa]$ that fixes $D$. A
\emph{state} of a component with signature $\kappa$ is then a subset of
$V(R_\kappa)$: a configuration on $C$ induces the corresponding subset of the
representative through the fixed isomorphism. Thus each structural signature has
at most $2^p$ states. This lets us record a configuration of the instance by recording, for each
signature and each state, how many components of that signature currently occur
in that state; the individual names of interchangeable components are irrelevant.

\iflong\begin{theorem}\fi
\ifshort\begin{theorem}[$\star$]\fi\label{the:evc-xp-scd}
\EVC{} is in \textsf{XP} parameterized by $\ell+p$, where $\ell$ is the size
of a \SCD{} $D$ of $G$ and $p$ is the size of the largest component of
$G-D$. In particular, \EVC{} is in \textsf{XP} parameterized by \SCD{}.
\end{theorem}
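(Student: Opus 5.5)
We prove the statement by compressing the configuration graph $\mathfrak{C}_k(G)$ into a graph on \emph{configuration vectors} and then applying \Cref{obs:conf-equivalence} to the compressed graph. We may assume $k\le |V(G)|$, so $k\le n$, and we try every value of $k$.

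\textbf{Configuration vectors.} Each configuration $\confT$ of order $k$ is mapped to a vector $\mathbf{v}(\confT)$ with two parts:
\begin{itemize}
\item the set $\confT\cap D$, which has at most $2^\ell$ possible values;
\item for every structural signature $\kappa\in\mathcal{S}_{G,D,p}$ and every state $X\subseteq V(R_\kappa)$, the number of components of signature $\kappa$ whose induced state is $X$.
\end{itemize}
There are at most $\tau(\ell,p)2^p$ coordinates and each lies in $\{0,\dots,n\}$. Hence the number of distinct vectors is at most $2^\ell (n+1)^{\tau(\ell,p)2^p}$, which is $n^{f(\ell,p)}$.

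\textbf{Symmetry.} Two configurations with the same vector are related by an automorphism $\pi$ of $G$. This $\pi$ fixes $D$ pointwise and permutes components of equal signature, composing the fixed isomorphisms to the representatives $R_\kappa$. Automorphisms preserve $e$-compatibility, mapping $e$ to $\pi(e)$, and they preserve the sets $\mathrm{Att}_G(\cdot)$.

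\textbf{Quotient graph.} Let $\mathfrak{Q}$ have the configuration vectors as vertices. Put an edge labelled by an \emph{edge-orbit} $[e]$ between $\mathbf{v}$ and $\mathbf{w}$ whenever some representative $\confT$ of $\mathbf{v}$ is $e$-compatible with some configuration $\confR$ of vector $\mathbf{w}$. Because of the symmetry, for every representative $\confT$ of $\mathbf{v}$ and every $e'\in[e]$ there is then an $e'$-compatible $\confR'$ of vector $\mathbf{w}$.

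Edge orbits are determined by a signature, an edge of $G[D\cup R_\kappa]$, or an edge of $G[D]$. So every attack set $\mathrm{Att}_G(\confT)$ is determined, up to orbit, by $\mathbf{v}(\confT)$.

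\textbf{Building $\mathfrak{Q}$ without $\mathfrak{C}_k(G)$.} Fix a vector $\mathbf{v}$, pick one concrete representative $\confT$, and fix an edge $e$ from each relevant orbit. We must enumerate the vectors of configurations $e$-compatible with $\confT$.

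By \Cref{obs:k-disjoint-paths}, a move is a family of vertex-disjoint paths with internal vertices in $\confT$. A path that leaves a component must pass through $D$. Therefore at most $\ell+1$ paths touch $D$ or cross between components. Every other path stays inside a single component.

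Thus, up to symmetry, a move is described by the following data:
\begin{itemize}
\item how the at most $\ell+1$ cross-component paths thread through $D$ and through at most $2\ell+2$ components, each identified only by signature and state;
\item for each pair (signature, state), how many components of that type undergo each of the at most $2^p$ internal one-round transitions.
\end{itemize}
The threading data has $g(\ell,p)$ choices. The transition counts are integer vectors of $n^{O(2^{2p}\tau)}$ possibilities, and each can be checked locally on $R_\kappa$ in $f(p)$ time. So each vertex of $\mathfrak{Q}$ has $n^{f(\ell,p)}$ candidate moves, each checkable in polynomial time.

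\textbf{Correctness.} We then run the algorithm of \Cref{obs:conf-equivalence} on $\mathfrak{Q}$, with orbit labels replacing edge labels. A set $C$ of vectors must induce a connected subgraph of $\mathfrak{Q}$, and every $\mathbf{v}\in C$ must have incident labels inside $C$ covering all orbits of $\mathrm{Att}_G(\cdot)$.

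For correctness we lift such a $C$ to its full preimage in $\mathfrak{C}_k(G)$ and verify the condition of \Cref{obs:conf-equivalence} there. Take any $\confT$ in the preimage and any attackable $e$. There is a neighbour vector in $C$ with label $[e]$. By the symmetry argument this gives a concrete $e$-compatible $\confR$ whose vector lies in $C$, so $\confR$ lies in the preimage.

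The preimage need not be connected. This does not matter: the real requirement is a closed set of configurations in which every attack has a response (winning positions), and connectivity is inessential. Conversely, a closed set in $\mathfrak{C}_k(G)$ projects to a closed set of vectors.

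\textbf{Expected obstacle.} The main difficulty is the enumeration of compatible successor vectors. Counting moves exactly up to symmetry, without enumerating concrete configurations, requires the careful "bounded cross-component paths plus multiset of local transitions" decomposition described above.
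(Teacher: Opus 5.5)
\textbf{Gap: the quotient labels are too coarse.} You label edges of $\mathfrak{Q}$ by the orbit $[e]$ of the edge alone, i.e.\ its signature plus its position in $G[D\cup R_\kappa]$. You then claim: if some representative of $\mathbf{v}$ has an $e$-compatible successor of vector $\mathbf{w}$, then every representative $\confT$ and every $e'\in[e]$ does too. That is false. An automorphism fixing $D$ pointwise that maps one representative onto another can send an edge in $D\cup C$ only to an edge in $D\cup C'$ with $C'$ of the same signature \emph{and in the same state}. Edges of one orbit that sit in components in different states behave differently.

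Your lifting step (``there is a neighbour vector in $C$ with label $[e]$, hence a concrete $e$-compatible $\confR$'') rests on exactly this claim, so soundness fails. Here is a concrete counterexample: take $G=P_5=b_1a_1da_2b_2$, $D=\{d\}$ and $k=3$.
\begin{itemize}
\item From $\{a_1,b_1,a_2\}$, the attack on $a_1d$ is answered by $a_1\to d$, landing on $\{b_1,d,a_2\}$.
\item The attack on $da_2$, which lies in the same orbit, cannot be answered: the guard on $a_2$ must move to $d$, and then $a_2b_2$ is left uncovered.
\item With your labels, the four vectors of the six vertex covers of size $3$ form a connected set. Every vector in it has, for each orbit meeting its attackable edges, an incident edge with that label inside the set.
\end{itemize}
So your test accepts $k=3$, whereas $evc(P_5)=4$.

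\textbf{How to repair it.} Label by the orbit of the pair $(\confT,e)$ rather than of $e$ alone. For $e$ meeting a component $C$, record the signature of $C$, the state of $C$ in $\confT$, and the position of $e$ in $G[D\cup R_\kappa]$; edges of $G[D]$ label themselves. Two pairs with equal vector and equal refined label are then related by an automorphism: send $C$ to $C'$ and match the remaining components state by state. Both your transfer claim and your covering check become correct when stated per refined label. This is what the paper encodes when it says compatibility depends on the vectors together with ``the structural signatures and states of the components incident with $e$''.

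With this repair your proof is the paper's proof, made more explicit. Your enumeration by bounded threading through $D$ plus counts of local transitions supplies the polynomial-time construction of the quotient's edges, which the paper only asserts. Your remark that connectivity of the preimage is irrelevant is also right, since every connected component of a closed set is itself closed.

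One smaller omission: for the ``in particular'' claim you still have to produce $D$. Either enumerate all vertex sets of size at most $\scd(G)$ in $n^{O(\scd(G))}$ time, or use the $(k+1)$-approximation for SCD as the paper does.
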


\iflong
\begin{proof}
Let $D$ be a \SCD{} of $G$ with $|D|=\ell$ such that every component of $G-D$
has size at most $p$, and let $n=|V(G)|$. We guess the value $k=evc(G)$; since
$k\leq n$, this incurs a factor of at most $n$ in the running time.

We associate with every configuration $\confT$ of order $k$ its
\emph{configuration vector} $\langle X, w_\confT\rangle$, where
$X=\confT\cap D$ and, for every pair $(\kappa,\mathfrak{s})$ consisting of a
structural signature and a state of that signature, the entry
$w_\confT(\kappa,\mathfrak{s})$ records the number of components of structural
signature $\kappa$ that are in state $\mathfrak{s}$ under $\confT$. The number
of such pairs is at most $d(\ell,p):=\tau(\ell,p)\cdot 2^p$, each entry of
$w_\confT$ lies in $\{0,\dots,n\}$, and $X$ ranges over the $2^\ell$ subsets of
$D$. Hence the number of distinct configuration vectors is at most
$2^\ell\,(n+1)^{d(\ell,p)}=n^{\mathcal{O}(d(\ell,p))}$.

Because components of the same structural signature are interchangeable, any two
configurations with the same configuration vector are related by an
automorphism of $G$ that fixes $D$ pointwise; consequently one is winning for
the defender if and only if the other is. Moreover, for a fixed edge $e$, whether
two configurations are $e$-compatible depends only on their configuration
vectors and on the structural signatures and states of the components incident
with $e$. We may therefore work with the quotient
$\mathfrak{C}^*_k$ of the configuration graph $\mathfrak{C}_k(G)$ obtained by
identifying configurations that share the same configuration vector. 
Therefore, it follow from \Cref{obs:conf-equivalence} that we can construct $\mathfrak{C}^*_k$ and use it to verify whether EVC is at most $k$ in polynomial-time w.r.t. the number of vertices $n^{\mathcal{O}(d(\ell,p))}$ of $\mathfrak{C}^*_k$.


Finally, using the $(k+1)$-approximation algorithm for \SCD~\cite{DEGKO21}, we
can compute a deletion set $D$ for which both $\ell$ and $p$ are bounded by a
function of $\scd(G)$. Hence the algorithm runs in time $n^{h(\scd(G))}$ for a
computable function $h$, that is, \EVC{} is in \textsf{XP} parameterized by
\SCD{}.
\end{proof}
\fi

\subsection{\textsf{FPT} and Additive Approximation Algorithms}

The main result of this section is the following.

\iflong\begin{theorem}\fi
\ifshort\begin{theorem}[$\star$]\fi
\label{the:evc-fpt-scd}
\EVC{} is fixed-parameter tractable parameterized by $\ell + p$, where $\ell$ is the size of a \SCD{} $D$ of $G$ and $p$ is the size of the largest component of $G - D$, provided that all components of $G - D$ are self-sufficient.
\end{theorem}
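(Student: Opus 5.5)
We will follow the strategy outlined in the introduction: restrict the quotient configuration graph $\mathfrak{C}^*_k$ from the proof of \Cref{the:evc-xp-scd} to a set of \emph{important} configuration vectors whose number is bounded by a function of $\ell+p$. We then apply \Cref{obs:conf-equivalence} to the restricted graph.

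\textbf{Step 1: fix the guess for $k$.} First compute $lb_D(G)=\sum_{C}lb(C)$. Each $lb(C)=bevc(G(C),C)$ is computed by brute force on $G(C)$. This is feasible because $G(C)$ can be shrunk: $B$ only has to contain enough vertices to supply guards, so $G(C)$ has $\mathcal{O}(\ell+p)$ vertices. Moreover, $lb(C)$ depends only on the structural signature of $C$, so we compute it once per signature, i.e., $\tau(\ell,p)$ times. By \Cref{obs:lowerbound} and \Cref{lem:up-low-dif}, $evc(G)\in[lb_D(G),lb_D(G)+\beta(\ell,p)]$. We therefore try the $\beta(\ell,p)+1$ values of $k$ in this range in increasing order.

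\textbf{Step 2: bound the number of guards that are not in minimum states.} Call a component state minimum if it is extendable and has minimum size. We will argue that every winning configuration satisfies $|\confT\cap V(C)|\geq lb(C)$ for all $C$. This follows from the contrapositive argument in the proof of \Cref{obs:lowerbound}, which applies to each component separately. Since $|\confT|=k\leq lb_D(G)+\beta$, at most $\beta(\ell,p)$ components, plus $D$, carry extra guards. Hence in every winning configuration all but at most $\beta$ components are in a state of size exactly $lb(C)$. We also need such a state to be minimum extendable. For this we argue that the minimum extendable size equals $lb(C)$ whenever $k$ is feasible; the construction of $\confT^*$ realizes $lb(C)$ for components in $\mathcal{M}$, and we will establish the analogous claim for every signature that occurs many times.

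\textbf{Step 3, the main obstacle: use self-sufficiency to collapse the minimum states.} The claim is the following. If there is a winning strategy, then there is one that only visits configurations in which, for every signature $\kappa$, one designated minimum state $\mathfrak{s}_\kappa$ may occur arbitrarily often, while every other minimum state of $\kappa$ occurs at most $M(\ell,p)$ times, for some function $M$. We would try the following. Given a winning strategy $f_d$, define a new strategy that simulates $f_d$ but \emph{normalizes} after each round. Every component that is in a minimum state, was not touched by the move, and exceeds the $M$ budget is moved back to $\mathfrak{s}_\kappa$ in the same round. Self-sufficiency allows this move between minimum extendable states inside the component, without exchanging guards with $D$. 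Such a move runs on vertex-disjoint paths inside untouched components, so it can be combined with the path family of the actual defence (\Cref{obs:k-disjoint-paths}).

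The subtle point is that the normalized configuration must still be winning. Because components of the same signature are interchangeable via automorphisms fixing $D$, the renamed configuration is equivalent to one in which $f_d$ is being played. The issue is that $f_d$ may rely on a specific component being in a specific non-designated minimum state. We handle this by charging: in each round only components adjacent to the attacked edge, or to paths through $D$, change state. There are at most $\ell+1$ of these per round, which bounds the number of non-designated minimum components that a lookahead ever needs. Making this bound, and hence $M$, explicit as a function of $\ell,p$ (for example via $\beta$ and the $2^p$ states) is the part we expect to be delicate.

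\textbf{Step 4: count and decide.} With Steps 2 and 3, the important configuration vectors are determined as follows. Choose at most $\beta$ dimensions for the non-minimum states, with values at most $\beta$. For each signature, choose the designated state; the remaining minimum states have values at most $M$. The count of the designated state is then forced by the total number of components of that signature. Finally choose $X\subseteq D$. This gives at most $(d+1)^{\beta}\beta^{\beta}d(M+1)^{d}2^\ell$ vectors, where $d=d(\ell,p)$. We build the induced subgraph of $\mathfrak{C}^*_k$ on these vectors; compatibility between vectors is checked locally as in \Cref{the:evc-xp-scd}. We then run the check of \Cref{obs:conf-equivalence}, which takes time $f(\ell,p)\cdot n^{\mathcal{O}(1)}$.
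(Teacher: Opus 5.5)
\textbf{There is a genuine gap, in Step~3.} Your route restricts the quotient configuration graph to a bounded set of important configuration vectors. That is the plan sketched in the paper's introduction. The formal proof instead builds a kernel: it keeps only $h(\ell,p)=2^p\theta(\ell,p)$ components per structural signature, lowers the budget by $\sum_\kappa w_\kappa\, lb(\kappa)$, and proves $evc(G)=evc(H)+\eta(G,D)$ (\Cref{thm:scd}). Both routes rest on the simulation argument you place in Step~3. There the claim is asserted rather than proved, and the mechanism you sketch fails as stated.

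\begin{itemize}
\item \emph{Attacks cannot always be mirrored.} If the designated state $\mathfrak{s}_\kappa$ is fixed per signature, nothing guarantees that the configuration $T_f$ in which $f_d$ is being played has any component of signature $\kappa$ in state $\mathfrak{s}_\kappa$. If the attacker hits a normalized component sitting in $\mathfrak{s}_\kappa$, no automorphism fixing $D$ turns this into an attack on $T_f$. So ``the renamed configuration is equivalent to one in which $f_d$ is being played'' is false.
\item \emph{Self-sufficiency is too weak for this.} It only lets an \emph{unattacked} component change between minimum states in one round. It does not say that winning configurations are closed under swapping minimum states.
\item \emph{The normalized configuration may not be a vertex cover.} $\mathfrak{s}_\kappa$ is extendable only relative to \emph{some} state of $D$. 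It may leave an edge to a currently unoccupied vertex of $D$ uncovered.
\item \emph{Your touched-component count is too small.} A path can enter a vertex of $D$ from one component and leave into another. So up to $2\ell$ components exchange guards with $D$ in one round (\Cref{ob:cross-edges}), plus the attacked one, not $\ell+1$.
\end{itemize}

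\textbf{The missing idea is to choose the designated state dynamically, as the paper's \emph{key state}.} Given the simulated configuration $T_f$, cap the count of every state at $\theta(\ell,p)=\beta(\ell,p)+2\ell+1$. Reassign the surplus components to the smallest state whose count exceeds $\theta$. This fixes each failure above:
\begin{itemize}
\item Such a crowded state is automatically minimal, since there are at most $\beta<\theta$ heavy components.
\item It actually occurs in $T_f$, so it is compatible with the current state of $D$.
\item Every state with positive count in the normalized configuration also has positive count in $T_f$, so every attack can be mirrored.
\item Since $\theta$ exceeds both the number of heavy components and the number of components touched in a round, each touched component can be matched to a component of the same signature and state.
\item The remaining components in minimum states are rearranged into the next normalized counts within the same round, using self-sufficiency.
\end{itemize}
Packaging this as a kernel, as the paper does, also spares you from deciding compatibility between compressed vectors with unbounded entries. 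Your Step~4 still has to justify that check; on $H$ everything is bounded and brute force suffices.
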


\iflong
Towards the algorithm, we begin by guessing the size of $evc(G)$, and let this guess be denoted $k$ and fixed for the rest of this discussion. By~\Cref{cor:scd-up-low-uncon}, the possible values of $k$ lie in an interval of length bounded by $\beta(\ell,p)$, so this guessing step only adds a factor depending on $\ell$ and $p$ to the running time.
Let $G$ be a graph with a \SCD{} $D$ of size $\ell$ and let $p$ be the size of a largest component of $G-D$. We use the structural signatures introduced at the start of this section. Thus $\mathcal{S}_{G,D,p}$ denotes the set of signatures of components of $G-D$, two components have the same signature precisely when their induced graphs together with $D$ are isomorphic via an isomorphism that fixes $D$ pointwise, and $|\mathcal{S}_{G,D,p}|\leq \tau(\ell,p)$ for some function $\tau$ depending only on $\ell$ and $p$. We also keep the fixed representative and fixed isomorphisms chosen there, so that states of components with the same signature can be compared as subsets of the same representative component.

Set $\theta(\ell,p):=\beta(\ell,p)+2\ell+1$ and $h(\ell,p):=2^p\theta(\ell,p)$. The choice of these thresholds will be apparent in due course.


\begin{definition}[State]\label{def:state}
Let $G$ be a graph with a \SCD{} $D$. For each structural signature $\kappa \in \mathcal{S}_{G,D,p}$, we denote the set of possible states of $\kappa$ by $2^\kappa$. This is the collection of all subsets of the representative component for $\kappa$.
\end{definition}

\begin{definition}[Minimal State]
Let $G$ be a graph with a \SCD{} $D$ and let $C$ be a component of $G
- D$. Let $\confT$ be a configuration of $G$. The configuration
$\confT$ is said to be minimal for $C$ if $|\confT \cap V(C)|=lb(C)$, where $lb(C)$ is the lower-bound contribution defined in \Cref{obs:lowerbound}.
The state associated with a minimal configuration is called a minimal state.
\end{definition}

We now introduce terminology to address components that are not minimally defended in some configuration.

\begin{definition}
  Let $G$ be a graph with a \SCD{} $D$ and let $\confT$ be a configuration of $G$. We say that a component $C$ in $G - D$ is \emph{heavy} with respect to $\confT$ if $|\confT\cap V(C)| > lb(C)$.
\end{definition}

We now observe that we cannot have too many heavy components when defending with $k$ guards.

\begin{observation}\label{ob:heavy-components}
Let $G$ be a graph with a \SCD{} $D$ of size at most $\ell$ and let
$p$ be the size of a largest component of $G-D$. Further, assume that
the defender has a winning strategy $f_d$ for $G$ starting from a configuration $\confT^*$ of size $k\leq ub(G)$.
Let $\confT$ be a configuration reached during the course of the game played based on $f_d$. Then $\confT$ has at most $\beta(\ell,p)$ heavy components. 
\end{observation}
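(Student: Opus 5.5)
The plan is a counting argument that compares the size of $\confT$ with the lower bound from \Cref{obs:lowerbound}. Since guards are never created or destroyed, every configuration $\confT$ reached during the play has $|\confT|=|\confT^*|=k\leq ub(G)$. Because $\confT$ is reached in a game where the defender follows the winning strategy $f_d$, the configuration $\confT$ is itself winning for the defender on $G$.

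The first step is to show that every component $C$ of $G-D$ satisfies $|\confT\cap V(C)|\geq lb(C)$. This is exactly what the proof of \Cref{obs:lowerbound} establishes. That proof shows that if some component $C$ had a state of size smaller than $lb(C)$, then $D\cup B\cup(\confT\cap V(C))$ would be losing in $G(C)$. It then transfers the attacker's winning strategy back to $G$ by the dominance and path-rerouting argument, so $\confT$ would be losing in $G$. This contradicts that $\confT$ is winning. Hence every component receives at least $lb(C)$ guards, and every heavy component receives at least $lb(C)+1$.

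Next, let $h$ be the number of heavy components with respect to $\confT$. Summing over the components of $G-D$ and ignoring guards on $D$ gives
\[
k=|\confT|\geq \sum_{C\in\mathcal{D}}|\confT\cap V(C)| \geq \sum_{C\in\mathcal{D}} lb(C) + h = lb_D(G)+h.
\]
By \Cref{lem:up-low-dif} we have $k\leq ub(G)=lb_D(G)+\Delta(\ell,p)$, and $\Delta(\ell,p)=\beta(\ell,p)$ by \Cref{cor:scd-up-low-uncon}. Combining the two inequalities yields $h\leq\beta(\ell,p)$, which is the claimed bound.

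The only delicate point is the first step. \Cref{obs:lowerbound} is stated as an inequality about $evc(G)$, whereas here we need the stronger pointwise fact that every winning configuration puts at least $lb(C)$ guards on each component. I would handle this by quoting the body of that proof, which argues component-wise for an arbitrary configuration, rather than its statement. With that fact available, the rest is the short arithmetic above.
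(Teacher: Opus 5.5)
Your proposal is correct and follows the paper's argument. A winning configuration puts at least $lb(C)$ guards on every component, so $h$ heavy components force $k\geq lb_D(G)+h$, and comparing with $k\leq ub(G)=lb_D(G)+\beta(\ell,p)$ gives $h\leq\beta(\ell,p)$. The paper uses this per-component bound without comment, and your appeal to the body of the proof of \Cref{obs:lowerbound}, which treats an arbitrary configuration with one component below $lb(C)$, is the right way to justify it.
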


\begin{proof}
Since $f_d$ is a winning strategy, we know that it deploys at least $lb_D(G) := \sum_{C \in \mathcal{D}} lb(C)$ guards. On the other hand, since $k \leqslant ub(G)$, we know that $f_d$ uses at most $ub(G)$ guards. If there are more than $\beta(\ell,p)$ components that use more than $lb(C)$ guards, then the total number of guards exceeds $lb_D(G)+\beta(\ell,p)\geq ub(G)$, a contradiction.
\end{proof}

It is also useful to know that the number of components that participate in guard exchanges involving vertices outside the component is bounded.

\begin{definition}
Let $G$ be a graph and let $D$ be a \SCD{} of $G$. Suppose
$\mathcal{P}$ is an applicable family of vertex-disjoint oriented paths. An edge of a path in $\mathcal{P}$ is said to be a
\emph{cross-edge} if exactly one of its endpoints lies in $D$, i.e., it
corresponds to a guard move between $D$ and some component of $G-D$.
All other path edges are called \emph{internal} edges.
\end{definition}

The following observation follows from the fact that every vertex in $D$ can receive a guard at most once and lose a guard at most once in one round of defense.

\begin{observation}\label{ob:cross-edges}
Let $G$ be a graph and let $D$ be a \SCD{} of $G$. The number of cross-edges in any applicable path family is at most $2\ell$.
\end{observation}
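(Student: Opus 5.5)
The bound on cross-edges follows from vertex-disjointness of the paths together with the fact that every cross-edge has exactly one endpoint in $D$. I will charge each cross-edge to its endpoint in $D$ and show that every vertex of $D$ is charged at most twice. Summing over the $\ell$ vertices of $D$ then gives at most $2\ell$ cross-edges.

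Fix an applicable family $\mathcal{P}$ of vertex-disjoint oriented paths. Let $d\in D$ be a vertex that lies on some path of $\mathcal{P}$. By vertex-disjointness, exactly one path $P\in\mathcal{P}$ contains $d$, and $d$ appears on $P$ exactly once, since $P$ is a path. Therefore at most two edges of $\bigcup_{P\in\mathcal{P}}E(P)$ are incident to $d$: the edge entering $d$ along the orientation, through which $d$ receives a guard, and the edge leaving $d$, through which $d$ loses a guard. If $d$ is the first or last vertex of $P$, only one of these exists. If $d$ lies on no path, no path edge is incident to it. Consequently, at most two cross-edges have $d$ as their $D$-endpoint.

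Every cross-edge has exactly one endpoint in $D$. The map sending a cross-edge to its $D$-endpoint is therefore well defined, and by the previous paragraph each fibre of this map has size at most $2$. Hence the number of cross-edges is at most $2|D|\leq 2\ell$.

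There is no real obstacle in this argument. The only point requiring care is that the paths are vertex-disjoint and simple, so each vertex of $D$ has degree at most $2$ in the union of the path edges.
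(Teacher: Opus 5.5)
Your proposal is correct and matches the paper's justification: the paper notes that each vertex of $D$ can receive a guard at most once and lose a guard at most once per round. That is exactly your charging of each cross-edge to its $D$-endpoint, which you make rigorous through vertex-disjointness and the fact that a vertex has at most two incident edges on its simple path.
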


Now we are ready to present the main intuition for the algorithm: we know that in principle the number of components that participate in external guard exchanges is bounded, as are the number of heavy components. So the vast majority of the components are untouched and can freely switch between minimal configurations by self-sufficiency. This enables us to propose the following algorithm. For each structural signature $\kappa$, delete all but $h(\ell,p)=2^p\theta(\ell,p)$ components in $G-D$ that have signature $\kappa$. Suppose the number of deleted components is $w_\kappa$. Then we reduce the guard budget by $w_\kappa\cdot lb(\kappa)$, where $lb(\kappa)$ denotes $lb(C)$ for any component $C$ with signature $\kappa$. Repeat this across all structural signatures, and this will leave us with an instance whose size is bounded by $\ell + \tau(\ell,p)\cdot h(\ell,p)$, with a guard budget of $k-\sum_{\kappa \in \mathcal{S}_{G,D,p}} w_\kappa \cdot lb(\kappa)$. The key idea here is that the components that were removed by the algorithm can be predictably handled with $lb(\cdot)$ extra guards, and this is because of \Cref{ob:heavy-components} and \ref{ob:cross-edges} about the number of cross edges and heavy components being bounded. We now formalize these ideas.


When we set out to show the equivalence between the original graph and the pruned one, we see that while the deleted components can be safely assumed to not be involved in any external guard exchanges, in the original graph the defense may require for their state to be internally reconfigured. To this end, we will require that components that are only internally affected can freely switch between minimum extendable positions, and this motivates the definition of self-sufficiency. A self-sufficient component is one where we are able to switch between any two minimum extendable positions without ``outside help''.

\begin{definition}[Self-Sufficient Components]\label{def:self-sufficient}
Let $G$ be a graph with a \SCD{} $D$ and let $C$ be a component of $G - D$. A (partial) configuration $P_C \subseteq V(C)$ for $C$ is \emph{extendable} if there is a configuration $\confT$ for the defender on $G$ such that $V(C) \cap \confT = P_C$. We say that $P_C$ is \emph{minimum} if it has minimum size among all extendable positions of $C$. Then $C$ is \emph{self-sufficient} if the defender can move in one round between any two minimum extendable positions of $C$ without exchanging any guard with the deletion set, provided that there is no attack on $C$.
\end{definition}

Finally, a nuance worth noting here is that we start with the input graph $G$ and a \SCD{}, and we do not have an explicit configuration of guards at hand. The threshold $h(\ell,p)$ is chosen large enough to reserve, for every possible state of a fixed structural signature, enough representative components to simulate the bounded number of components that can participate in exchanges with $D$. If the input has fewer than this many components of a signature, then that signature is already bounded and no deletion is needed. To argue the equivalence of the original graph and the pruned graph, we will need to go back and forth between their configurations. To this end, we introduce the signature of a \emph{configuration} with respect to $D$.

\begin{definition}[Configuration Signature]
Let $G$ be a graph and let $D$ be a \SCD{} of $G$. Given a
configuration $\confT \in V(\mathfrak{C}_k(G))$, the \emph{configuration signature}
of $\confT$ with respect to a deletion set $D$, denoted $v_\confT(D)$,
is a pair $\langle X, w_\confT \rangle$, where $X = D\cap \confT$ and $w_\confT$ is a vector indexed by $\{(\kappa,\mathfrak{s}) \mid \kappa \in \mathcal{S}_{G,D,p}, \mathfrak{s} \in 2^\kappa\}$. The subset $X$ specifies the set of vertices in $D$ that are occupied by \confT, and the entry $w_\confT(\kappa, \mathfrak{s})$ gives the number of components in $G- D$ with structural signature $\kappa$ that are in state $\mathfrak{s}$ in configuration $\confT$.
\end{definition}

\begin{lemma}
  \label{thm:scd}
  Let $G$ be a graph with a \SCD{} $D$ of size at most $\ell$ such that all components in $G - D$ are self-sufficient and the largest component of $G-D$ has size at most $p$. If there are more than $h(\ell,p)$ components in $G-D$ that have the same structural signature $\kappa$, delete all but $h(\ell,p)$ of them from $G$. For each signature $\kappa$, let $w_\kappa$ denote the number of components of signature $\kappa$ that were deleted by this process. Further, let $H$ be the graph obtained from $G$ after performing these deletions. Let
  \[
    \eta(G,D) := \sum_{\kappa \in \mathcal{S}_{G,D,p}} w_\kappa \cdot lb(\kappa),
  \]
  where $lb(\kappa)$ denotes $lb(C)$ for any component $C$ with structural signature $\kappa$.

  Then $evc(G) = evc(H) + \eta(G,D)$.
  \end{lemma}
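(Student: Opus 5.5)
We prove the two inequalities $evc(G)\le evc(H)+\eta(G,D)$ and $evc(H)\le evc(G)-\eta(G,D)$ separately. In both directions we translate strategies through configuration signatures: the deleted components are represented by a pool of "dormant" components that always sit in a minimum extendable state. Two facts from earlier in the paper carry the argument. \Cref{ob:heavy-components} says that at most $\beta(\ell,p)$ components are heavy at any time. \Cref{ob:cross-edges} says that in one round at most $2\ell$ components exchange guards with $D$. Together they show that, in every round, at most $\theta(\ell,p)=\beta(\ell,p)+2\ell+1$ components of a signature $\kappa$ are "active": heavy, involved in a cross-edge, or containing the attacked edge. Since $H$ keeps $h(\ell,p)=2^p\theta(\ell,p)$ components of each large signature, for each state $\mathfrak{s}$ of $\kappa$ there is room for $\theta(\ell,p)$ representatives of that state.

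\textbf{From $H$ to $G$.} Take a winning strategy $f_H$ on $H$ with $evc(H)$ guards. Fix for each signature $\kappa$ one minimum extendable state $\mathfrak{m}_\kappa$ of size $lb(\kappa)$, and put every deleted component into that state. This costs exactly $\eta(G,D)$ extra guards. An attack on an edge inside a kept component, or inside $D$, is answered by $f_H$, and the deleted components stay put. An attack on an edge incident to a deleted component $C$ of signature $\kappa$ is handled by swapping: pick a kept component $C'$ of signature $\kappa$ that is in state $\mathfrak{m}_\kappa$ and is not involved in the response. Such a $C'$ exists provided enough kept copies are minimal, which we maintain by a rebalancing argument. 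Relabel via the fixed isomorphisms so that $C$ plays the role of $C'$, and defend the attack by copying $f_H$'s response on $C'$. Then use self-sufficiency to return $C'$'s state into $\mathfrak{m}_\kappa$ in the same round, with no exchange with $D$, where needed. The invariant to maintain is that the configuration of $G$ has the same signature as some configuration reachable in $H$, plus $w_\kappa$ components in state $\mathfrak{m}_\kappa$.

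\textbf{From $G$ to $H$.} Take a winning strategy on $G$ with $k=evc(G)$ guards. This direction has two steps.
\begin{enumerate}
\item First normalize the $G$-strategy. Components that are non-heavy and not active in a round can be assumed to sit in a minimum extendable state. By self-sufficiency, any two minimum extendable states are interchangeable in one round without touching $D$. We therefore redirect all but boundedly many non-active minimal components of each signature to $\mathfrak{m}_\kappa$, piggybacking these internal moves onto the current round.
\item Project. Identify $w_\kappa$ components of each signature that are in state $\mathfrak{m}_\kappa$ and not active, and treat them as deleted. This yields configurations of $H$ with $k-\eta(G,D)$ guards.
\end{enumerate}
When an attack in $G$ would make some "deleted" component active, rename it, by a $D$-fixing automorphism, with a kept component that is in state $\mathfrak{m}_\kappa$ and inactive. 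The counting bound guarantees such a kept component exists.

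\textbf{Main obstacle.} The hardest part is making the simultaneous renaming consistent across rounds. The identity of which components count as "deleted" must change over time, and a one-round move that mixes a cross-edge defense with internal self-sufficient reconfigurations of other components must remain a single legal move, i.e.\ a vertex-disjoint path family (\Cref{obs:k-disjoint-paths}). I would first try to phrase everything at the level of the quotient configuration graph $\mathfrak{C}^*$ from \Cref{the:evc-xp-scd}. There one shows that the map $w\mapsto w'$, which caps every $(\kappa,\mathfrak{m}_\kappa)$ entry while respecting the threshold $\theta(\ell,p)$ per state, preserves $e$-compatibility edges. One would then conclude by \Cref{obs:conf-equivalence}.
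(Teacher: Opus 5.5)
Your overall architecture matches the paper's: strategies are translated through configuration signatures, an attacked deleted component is swapped with a same-signature kept one via the fixed isomorphisms, active components are bounded by $\theta(\ell,p)$ using \Cref{ob:heavy-components} and \Cref{ob:cross-edges}, and self-sufficiency is used on untouched components. The gap is in how you park the surplus. You fix one minimum extendable state $\mathfrak{m}_\kappa$ per signature, independently of the configuration. You put all deleted components there, and in the $G\to H$ direction you redirect surplus components into it.

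Every swap needs a counterpart in the \emph{same} state in the configuration whose strategy is being simulated. Nothing guarantees that the configurations of $f_H$, or of the $G$-strategy, contain $\mathfrak{m}_\kappa$ at all. Self-sufficiency only gives attack-free internal moves; it does not imply that a winning configuration stays winning after components are moved into a state absent from it.

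Concretely, take $D=\{u\}$ and $N>h(1,3)$ triangles $v_ix_iy_i$, with $u$ adjacent only to the $v_i$.
\begin{itemize}
\item All triangles are self-sufficient, and $\{x_i,y_i\}$ is a minimum extendable position, witnessed by $V(G)\setminus\{v_i\}$ exactly as in the paper's CVD argument.
\item $evc(G)=2N+1$.
\item Any configuration of size $2N+1$ with a triangle in $\{x_i,y_i\}$ has $u$ occupied and exactly two guards per triangle. On the attack on $uv_i$, the guard of $u$ must go to $v_i$. It cannot be replaced without leaving some triangle with one guard, and the next attack on some $uv_j$ then wins.
\end{itemize}
So with $\mathfrak{m}_\kappa=\{x,y\}$ your extended configurations are losing. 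No optimal configuration of $H$ has a kept triangle in that state, so no ``rebalancing'' can supply one. Your normalization also turns winning configurations of $G$ into losing ones.

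The missing idea is to choose the parking state from the current configuration, which is what the paper's projection and extension do.
\begin{itemize}
\item The projection caps each $(\kappa,s)$-count at $\theta(\ell,p)$ and moves the surplus into the key state. This is the smallest crowded state of the current configuration, which is minimal because $\theta(\ell,p)>\beta(\ell,p)$.
\item The extension adds the $w_\kappa$ deleted components to a minimum state that already has at least $\theta(\ell,p)$ copies in the $H$-configuration. Such a state exists by pigeonhole, since $h(\ell,p)=2^p\theta(\ell,p)$ and fewer than $\theta(\ell,p)$ components are heavy.
\end{itemize}
Then both sides contain the same states, each with at least $\min\{w(\kappa,s),\theta(\ell,p)\}$ copies. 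Hence the at most $\theta(\ell,p)$ active components can be matched state by state. Self-sufficiency is only needed to move untouched surplus components between states that are both crowded in winning configurations, for example when the key state changes.

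Your quotient-configuration-graph framing via \Cref{obs:conf-equivalence} is a reasonable way to make cross-round consistency rigorous. It must, however, be applied to this capped, configuration-dependent map, not to one that fixes $\mathfrak{m}_\kappa$.
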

  Note that $H$, as defined in the statement of~\Cref{thm:scd}, is a subgraph of $G$. Also observe that this lemma implies a \textsf{FPT} algorithm in $\ell$ and $p$ since the size of the graph $H$, which can be obtained in \textsf{FPT} time from the graph $G$, is bounded by $\ell+\tau(\ell,p) \cdot h(\ell,p)$, and here EVC can be solved by an exact algorithm.

  Before proving~\Cref{thm:scd}, we introduce some definitions that allow us to go back and forth between configurations in the graphs $G$ and $H$. We first introduce some terminology to capture which signatures were ``affected'' by the component deletions. In particular, we want to focus on signatures $\kappa$ which are such that there is at least one component in $G - V(H)$ that has signature $\kappa$.

  For a structural signature $\kappa$, let $\mathcal{E}^G_\kappa$ and
  $\mathcal{E}^H_\kappa$ denote the components of $G-D$ and $H-D$,
  respectively, that have signature $\kappa$.

  \begin{definition}\label{def:affected}
  Let $G$ be a graph with a \SCD{} $D$, and let $\confT$ be a
  configuration of guards on $G$. Let $\kappa$ be a signature and let
  $s \in 2^\kappa$ be a state. We say that $(\kappa,s)$ is
  \emph{crowded} in $\confT$ if
  $w_\confT(\kappa,s)>\theta(\ell,p)$. Further, $\kappa$ is said to be \emph{affected} if there is some state $s \in 2^\kappa$ such that $(\kappa,s)$ is crowded in $\confT$. Since there are at most $\beta(\ell,p)$ heavy components and $\theta(\ell,p)>\beta(\ell,p)$, every crowded state is a minimal state.
  \end{definition}

  Next, we associate a single crowded state with every affected signature. This is not strictly speaking needed for our arguments but it simplifies our discussion.

  \begin{definition}
  Let $\succ_\kappa$ be an arbitrary but fixed ordering on $2^\kappa$. For an affected signature $\kappa$, let $s$ be the smallest state such that $(\kappa,s)$ is crowded in $\confT$ with respect to $\succ_\kappa$. We say that $s$ is the \emph{key state} of the signature $\kappa$ in $\confT$.  Note that the key state of a signature is well-defined only for affected signatures, and by 
  \Cref{def:affected}, it is always minimal.
  \end{definition}

  We are now ready to define the projection of a configuration of $G$ in $H$ and the extension of a configuration of $H$ in $G$.

  \begin{definition}
  Let $\confT$ be a configuration of guards in $G$ with a \SCD{} $D$.
  First cap the number of components of each signature in each state by
  setting
  \[
    \widehat w_\confT(\kappa,s)=
      \min\{w_\confT(\kappa,s),\theta(\ell,p)\}.
  \]
  For every structural signature $\kappa$, define
  \[
    q_\kappa(\confT)=|\mathcal{E}^H_\kappa|-
      \sum_{s\in 2^\kappa}\widehat w_\confT(\kappa,s).
  \]
  This number is non-negative by the definition of $h(\ell,p)$. If
  $q_\kappa(\confT)>0$, then $\kappa$ is affected, so it has a key state
  in $\confT$. The \emph{projection} of $\confT$ in $H$ is any
  configuration $\confR$ whose configuration signature is
  $\langle \confT\cap D,w_\confR\rangle$, where

  \begin{equation*}
    w_\confR(\kappa,s) =
    \begin{cases}
      \widehat w_\confT(\kappa,s)+q_\kappa(\confT)
        & \text{if } \kappa \text{ is affected and } s \text{ is its key state},\\
      \widehat w_\confT(\kappa,s)
        & \text{otherwise.}
    \end{cases}
  \end{equation*}
  If $\kappa$ is not affected, then $q_\kappa(\confT)=0$.
  \end{definition}

  \begin{definition}
  For every structural signature $\kappa$, fix one minimum state
  $s^0_\kappa$. Let $\confR$ be a configuration of guards in $H$ with a
  \SCD{} $D$. For each $\kappa$, choose a minimum state
  $a_\kappa(\confR)$ as follows: if some minimum state $s$ satisfies
  $w_\confR(\kappa,s)\geq \theta(\ell,p)$, then take the smallest such
  state with respect to $\succ_\kappa$; otherwise take
  $a_\kappa(\confR)=s^0_\kappa$. The \emph{extension} of $\confR$ to
  $G$ is any configuration $\confT$ whose configuration signature is
  $\langle \confR\cap D,w_\confT\rangle$, where

  \begin{equation*}
    w_\confT(\kappa,s) =
    \begin{cases}
      w_\confR(\kappa,s)+w_\kappa & \text{if } s=a_\kappa(\confR),\\
      w_\confR(\kappa,s) & \text{otherwise.}
    \end{cases}
  \end{equation*}
  Thus extension adds the deleted components back in a minimum state and increases the number of guards by exactly $\eta(G,D)$; recall that $w_\kappa$ denotes the number of components of signature $\kappa$ that were deleted by the process described in~\cref{thm:scd}.
  \end{definition}

We are now ready to prove~\Cref{thm:scd}.

\begin{proof}[Proof of \Cref{thm:scd}] We first show that $evc(G) \leqslant evc(H) + \eta(G,D)$. For brevity, we denote $\eta := \eta(G,D)$ and $\gamma := evc(H)$. Towards the upper bound, we have to prove that if there exists a winning defender strategy $f_d$ for $H$ that uses at most $\gamma$ guards, then there is a winning defender strategy $f_d^\prime$ in $G$ that uses at most $\gamma + \eta$ guards. For any $uv \in E(H)$, define $f_d'(\confT,uv)$ as the extension of $f_d(\confR,uv)$ to $G$, where $\confR$ is the projection of $\confT$ in $H$.

  It is easy to verify that $f_d^\prime$ uses $\gamma + \eta$ guards, because the extension places the deleted components in minimum states and therefore adds exactly $\eta$ guards. Also note that if $\confR'$ and $\confR$ are $uv$-compatible in $H$, then $\confT'$ and $\confT$ are $uv$-compatible in $G$, where $\confT$ is the extension of $\confR$ and $\confT^\prime$ is the extension of $\confR^\prime$. This implies that $f_d^\prime$ is a winning defender strategy in $G$ when an edge from $H$ is attacked in $G$.

  Now suppose an edge $uv \in E(G) \setminus E(H)$ is attacked when $G$ is in configuration $\confT$. Note that there exists a component $C$ in $G-D$ such that $uv \cap E(C) \neq \emptyset$, and $C$ is not a component in $H-D$. Let $\kappa$ denote the signature of $C$. We know that there exists some component $C^\star$ in $H-D$ whose signature is also $\kappa$. Let $\phi_C: C \rightarrow C^\star$ be an isomorphism between $C$ and $C^\star$ and let $x := \phi_C(u), y := \phi_C(v)$. Now place $H$ in the configuration $\confR$ that is the projection of $\confT$ in $H$ and attack the edge $xy$. The defense $f_d(\confR,xy)$ leads us to a configuration $\confR'$ in $H$, and suppose $\mathcal{P}$ is a list of directed edges which gives us a recipe for transforming $\confR$ into $\confR'$ in $H$. Replace every occurrence of $v \in C^\star$ with $\phi_C^{-1}(v)$ in $\mathcal{P}$ to obtain a list of directed edges $\mathcal{P}'$: this gives us a recipe for transforming $\confT$ into some $\confT'$ in $G$ while defending the edge $uv$.

  Now we want to ensure that the projection of $\confT^\prime$ is in fact $\confR^\prime$. Observe that $\confT^\prime$ mimics the defense $f_d$ exactly, other than swapping the roles of $C$ and $C^\star$. If there is a mismatch in the projection of $\confT^\prime$ and $\confR^\prime$, then it the mismatch must occur only at a minimal state, since non-minimal states correspond to heavy components, and since those are bounded, they are exactly preserved by the projections. It can be verified that any mismatch can be fixed by moving $C^\star$ (an untouched component) to the appropriate minimal state, which is possible to do by our assumption of self-sufficiency.

   We now show that $evc(G) \geqslant evc(H) + \eta(G,D)$, which is to say that if we have a strategy for defending attacks perpetually in $G$ using $\lambda := evc(G)$ guards, we can also defend attacks in $H$ with a reduced budget of $\lambda - \eta$ guards. Let $f_d$ be such a strategy for $G$. Define $f_d^\prime(\confR,uv)$ as the projection of $f_d(\confT,uv)$ in $H$, where $\confT$ is the extension of $\confR$ to $G$. It can be verified that the transition from $\confT$ to $f_d(\confT,uv)$ can be carried out entirely in $H$: indeed every deleted component $C$ that is involved in the transition can be injectively mapped to an ``inactive'' component of $H-D$ that shares the same signature. It is also easy to see that if $\confT'$ and $\confT$ are $uv$-compatible in $G$, then $\confR'$ and $\confR$ are $uv$-compatible in $H$. Finally, note that this is a well-defined strategy since $uv \in E(H)$ implies that $uv \in E(G)$, and uses the claimed number of guards as well, since the projection eliminates exactly $\eta(G,D)$ guards. This concludes our argument.
  \end{proof}

\begin{proof}[Proof of \Cref{the:evc-fpt-scd}]
Let $G$ be a graph with a \SCD{} $D$ of size $\ell$ such that all
components of $G - D$ are self-sufficient and the largest component
has size at most $p$. By \Cref{cor:scd-up-low-uncon}, we can guess the
value of $evc(G)$ at an additional factor depending only on $\ell$ and
$p$ in the running time. We then apply \Cref{thm:scd} to obtain, in
polynomial time, a graph $H$ whose size is bounded by
$\ell+\tau(\ell,p) \cdot h(\ell,p)$ --- a function of $\ell$ and $p$ alone
--- together with the value $\eta(G,D)$ such that
$evc(G) = evc(H) + \eta(G,D)$. We solve EVC exactly on $H$ in time
depending only on $|V(H)|$, which is bounded by a function of $\ell$
and $p$.
\end{proof}

We are now ready to provide our polynomial-time approximation.
The following corollary now follows because of
\Cref{lem:up-low-dif} together with the known
$(k+1)$-approximation algorithm for SCD~\cite{DEGKO21}. In particular,
if $k=\scd(G)$, then the approximation algorithm returns a deletion set
and component-size bound with $\ell+p\leq K:=k(k+1)$. Since both
$\ell$ and $p$ are at most $K$, the difference between the upper bound
and the lower bound given in \Cref{lem:up-low-dif} is at most
$K(1+K(2^K+1))$:
\begin{corollary}
  There is a polynomial-time algorithm that computes the EVC number of
  any graph $G$ to within an additive error of at most
  $K(1+K(2^K+1))$, where $K=\scd(G)(\scd(G)+1)$.
\end{corollary}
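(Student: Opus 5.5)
The corollary follows by combining the approximation algorithm for \SCD{} from~\cite{DEGKO21} with the sandwich $lb_D(G)\leq evc(G)\leq ub(G)$ given by \Cref{obs:lowerbound} and \Cref{lem:up-low-dif}, for a suitably chosen set $D$.

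\textbf{Step 1: computing a deletion set.} Let $k=\scd(G)$. Run the polynomial-time $(k+1)$-approximation for \SCD{}. It returns a set $D$ with $|D|=\ell$ such that every component of $G-D$ has size at most $p$, and $\ell+p\leq K:=k(k+1)$. If the approximation only guarantees $\ell+p\leq (k+1)\scd(G)$, this is the same bound.

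\textbf{Step 2: the approximate value.} Output $ub(G)=lb_D(G)+\Delta(\ell,p)$, or any value between the two bounds. By \Cref{obs:lowerbound} and \Cref{lem:up-low-dif},
\[
  lb_D(G)\leq evc(G)\leq lb_D(G)+\ell(1+p(2^p+1)).
\]
Since $\ell,p\leq K$ and the expression is monotone in $\ell$ and $p$, the additive error is at most $K(1+K(2^K+1))$.

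\textbf{Step 3: polynomial running time (main obstacle).} We need to compute $lb_D(G)=\sum_C lb(C)$ in polynomial time. Each $lb(C)=bevc(G(C),C)$ is defined on $G(C)$, which has $|D|+|C|+|B|\leq \ell+p+(\ell+p+1)=\mathcal{O}(K)$ vertices. It can therefore be computed exactly by brute force on the configuration graph, using \Cref{obs:conf-equivalence} for every candidate guard count. We enumerate winning configurations and minimise their intersection with $C$, which takes time $2^{\mathcal{O}(K)}$ per component.

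This is where the difficulty lies. The bound is polynomial for fixed $K$ but not polynomial in general. To get a genuinely polynomial-time algorithm, I would argue as follows:
\begin{itemize}
\item The additive guarantee is only meaningful when $K(1+K(2^K+1))$ is smaller than $n$, so we may assume $2^K\leq n$.
\item When the error term is at least $n$, any trivial answer, such as $n$, already meets the guarantee.
\item When the error term is below $n$, each component's brute force is polynomial in $n$, and the number of components is at most $n$.
\end{itemize}
In both cases the algorithm runs in polynomial time with the claimed additive error.
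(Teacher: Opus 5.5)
Your Steps 1 and 2 match the paper's argument. The paper uses the $(k+1)$-approximation for \SCD{} to get $\ell+p\leq K$. It then bounds the gap $\ell(1+p(2^p+1))$ from \Cref{lem:up-low-dif} by $K(1+K(2^K+1))$, using monotonicity in $\ell$ and $p$.

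The paper never says how $lb_D(G)=\sum_C lb(C)$ is computed in polynomial time, so your Step 3 fills an omission in its proof. The idea is sound. Each $G(C)$ has at most $2(\ell+p)+1$ vertices. Its set of winning configurations can be computed in time $2^{\mathcal{O}(\ell+p)}$ by the usual greatest-fixed-point pruning of the configuration graph, which is slightly more than the yes/no test stated in \Cref{obs:conf-equivalence}. Minimising $|\confT\cap V(C)|$ over that set gives $lb(C)$.

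One thing needs fixing. You phrase the case split in terms of $K$, i.e.\ of $\scd(G)$, which the algorithm cannot compute. Branch instead on the computed value $\ell+p$:
\begin{itemize}
\item If $2^{\ell+p}\leq n$, the brute force costs $n^{\mathcal{O}(1)}$ per component, and there are at most $n$ components.
\item If $2^{\ell+p}>n$, then $2^K\geq 2^{\ell+p}>n$. Hence $K(1+K(2^K+1))\geq 2^K>n\geq evc(G)$, and outputting $n$ is within the guarantee.
\end{itemize}
Do not branch on whether the computable error term $\ell(1+p(2^p+1))$ is below $n$. That quantity does not bound $2^{\ell}$, so the brute force could still be exponential.
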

\fi

\section{Conclusion}
We provide an \textsf{FPT}-algorithm for eternal
vertex cover parameterized by \CVD{} and an \textsf{XP}-algorithm (which can be
improved to an \textsf{FPT}-algorithm for the case of self-sufficient
components) for the problem parameterized by \SCD{}. Towards this aim, we
also obtain upper and lower bounds for EVC based on \CVD{} and \SCD{} that
turn out to be crucial to obtain our \textsf{FPT}-results and also allow us to
obtain polynomial-time approximation algorithms for EVC with an
additive error that only depends on \CVD{} and \SCD{}.
The main open
question from our analysis is whether EVC is \textsf{FPT} parameterized by \SCD{},
i.e., whether the condition on the components to be self-sufficient
can be removed. Apart from this, the study of EVC with respect to
structural parameters is still in its infancy (and turns out to be
very challenging due to the dynamic nature of the parameter) and there
therefore remain many open questions. For instance, what is the
parameterized complexity of EVC parameterized by any of the following
parameters: split vertex deletion, deletion to paths (even edge
deletion), feedback vertex/edge set, treedepth, and treewidth. All of these questions seem to be
highly challenging, but we hope that the insights obtained here will
be helpful to resolve these questions in the future.


\newpage
\bibliography{references}

\end{document}